\documentclass[11pt,english]{article}
\usepackage[T1]{fontenc}
\usepackage[latin9]{inputenc}
\usepackage{float}
\usepackage{textcomp}
\usepackage{amsmath}
\usepackage{amssymb}
\usepackage{graphicx}
\usepackage{algorithm}
\usepackage[noend]{algorithmic}
\usepackage{caption}
\usepackage{subcaption}

\makeatletter

\floatstyle{ruled}
\newfloat{algorithm}{tbp}{loa}
\providecommand{\algorithmname}{Algorithm}
\floatname{algorithm}{\protect\algorithmname}


\usepackage{latexsym}\@ifundefined{definecolor}
 {\usepackage{color}}{}
\usepackage{mathtools}\usepackage{amsthm}\usepackage{epsfig}\usepackage{algorithmic}\usepackage{wasysym}\usepackage{wrapfig}\usepackage{appendix}\usepackage{fancyhdr}\usepackage{enumerate}
\usepackage{times,euscript}

\usepackage{fullpage}

\def\reals{{\mathbb R}}
\def\eps{{\varepsilon}}

\def\bd{{\partial}}


\def\dFrechet{{\delta^{*}}} 
\def\dsoneFrechet{{\delta^{-}(P,Q)}} 
\def\dstwoFrechet{{\delta^{+}(P,Q)}} 
\def\scsFrechet{{\delta^{-}(P,f)}} 

\newtheorem{theorem}{Theorem}[section]
\newtheorem{lemma}[theorem]{Lemma}

\newtheorem{observation}[theorem]{Observation}

\newcommand{\MyFrame}[1]{\noindent \framebox[\textwidth]{ \begin{minipage}{0.97\textwidth} #1 \end{minipage}}}%
\newcommand{\Lim}[1]{\raisebox{0.5ex}{\scalebox{0.8}{$\displaystyle \lim_{#1}\;$}}}

\def\paragraph#1{\medskip\noindent\textbf{#1}}

\makeatother
\date{}
\usepackage{babel}
\begin{document}
\renewcommand\thepage{}
\begin{titlepage}
\title{The Discrete and Semicontinuous Fr\'echet Distance with Shortcuts \\via Approximate Distance Counting and Selection Techniques\thanks{Work by Omrit Filtser has been partially supported by the Lynn and William Frankel Center for Computer Sciences.
Work by Haim Kaplan has been supported by Israel Science Foundation grant no.\ 822/10 and 1841/14, and the
German-Israeli Foundation for Scientific Research and Development (GIF) grant no.\ 1161/2011, and the Israeli Centers of Research Excellence (I-CORE) program, (Center no.~4/11).
Work by Matya Katz has been partially supported by grant 1045/10 from the Israel Science Foundation, and by grant 2010074 from the United States -- Israel Binational Science Foundation.
Work by Micha Sharir has been supported
by Grant 892/13 from the Israel Science Foundation,
by the Israeli Centers of Research Excellence (I-CORE)
program (Center no.~4/11),
and by the Hermann Minkowski--MINERVA Center for Geometry at Tel Aviv
University.
Work by Micha Sharir and Rinat Ben Avraham has been supported
by Grant 2012/229 from the U.S.-Israeli Binational Science Foundation.
A preliminary version of this paper has appeared in \textit{Proc. 30th Annu. Sympos. Computational Geometry} (2014), 377.}}

\author{Rinat Ben Avraham%
\thanks{School of Computer Science, Tel Aviv University, Tel Aviv 69978, Israel;
\texttt{rinatba@gmail.com}%
} \and Omrit Filtser%
\thanks{Department of Computer Science, Ben-Gurion University, Beer-Sheva 84105, Israel; \texttt{omritna@post.bgu.ac.il}%
} \and Haim Kaplan%
\thanks{School of Computer Science, Tel Aviv University, Tel Aviv 69978, Israel;
\texttt{haimk@post.tau.ac.il}%
} \and Matthew J. Katz%
\thanks{Department of Computer Science, Ben-Gurion University, Beer-Sheva 84105, Israel; \texttt{matya@cs.bgu.ac.il}%
} \and Micha Sharir%
\thanks{School of Computer Science, Tel Aviv University, Tel~Aviv 69978,
Israel; \texttt{michas@post.tau.ac.il }%
} }

\maketitle

\vspace{-1cm}

\begin{abstract}
The \emph{Fr\'echet distance} is a well studied similarity measures between curves. The \emph{discrete Fr\'echet
distance} is an analogous similarity measure, defined for a sequence $P$ of $m$ points and a sequence $Q$ of $n$ points, where the points are usually sampled from input curves.
In this paper we consider a variant, called the \emph{discrete
Fr\'echet distance with shortcuts}, which captures the similarity
between (sampled) curves in the presence of outliers. For the
\emph{two-sided} case, where shortcuts are allowed in both curves,
we give an $O((m^{2/3}n^{2/3}+m+n)\log^3 (m+n))$-time algorithm for
computing this distance. When shortcuts are allowed only in one
noise-containing curve, we give an even faster randomized algorithm
that runs in $O((m+n)^{6/5+\eps})$ time in expectation and with high probability, for any $\eps>0$.
These time bounds are interesting since (i) the best bounds known
for the \emph{Fr\'echet distance} and the \emph{discrete Fr\'echet
distance} (without shortcuts) are  quadratic, or slightly subquadratic, despite
extensive research over many years, and (ii) the only known
algorithms for the \emph{continuous} Fr\'echet distance with
shortcuts are super-quadratic or give constant approximation.

Our techniques are novel and may find further applications. One of
the main new technical results is: Given two sets of points $P$ and
$Q$ and an interval $I$, we develop an algorithm that decides
whether the number of pairs $(x,y)\in P\times Q$ whose distance
${\rm dist}(x,y)$ is in $I$, is less than some given threshold $L$.
The running time of this algorithm decreases as $L$ increases. In
case there are more than $L$ pairs of points whose distance is in
$I$, we can get a small sample of pairs that contains a pair at
approximate median distance (i.e., we can approximately ``bisect''
$I$). We combine this procedure with additional ideas to search,
with a small overhead, for the optimal one-sided Fr\'echet distance
with shortcuts, exploiting the fact that this problem has a very fast decision procedure.  We also show
how to apply this technique for approximate distance selection
(with respect to rank), and for computing the semi-continuous Fr\'echet
distance with one-sided shortcuts. In general, the new technique can
apply to optimization problems for which the decision procedure is
very fast but standard techniques like parametric search makes the
optimization algorithm substantially slower.
\end{abstract}

\end{titlepage}
\renewcommand\thepage{\arabic{page}}
\section{Introduction}

\label{sec:introduction}


Consider a person and a dog connected by a leash, each walking along
a curve from its starting point to its end point. Both are allowed
to control their speed but they cannot backtrack. The {\em Fr\'echet
distance} between the two curves is the minimum length of a leash
that is sufficient for traversing both curves in this manner. The
\emph{discrete fr\'echet distance} replaces the curves by two
sequences of points $P = (p_0,\ldots, p_{m-1})$ and
$Q=(q_0,\ldots,q_{n-1})$, and replaces the person and the dog by two
frogs, the $P$-frog and the $Q$-frog, initially placed at $p_{0}$
and $q_{0}$, respectively. At each move, the $P$-frog or the
$Q$-frog (or both) jumps from its current point to the next. The
frogs are not allowed to backtrack. We are interested in the minimum
length of a ``leash'' that connects the frogs and allows the
$P$-frog and the $Q$-frog to get to $p_{m-1}$ and $q_{n-1}$, respectively.
More formally, for a given length $\delta$ of the leash, a
jump is allowed only if the distances between the two frogs before
and after the jump are both at most $\delta$; the \emph{discrete
Fr\'echet distance} between $P$ and $Q$, denoted by
$\delta_F^{*}(P,Q)$, is then the smallest $\delta>0$ for which there
exists a sequence of jumps that brings the frogs to $p_{m-1}$ and
$q_{n-1}$, respectively.

The Fr\'echet distance and the discrete Fr\'echet distance are used
as similarity measures between curves and sampled curves,
respectively, in many applications. Among these are speech
recognition~\cite{KHMTC98}, signature verification~\cite{MP99},
matching of time series in databases~\cite{KKS05}, map-matching of
vehicle tracking data ~\cite{BPSW05, CDGNW11, WSP06}, and analysis
of moving objects~\cite{BBG08, BBGLL08}.

In many of these applications the curves or the sampled sequences of
points are generated by physical sensors, such as GPS. These sensors
may generate inaccurate measurements, which we refer to as {\em outliers}.
The Fr\'echet distance and the discrete Fr\'echet distance are bottleneck
(min-max) measures, and are therefore sensitive to outliers, and
may fail to capture the similarity between the curves when there are
outliers, because the large distance from an outlier to the other curve
might determine the Fr\'echet distance, making it much larger than
the distance without the outliers.

In order to handle outliers, Driemel
and Har-Peled ~\cite{DH12} introduced the (continuous) Fr\'echet
distance with shortcuts. They considered polygonal curves and
allowed (only) the dog to take shortcuts by walking from a vertex $v$
to any succeeding vertex $w$ along the straight segment connecting $v$
and $w$. This ``one-sided'' variant allows to ``ignore'' subcurves of
one (noisy) curve which substantially deviate from the other (more
reliable) curve. They gave efficient approximation algorithms for the
Fr\'echet distance in such scenarios; these are reviewed in more
detail later on.

Driven by the same motivation of reducing sensitivity to outliers,
we define two variants of the discrete Fr\'echet distance with
shortcuts. In the one-sided variant, we allow the $P$-frog to jump to
any point that comes later in its sequence, rather than just to the
next point. The $Q$ frog has to visit all the $Q$ points in order, as in the standard discrete Fr\'echet distance problem. However, we add the restriction that only a single frog
is allowed to jump in each move (see below for more details).
As in the standard discrete Fr\'echet distance, such a jump is
allowed only if the distances between the two frogs before and
after the jump are both at most $\delta$. The \emph{one-sided
discrete Fr\'echet distance with shortcuts}, denoted as
$\dsoneFrechet$, is the smallest $\delta>0$ for which
there exists such a sequence of jumps that brings the frogs to
$p_{m-1}$ and $q_{n-1}$, respectively. We also define the
{\em two-sided discrete Fr\'echet distance with shortcuts},
denoted as $\dstwoFrechet$, to be the smallest $\delta>0$ for
which there exists a sequence of jumps, where both frogs are
allowed to skip points as long as the distances between the two frogs before and
after the jump are both at most $\delta$. Here too, we allow only one of the frogs
to jump at each move.

In the (standard) discrete Fr\'echet distance, the frogs can make
simultaneous jumps, each to its next point. In contrast, when
allowing shortcuts, we forbid the frogs from making such simultaneous
jumps. This forces the frog making the jump (standard or shortcut)
to stay close to the other frog while making the move. In a sense this
restriction is the discrete analogue of the requirement in the
continuous case, that the dog, when walking on its shortcut segment,
stays close to the person on the other curve, who does not move
during the shortcut. In the two sided case simultaneous jumps
make the problem degenerate as it is possible for the frogs to jump
from $p_{0}$ and $q_{0}$ straight to $p_{m-1}$ and $q_{n-1}$.

\paragraph{Our results.} In this paper we give efficient algorithms for
computing the discrete Fr\'echet distance with one-sided and two-sided
shortcuts.
The
structure of the  one-sided problem allows to decide whether the distance is
no larger than a given $\delta$ in $O(m+n)$ time, and the challenge
is to search for the optimum, using this fast decision procedure,
with the smallest possible overhead. The naive approach would be to
use the $O((m^{2/3}n^{2/3}+m+n)\log (m+n))$-time distance selection
procedure of \cite{KS97}, which would make the running time
$\Omega((m^{2/3}n^{2/3}+m+n)\log (m+n))$, much higher than the
linear cost of the decision procedure.

To tighten this gap, we develop two algorithms. The first algorithm
finds an interval $(\alpha, \beta]$ that contains $\dsoneFrechet$ and, with high probability, contains only $O(L)$ additional critical distances, for a given parameter $1 \le L \le m+n$.
This algorithm runs in
$O((m+n)^{4/3+\eps}/L^{1/3})$ time, in expectation and with high probability, for any
$\eps>0$.
The second algorithm
searches for $\dsoneFrechet$ in $(\alpha, \beta]$ by simulating the decision procedure in an efficient manner. Here, we use the fact that, as a result of the first algorithm, the simulation encounters only $O(L)$ critical distances with high probability. This algorithm is deterministic and runs in $O((m+n)L^{1/2}\log(m+n))$ time.
As $L$ increases the first
algorithm becomes faster and the second algorithm becomes slower. Choosing
$L$ to balance the two gives us an algorithm for the one-sided
Fr\'echet distance with shortcuts that runs in $O((m+n)^{6/5+\eps})$
time in expectation and with high probability, for any $\eps>0$.

We believe that these algorithms are of independent interest, beyond
the scope of computing the one-sided Fr\'echet distance with shortcuts,
and that they may be applicable to other optimization problems over
pairwise distances.  We give two such additional applications.
The first application is of the first algorithm and it is a rank-based approximation of the $k$th
smallest distance. More specifically,
let $k$ and $L$  be such that $0<k<mn$ and $\sqrt{k}\le L\le k$. We give an
algorithm for finding a distance which is the $\kappa$-th
smallest distance, for some rank $\kappa$ satisfying $k-L \le \kappa \le k+L$, that runs in
$O\left(\frac{mnk}{L^2}\log (m+n) + m +n\right)$ time. If $L^2/k\le m+n$ we can also find such a pair in
$O((m+n)^{4/3+\eps}k^{1/3}/L^{2/3})$ time for any $\eps>0$. This time bound holds in expectation and with high probability.

Our second application is a semi-continuous version of the one-sided
Fr\'echet distance with shortcuts. In this problem $P$ is a sequence
of $m$ points and $f\subseteq\mathbb{R}^{2}$ is a polygonal curve of
$n$ edges. A frog has to jump over the points in $P$, connected by a
leash to a person who walks on $f$. The frog can make shortcuts and
skip points, but the person must traverse $f$ continuously. The frog and the person cannot backtrack. We want to
compute the minimum length of a leash that allows the
frog and the person to get to their final positions in such
a scenario. In Section~\ref{sec:semi_continuous} we present an
algorithm, that runs in time $O((m+n)^{2/3}m^{2/3}n^{1/3}\log(m+n))$ in expectation and with high probability, for this problem. While less efficient than the fully discrete
version, it is still significantly subquadratic.

For the two-sided version we take a different approach. More specifically, we use an implicit compact
representation of all pairs in $P\times Q$ at distance at most
$\delta$ as the disjoint union of complete bipartite cliques \cite{KS97}.
This representation allows us to maintain the pairs reachable by the
frogs with a leash of length at most $\delta$ implicitly and
efficiently. Our algorithm runs in $O((m^{2/3}n^{2/3}+m+n)\log^3
(m+n))$ time and requires $O((m^{2/3}n^{2/3}+m+n)\log (m+n))$
space.

Interestingly, the algorithms developed for these variants of the discrete Fr\'echet distance problem are sublinear in the size of $P\times Q$
and way below the slightly subquadratic bound for the discrete
Fr\'echet distance, obtained in \cite{ABKS12}.

\paragraph{Background.} The Fr\'echet distance and its variants have been
extensively studied in the past two decades.
Alt and Godau~\cite{AG95} showed that the Fr\'echet distance of two
planar polygonal curves with a total of $n$ edges can be computed,
using dynamic programming, in $O(n^{2}\log n)$ time. Eiter and
Mannila~\cite{EM94} showed that the discrete Fr\'echet distance in
the plane can be computed, also using dynamic programming, in
$O(mn)$ time. Buchin et al.~\cite{BBMM12} recently
improved the bound of Alt and Godau and showed how to compute the
Fr\'echet distance in $O(n^2 (\log n)^{1/2} (\log\log n)^{3/2})$
time on a pointer machine, and in $O(n^2 (\log\log n)^2)$ time
on a word RAM \cite{BBMM12}. Agarwal et al.~\cite{ABKS12} showed
how to compute the discrete Fr\'echet distance in
$O\left(\dfrac{nm\log\log n}{\log n}\right)$ time.

As already noted, the (one-sided) continuous Fr\'echet distance
with shortcuts was first studied by Driemel and Har-Peled~\cite{DH12}.
They considered the problem where shortcuts are allowed only between
\emph{vertices} of the noise-containing curve, in the manner outlined
above, and gave approximation algorithms for
solving two variants of this problem. In the first variant,  any
number of shortcuts is allowed, and in the second variant, the
number of allowed shortcuts is at most $k$, for some $k \in
\mathbb{N}$. Their algorithms work efficiently only
for \emph{$c$-packed} polygonal curves; these are curves that
behave ``nicely'' and are assumed to be the input in practice.
Both algorithms compute a
$(3+\varepsilon)$-approximation of the Fr\'echet distance with
shortcuts between two $c$-packed polygonal curves and both run in
near-linear time (ignoring the dependence on $\varepsilon$).
Buchin et al.~\cite{BDS13} consider a more general version of the
(one-sided) continuous Fr\'echet distance with shortcuts, where
shortcuts are allowed between any pair of points of the noise-containing
curve. They show that this problem is NP-Hard. They also give a 3-approximation algorithm for the decision version of this problem
that runs in $O(n^3 \log n)$ time.

We also note that there have been several other works that treat outliers
in different ways. One such result is of Buchin et al.~\cite{BBW09}, who considered the partial Fr\'echet similarity problem. In this problem, given two curves $f$ and $g$, and a distance threshold $\delta$, the goal is to maximize the total length of the portions of
$f$ and $g$ that are matched (using the Fr\'echet distance) with $L_p$ distance smaller than $\delta$.
They gave an algorithm that solves this problem in $O(mn(m + n) \log(mn))$ time, under the $L_1$ or $L_\infty$ norm.
Practical implementations of Fr\'echet distance algorithms, that are made for experiments on real data in map matching applications, remove outliers from the data set ~\cite{CDGNW11,WSP06}.
In another map matching application, Brakatsoulas et
al.~\cite{BPSW05} define the notion of integral Fr\'echet distance to deal with outliers. This distance measure averages over
certain distances instead of taking the maximum.

\section{Preliminaries}
We now give a formal definition of the discrete Fr\'echet distance and its variants.

Let $P=(p_{0},\ldots,p_{m-1})$ and $Q=(q_{0},\ldots,q_{n-1})$ be two
sequences of $m$ and $n$ points, respectively, in the plane. Let $G(V,E)$ denote a graph whose vertex set is $V$ and edge set is $E$, and let $\|\cdot\|$ denote the Euclidean norm.
Fix a distance $\delta>0$, and define the following three directed graphs $G_\delta=G(P\times Q, E_{\delta})$, $G_\delta^-=G(P\times Q, E_{\delta}^-)$, and $G_\delta^+=G(P\times Q, E_{\delta}^+)$, where
\begin{align*}
E_{\delta}= & \left\{ \Bigr((p_{i},q_{j}),(p_{i+1},q_{j})\Bigl)\:\middle|\,\|p_{i}-q_{j}\|,\;\|p_{i+1}-q_{j}\|\le\delta\right\} \bigcup\\
 & \left\{ \Bigr((p_{i},q_{j}),(p_{i},q_{j+1})\Bigl)\:\middle|\,\|p_{i}-q_{j}\|,\;\|p_{i}-q_{j+1}\|\le\delta\right\},\\
E_{\delta}^{-}= & \left\{ \Bigr((p_{i},q_{j}),(p_{k},q_{j})\Bigl)\:\middle|\, k>i,\,\|p_{i}-q_{j}\|,\;\|p_{k}-q_{j}\|\le\delta\right\} \bigcup\\
 & \left\{ \Bigr((p_{i},q_{j}),(p_{i},q_{j+1})\Bigl)\:\middle|\,\|p_{i}-q_{j}\|,\;\|p_{i}-q_{j+1}\|\le\delta\right\}, \\
E_{\delta}^{+}= & \left\{ \Bigr((p_{i},q_{j}),(p_{k},q_{j})\Bigl)\:\middle|\, k>i,\,\|p_{i}-q_{j}\|,\;\|p_{k}-q_{j}\|\le\delta\right\} \bigcup\\
 & \left\{ \Bigr((p_{i},q_{j}),(p_{i},q_{l})\Bigl)\:\middle|\, l>j,\,\|p_{i}-q_{j}\|,\;\|p_{i}-q_{l}\|\le\delta\right\}.
\end{align*}
For each of these graphs we say
that a position $(p_{i},q_{j})$ is a \emph{reachable}
position if $(p_{i},q_{j})$ is reachable from $(p_{0},q_{0})$ in the respective graph.
Then the discrete Fr\'echet distance (DFD for short) $\dFrechet(P,Q)$ is the smallest $\delta>0$
for which $(p_{m-1},q_{n-1})$ is a reachable position in $G_\delta$.
Similarly, the one-sided Fr\'echet distance with shortcuts (one-sided DFDS for short) $\dsoneFrechet$ is the smallest $\delta>0$
for which $(p_{m-1},q_{n-1})$ is a reachable position in $G_\delta^-$.
Finally, the two-sided Fr\'echet distance with shortcuts (two-sided DFDS for short) $\dstwoFrechet$ is the smallest $\delta>0$
for which $(p_{m-1},q_{n-1})$ is a reachable position in $G_\delta^+$.

\section{Decision procedure for the one-sided DFDS}

\label{sec:DFDS1} We first consider the corresponding decision problem.
That is, given a value $\delta>0$ we wish to decide whether $\dsoneFrechet\le\delta$.

Let $M$ be the matrix whose rows correspond to the elements of $P$ and
whose columns correspond to the elements of $Q$ and $M_{i,j}=1$ if
$\|p_{i}-q_{j}\|\le\delta$, and $M_{i,j}=0$ otherwise. Consider
first the DFD variant (no shortcuts allowed), in which,
at each move, exactly one of the frogs has to jump to the next
point. Suppose that $(p_{i},q_{j})$ is a reachable position of the
frogs. Then, necessarily, $M_{i,j}=1$. If $M_{i+1,j}=1$ then the
next move can be an {\em upward move} in which the $P$-frog moves
from $p_{i}$ to $p_{i+1}$, and if $M_{i,j + 1} = 1$ then the next
move can be a {\em right move} in which the $Q$-frog moves from
$q_{j}$ to $q_{j+1}$. It follows that to  determine whether
$\dFrechet\le\delta$, we need to determine whether there is
a \emph{right-upward staircase} of ones in $M$ that starts at
$M_{0,0}$, ends at $M_{m-1,n-1}$, and consists of a sequence of
interweaving upward moves and right moves (see
Figure~\ref{fig:staircase}(a)).

\begin{figure}[htp]
\centering

\begin{tabular}{ccc}
\includegraphics[scale=0.8]{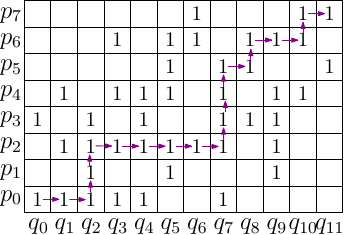} & \hspace{0.5cm}
\includegraphics[scale=0.8]{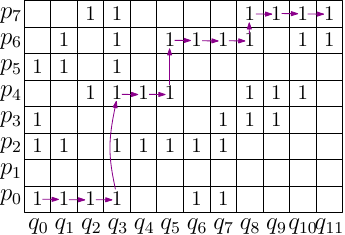} & \hspace{0.5cm}
\includegraphics[scale=0.8]{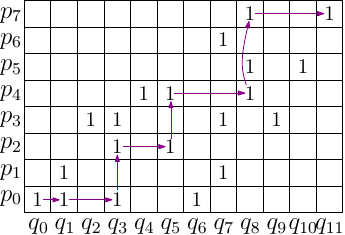}\\
(a) & \hspace{0.5cm}(b) & \hspace{0.5cm}(c)
\end{tabular}

\centering \caption{\small (a) A right-upward staircase (for DFD with no simultaneous jumps). (b) A semi-sparse
staircase (for the one-sided DFDS). (c) A sparse
staircase (for the two-sided DFDS).}
\label{fig:staircase}
\end{figure}

In the
one-sided version of DFDS, given a
reachable position $(p_{i},q_{j})$ of the frogs, the $P$-frog can
move to any point $p_{k},k>i$, for which $M_{k,j}=1$; this is a
\emph{skipping upward move} in $M$ which starts at $M_{i,j}=1$,
skips over $M_{i+1,j},\ldots,M_{k-1,j}$ (some of which may be 0),
and reaches $M_{k,j}=1$. However, in this variant, as in the DFD variant, the $Q$-frog
can only make a \emph{right move}  from $q_{j}$ to $q_{j+1}$,
provided that $M_{i,j+1}=1$ (otherwise no move of the $Q$-frog is
possible at this position). Determining whether
$\dsoneFrechet\le\delta$ corresponds to deciding whether there
is a \emph{semi-sparse staircase} of ones in $M$ that starts at
$M_{0,0}$, ends at $M_{m-1,n-1}$, and consists of an interweaving
sequence of skipping upward moves and (consecutive) right moves (see
Figure~\ref{fig:staircase}(b)).

Assume that $M_{0,0}=1$ and $M_{m-1,n-1}=1$; otherwise, we can immediately
conclude that $\dsoneFrechet>\delta$ and terminate the decision procedure. From now on, whenever we
refer to a semi-sparse staircase, we mean a semi-sparse staircase
of ones in $M$ starting at $M_{0,0}$, as defined above, but without the requirement that it ends at $M_{m-1,n-1}$.

\begin{figure}[htbp]
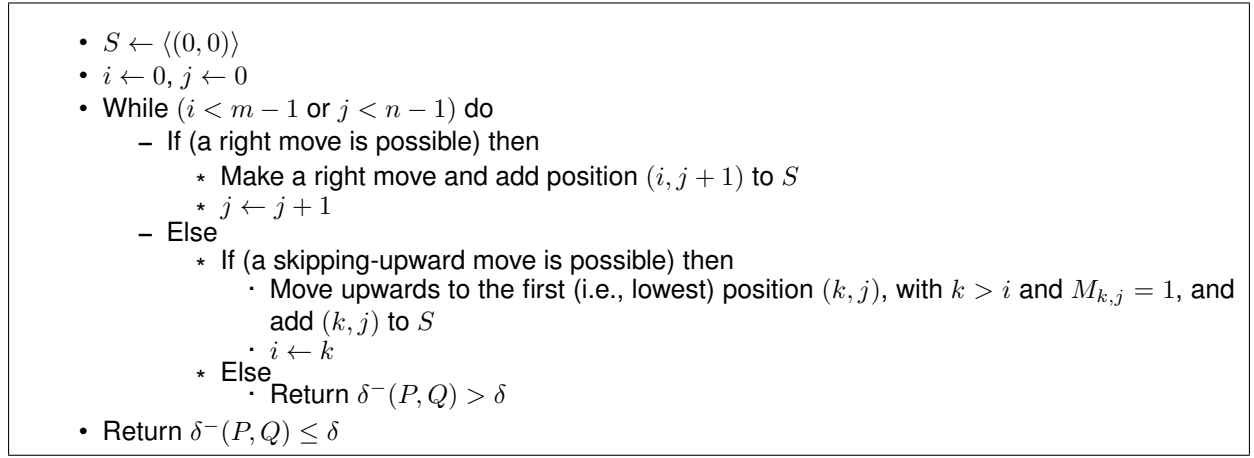

\MyFrame{
\smallskip
\begin{itemize}
{\small \sf
\item $S \leftarrow \langle (0,0)\rangle$ \vspace{-0.3cm}
\item $i \leftarrow 0$, $j \leftarrow 0$ \vspace{-0.3cm}
\item While $(i < m-1$ or $j < n-1)$ do \vspace{-0.3cm}
    \begin{itemize}
    \item {\small \sf If (a right move is possible) then } \vspace{-0.1cm}
        \begin{itemize}
                             \item Make a right move and add position $(i,j+1)$ to $S$ \vspace{-0.1cm}
                             \item $j \leftarrow j+1$ \vspace{-0.2cm}
        \end{itemize}
        \item  Else \vspace{-0.2cm}
        \begin{itemize}
        \item If (a skipping-upward move is possible) then \vspace{-0.1cm}
                            \begin{itemize}
                                   \item Move upwards to the first (i.e., lowest) position $(k,j)$, with $k>i$ and $M_{k,j}=1$, and add $(k,j)$ to $S$ \vspace{-0.1cm}
                             \item $i\leftarrow k$ \vspace{-0.2cm}
                            \end{itemize}

        \item Else \vspace{-0.2cm}
                \begin{itemize}
                     \item Return $\dsoneFrechet>\delta$ \vspace{-0.3cm}
                \end{itemize}
\end{itemize}
\end{itemize}
        \item Return $\dsoneFrechet\le\delta$
}
\end{itemize}
}\caption{Decision procedure for the one-sided discrete Fr\'echet distance with shortcuts.} \label{alg:semi-sparse}
\end{figure}

The algorithm of Figure~\ref{alg:semi-sparse}, that implements the
decision procedure, constructs an upward-skipping path $S$ by always
making a right move if possible. If a right-move is not possible the algorithm makes an upward-skipping move (if possible). The correctness of the decision
procedure is established by the following lemma.

\begin{lemma} \label{lem:one-sided}
If there exists an upward-skipping path that ends at
$(m-1,n-1)$, then $S$ also ends at $(m-1,n-1)$. Hence $S$ ends at $(m-1,n-1)$ if and only if $\dsoneFrechet \le \delta$.
\end{lemma}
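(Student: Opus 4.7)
The ``only if'' direction is immediate, since any output $S$ of the greedy algorithm is itself a semi-sparse staircase. For the ``if'' direction, my plan is to track, for each column $j \in \{1,\dots,n\}$, the set $R_j \subseteq \{1,\dots,m\}$ of row indices $i$ such that $M_{i,j}$ is reachable by some semi-sparse staircase starting at $M_{1,1}$. The first step is the observation that $R_j$ is upward-closed inside $\{i : M_{i,j}=1\}$: once a row $i \in R_j$ has been reached, a single skipping upward move can land the frog at any row $k>i$ with $M_{k,j}=1$. Hence $R_j = \{i \ge r_j : M_{i,j}=1\}$, where $r_j := \min R_j$ (with the convention $r_j=+\infty$ if $R_j=\emptyset$). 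Since every arrival in column $j+1$ factors as a right move from some row $k \in R_j$ with $M_{k,j+1}=1$, followed by further skipping upward moves in column $j+1$, this gives the recursion
\[
r_{j+1} \;=\; \min\bigl\{k \ge r_j \;:\; M_{k,j}=1 \text{ and } M_{k,j+1}=1\bigr\}.
\]

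I would then prove by induction on $j$ that the greedy staircase $S$ first enters column $j$ at row $r_j$. The base case $j=1$ is immediate, because $M_{1,1}=1$ gives $r_1=1$ and $S$ starts at $M_{1,1}$. For the inductive step, assume $S$ enters column $j$ at row $r_j$. The algorithm alternates between attempting a right move (taken whenever available) and, failing that, skipping up to the next $1$ in column $j$; therefore the row at which $S$ first moves from column $j$ to column $j+1$ is exactly the smallest $k \ge r_j$ with $M_{k,j}=1$ and $M_{k,j+1}=1$, which equals $r_{j+1}$ by the recursion above. If the greedy procedure gets stuck in column $j$ (no such $k$ exists), the same formula shows $R_{j+1}=\emptyset$, so no semi-sparse staircase can reach column $j+1$ either, and the ``if'' claim is vacuous on that execution.

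Finally, if $\delta_F^-(A,B)\le\delta$, some semi-sparse staircase ends at $M_{m,n}$, so $m \in R_n$ and $r_n \le m$. Greedy enters column $n$ at row $r_n$; since $M_{m,n}=1$, repeated skipping upward moves within column $n$ drive $S$ up to row $m$, at which point the while-loop terminates with $S$ ending at $M_{m,n}$. I expect the only real subtlety to be identifying the correct invariant: the upward-closure of $R_j$ makes the single number $r_j$ a complete summary of column $j$'s reachability, and the greedy rule of \emph{right-before-up} is exactly what is needed so that a single bottom-up scan of column $j$ starting at $r_j$ computes the next value $r_{j+1}$ correctly.
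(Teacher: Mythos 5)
Your proof is correct and follows essentially the same route as the paper: an induction over the columns showing that the greedy staircase enters each column at the lowest reachable row, hence weakly dominates any other semi-sparse staircase and can finish by skipping upward in the last column. The only difference is cosmetic---you package the invariant as an exact recursion for the per-column minimum reachable row $r_j$, while the paper states it as a dominance inequality against a fixed witness staircase $S'$; both rest on the same observations that right moves are consecutive (so a successful staircase meets every column) and that upward reachability within a column is determined by its lowest reached $1$-entry.
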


\begin{proof}
Let $S'$ be an upward-skipping path that ends at $(m-1,n-1)$. We think
of $S'$ as the sequence of its positions (necessarily $1$-entries) in
$M$. Note that $S'$ has at least one position in each column of $M$,
since skipping is not allowed when moving rightwards. We claim that
for each position $(k,j)$ in $S'$, there exists a position
$(i,j)$ in $S$, such that $i \le k$. This, in particular, implies
that $S$ reaches the last column, and thereby, by the definition of the decision procedure to $(m-1,n-1)$.

We prove the claim by induction on $j$. It clearly holds for
$j=0$ as both $S$ and $S'$ start at $(0,0)$. We assume then
that the claim holds for $j=\ell-1$, and establish it for $\ell$.
That is, assume that if $S'$ contains an entry $(k,\ell-1)$, then $S$
contains $(i,\ell-1)$ for some $i\le k$. Let $(k',\ell)$ be
the lowest position of $S'$ in column $\ell$; clearly, $k'\ge k$.
We must have $M_{k',\ell-1} = 1$  (as (i) by assumption, $S'$ has reached $(k',\ell)$ from the previous column, and (ii) the only way to move from a column to
the next one is by a right move). By the definition of the decision procedure $S$ is extended by a sequence (which may be empty if $M_{i,\ell} = 1$) of skipping
upward moves in column $\ell-1$ until reaching the lowest index $i'\geq i$, for which $M_{i',\ell-1}=1$ and $M_{i',\ell}$ is 1. (This is the lowest instance in which $S$ can be extended by a right move.) But since $M_{k',\ell-1} = 1$ and $M_{k',\ell}=1$, and $i< k'$, we get that $i' \le k'$, as required. (Note that the existence of $k'$ implies that $i'$ is well defined.)
\end{proof}

It is easy to verify that a straightforward implementation of the decision procedure runs in $O(m+n)$ time.

\section{One-sided DFDS optimization via approximate distance counting and selection}
\label{sec:tec}

We now show how to use the decision procedure of Figure~\ref{alg:semi-sparse} to solve the optimization problem of the one-sided discrete Fr\'echet distance with shortcuts.

First note that if we increase $\delta$
continuously, the set of $1$-entries of $M$ can only grow, and
this happens when $\delta$ is a distance between a point of $P$ and a
point of $Q$. Performing a binary search over the $O(mn)$ distances between pairs of points in
$P\times Q$ can be done using the distance selection
algorithm of~\cite{KS97}. This will be the method of choice for the two-sided DFDS problem, treated in Section~\ref{sec:two_sided}. Here however, this procedure, which takes
$O(m^{2/3}n^{2/3}\log^3(m+n))$ time, is rather expensive when compared
to the linear cost of the decision procedure. While solving the
optimization problem in close to linear time is still a challenging
open problem, we improve the running time considerably, using randomization, to
$O((m+n)^{6/5+\eps})$ in expectation, for any $\eps>0$.

Our algorithm is based on two independent building blocks:

\paragraph{Algorithm~\ref{sec:tec}.1}
An algorithm that finds an interval $(\alpha, \beta]$ that contains $\dsoneFrechet$ and, with high probability, contains only $O(L)$ additional critical distances, for a given parameter $1 \le L = L(m,n) \le m+n$ that we fix shortly. This algorithm runs in
$O((m+n)^{4/3+\eps}/L^{1/3})$ time, in expectation and with high probability, for any
$\eps>0$.

\paragraph{Algorithm~\ref{sec:tec}.2}
An algorithm that searches for $\dsoneFrechet$ in $(\alpha, \beta]$ by simulating the decision procedure in an efficient manner. At this stage, we use the fact that the simulation encounters only $O(L)$ critical distances (with high probability, as a consequence of Algorithm~\ref{sec:tec}.1). This algorithm is deterministic and runs in $O((m+n)L^{1/2}\log(m+n))$ time.

To balance the running times of Algorithms~\ref{sec:tec}.1 and~\ref{sec:tec}.2,
we choose $L = (m + n)^{2/5+\eps}$, for another, but still arbitrarily small $\eps > 0$. Then, combining the two algorithms results in an overall optimization algorithm that runs in
$O((m+n)^{6/5+\eps})$ time, in expectation and with high probability, as further elaborated in Section~\ref{sec:overall_shortcuts}.

We describe Algorithm~\ref{sec:tec}.1 in Section~\ref{sec:finding_interval}, describe Algorithm~\ref{sec:tec}.2 in Section~\ref{sec:searching_in_interval}, and combine the algorithms in Section~\ref{sec:overall_shortcuts}. In Section~\ref{sec:correctness}, we prove the correctness and analyze the running times of the algorithms.

We believe that Algorithm~\ref{sec:tec}.1 is of independent interest, and we give another application of it to a different distance-related problem in Section~\ref{sec:cor:aprox_selection}. Independently, we use the same technique for the semicontinuous Fr\'echet distance with one-sided shortcuts, in Section~\ref{sec:semi_continuous}.

\subsection{Algorithm~\ref{sec:tec}.1: Finding an interval that contains $O(L)$ critical distances}\label{sec:finding_interval}
The goal of Algorithm~\ref{sec:tec}.1 is to find an interval $(\alpha,
\beta]$ that contains $\dsoneFrechet$, and $O(L)$ additional distances between pairs of
$P\times Q$. As already noted, we achieve this goal only with high probability.

We start with $(\alpha,\beta]=(0,\infty)$, and iteratively shrink $(\alpha,\beta]$ until it contains $O(L)$ critical distances with high probability. Each iteration consists of three stages.

\paragraph{Stage I.}

We construct, as described below, a batched range counting data structure $\Gamma_L(P,Q,\alpha,\beta)$
for representing some of the pairs $(p,q)\in P \times Q$, as the edge-disjoint union of bipartite
cliques $\{P_{t}\times Q_{t} \mid P_{t}\subseteq P, \; Q_{t}\subseteq
Q\}$.
$\Gamma_L(P,Q,\alpha,\beta)$ consists of two sub-collections of bipartite cliques, $\Gamma_L^1(P,Q,\alpha,\beta)$ and $\Gamma_L^2(P,Q,\alpha,\beta)$.

$\Gamma_L^1(P,Q,\alpha,\beta)$ is a collection of edge-disjoint bipartite cliques, such that if $(p_i,q_j)\in P_{t}\times Q_{t}\in
\Gamma_L^1(P,Q,\alpha,\beta)$ then $|p_i-q_j|\in (\alpha,\beta]$.

$\Gamma_L^2(P,Q,\alpha,\beta)$ is a collection of bipartite cliques that record additional pairs of $P \times Q$. We do not know whether these pairs are in $(\alpha,\beta]$, but we know that all the pairs of $P \times Q$ that are in $(\alpha,\beta]$ are recorded either in $\Gamma_L^1(P,Q,\alpha,\beta)$ or in $\Gamma_L^2(P,Q,\alpha,\beta)$.

$\Gamma_L(P,Q,\alpha,\beta)$ is constructed as follows.
Let $C$ denote the collection
of the circles bounding the  $(\alpha,\beta)$-annuli that are
centered at the points of $P$ (that is, each annulus has inner radius $\alpha$ and outer radius $\beta$). We choose a sufficiently large
constant parameter $1 \leq r \leq m$, and construct a
$(1/r)$-cutting for $C$. That is, for a suitable absolute constant $c$, we
partition the plane into $k\leq cr^2$ cells
$\Delta_1,\ldots,\Delta_k$, each of constant description complexity,
so that each $\Delta_i$ is crossed by at most $m/r$ boundaries of
the annuli, and each $\Delta_i$ contains at most $n/r^2$ points of
$Q$. This can be done deterministically in $O((m+n)r)$ time for any $1\leq r \leq m+n$, as
in~\cite{Cha93,CF90,Mat91}.\footnote{The construction
in~\cite{Cha93,CF90,Mat91} shows that each $\Delta_i$ is crossed by
at most $m/r$ circles in $C$. To ensure that each
$\Delta_i$ contains at most $n/r^2$ points of $Q$, we duplicate each
$\Delta_i$ that contains more than $n/r^2$ points as many times as
needed, and assign to each copy a subset of at most $n/r^2$ of the
points (these sets are pairwise disjoint and cover all the points in the cell).
Then each cell of the resulting subdivision contains at most
$n/r^2$ points, and the size of the cutting is still $O(r^2)$.}
This step captures some of the distances in $(\alpha,\beta]$ --- those between the set $P_{\Delta_i}^C$ of points of $P$ whose annuli fully contain some cell $\Delta_i$ and the set $Q_{\Delta_i}$ of
points of $Q$ contained in $\Delta_i$, for $i=1,\ldots,k$.

However, the number of points of $P$ ($m/r$ points) and the number of points of $Q$ ($n/r^2$ points) that are involved in a cell of the cutting is not balanced.
To balance these numbers, we
now dualize the roles of $P$ and $Q$, in each cell $\Delta_i$
separately, where the set $Q_{\Delta_i}$ of the at most $n/r^2$
points of $Q$ in $\Delta_i$ becomes a set of $(\alpha,\beta)$-annuli
centered at these points, and the set $P_{\Delta_i}$ of the at most
$m/r$ points of $P$ whose annuli boundaries cross $\Delta_i$ is now
regarded as a set of points. We now construct, for each $\Delta_i$,
a $(1/r)$-cutting in this dual setting. We obtain a total of at most
$c^2r^4$ subproblems, each involving at most $m/r^3$ points of $P$
and at most $n/r^3$ points of $Q$.

We output a collection
of complete bipartite graphs, one for each cell either of the primal cutting or of the multiple dual cuttings. For each primal cell $\Delta_i$ we add $P_{\Delta_i}^C\times Q_{\Delta_i}$ to $\Gamma_L^1(P,Q,\alpha,\beta)$, and for each cell $\tau_j$ of a dual cutting associated with some primal cell $\Delta_i$, we add $Q_{\tau_j}^C \times P_{\tau_j}$ to $\Gamma_L^1(P,Q,\alpha,\beta)$, where $Q_{\tau_j}^C$ is the subset of the points of $Q_{\Delta_i}$ whose annuli boundaries contain $\tau_j$, and $P_{\tau_j}$ is the subset of points of $P_{\Delta_i}$ that are contained in $\tau_j$. Note that every containment of a point $q$ of $Q$ in an $(\alpha,\beta)$-annulus centered at a point of $P$ is either stored in (exactly) one of the above complete bipartite graphs, or appears in one of the at most $c^2r^4$ subproblems.

This does not complete the algorithm, and we need to recurse within the cells to produce additional complete bipartite graphs for the desired output.
As just noted, the distances of pairs in $P\times Q$ that lie in $(\alpha,\beta]$ and are not captured by the collection of graphs already in $\Gamma_L^1(P,Q,\alpha,\beta)$ are the distances between centers of annuli
whose boundaries cross some cell $\tau_i$ and points in $\tau_i$ that
lie inside these annuli, over all cells $\tau_i$ of all the dual cuttings. To capture (some of) these distances, we process each of the $O(r^4)$ subproblems recursively (with a primal and dual stages), using the
same parameter $r$. We keep doing so until we get subproblems of size
at most $L$ (in terms of the number of $P$-points plus the number of $Q$-points) and then stop the recursion. At each level of the recursion we add to $\Gamma_L^1(P,Q,\alpha,\beta)$ a collection
of complete bipartite graphs, one for each cell of either the primal or the dual cuttings.
As before, the sets of vertices of the graph associated with a primal or dual cell $\Delta_i$ are the
set of points (of either $P$ or $Q$) whose annuli fully contain $\Delta_i$ and the set of
points (of the other set) contained in $\Delta_i$.

Since we stopped when the size of each subproblem is at most $L$ and did not continue the recursion all the way to problems of constant size, there are distances in $(\alpha,\beta]$ that we did not capture --- those between centers of annuli
whose boundaries cross the cells at the bottom of the recursion and points in those cells that
lie inside these annuli. For each such cell $\Delta$ we add $P_\Delta\times Q_\Delta$ to $\Gamma_L^2(P,Q,\alpha,\beta)$, where $P_\Delta$ are the points of $P$ which are in $\Delta$ and $Q_\Delta$ are the points of $Q$ whose annuli intersect $\Delta$.
Except for these pairs, for which we do not know whether their distances are in $(\alpha,\beta]$, all other pairs with distance in this range are accounted for in the graphs of $\Gamma_L^1(P,Q,\alpha,\beta)$.

This terminates Stage I. As mentioned, $\Gamma_L^1(P,Q,\alpha,\beta)$ and $\Gamma_L^2(P,Q,\alpha,\beta)$ together form the data structure $\Gamma_L(P,Q,\alpha,\beta)$.

In Lemma~\ref{lem:cutting} we prove the following.
The total size of the vertex sets of the bipartite cliques of $\Gamma_L^1(P,Q,\alpha,\beta)$ and $\Gamma_L^2(P,Q,\alpha,\beta)$ is $O((m+n)^{4/3+\eps}/L^{1/3})$, for any $\eps>0$ (the prescribed $\eps$ dictates the choice of $r$). The total number
of pairs in $\Gamma_L^2(P,Q,\alpha,\beta)$ is
$O((m+n)^{4/3+\eps}L^{2/3})$.
$\Gamma_L(P,Q,\alpha,\beta)$ is constructed in overall $O((m+n)^{4/3+\eps}/L^{1/3})$ time. (Note that since $L \le m+n$, $(m+n)^{4/3+\eps}/L^{1/3} = \Omega(m+n)$.)

\paragraph{Stage II.}

Let $S_1\subseteq P\times Q$ (resp., $S_2\subseteq P\times Q$) denote the set of pairs of points corresponding to edges of the bipartite cliques in $\Gamma_L^1(P,Q,\alpha,\beta)$ (resp., $\Gamma_L^2(P,Q,\alpha,\beta)$). Let $S_2'$ denote
 the subset of the pairs $(p,q)$ of $S_2$ such that $|p-q| \in (\alpha,\beta]$.

We determine how many pairs of points are in $S_1$, by counting the number of edges in $\Gamma_L^1(P,Q,\alpha,\beta)$. By construction for every $(p,q)\in S_1$, $|p-q|\in (\alpha,\beta]$.

Our next step aims to approximate how many of the distances between pairs in $S_2$ are in
$(\alpha,\beta]$; i.e., how many pairs of $S_2$ are in $S_2'$. A brute-force counting is too expensive, and we use the following more efficient approach.

We sample a set $R_2=\{(p^1,q^1),(p^2,q^2),\ldots, (p^y,q^y)\}$ of $y=c_2(|S_2|/L)\log (m+n)$ pairs from $S_2$ uniformly at random, for some sufficiently large constant $c_2>0$.  It is straightforward to generate such a sample by picking a pair uniformly from a random bipartite clique $X\in \Gamma_L^2(P,Q, \alpha,\beta)$, where the probability of sampling $X$ is proportional to the number of pairs in $X$. Let $R_2'$ denote the subset of pairs of $R_2$ whose distances are in $(\alpha,\beta]$. We compute $|R_2'|$ in a brute-force manner, in $O(y)$ time.
As we argue below, if $|S_2'|<L/3$ then $|R_2'|$ is smaller than $(2c_2/3)\log (m+n)$, with high probability, and if $|S_2'|>L$ then $|R_2'| \geq (2c_2/3)\log (m+n)$ with high probability. These facts are easy consequences of Chernoff's bound; their proofs are given in Lemmas~\ref{lem:estimate2} and~\ref{lem:estimate}.

Thus, if $|R_2'|$ is smaller than $(2c_2/3)\log (m+n)$, and the number of pairs in $S_1$ is at most $L/3$, we stop the algorithm and proceed to Algorithm~\ref{sec:tec}.2.
Otherwise, we proceed to Stage III.

\paragraph{Stage III.}

We now handle the remaining case, where we assume that $|S_1| \geq L/3$ or $|S_2'|\geq L/3$, or both.

If $S_1$ contains at least $L/3$ pairs of points from $P\times Q$, we generate a
sample $R_1$ of $c_1\log (m+n)$ pairs of points from $S_1$ uniformly at random, for some sufficiently large constant $c_1>0$. (Otherwise, $R_1$ is taken to be empty.) As before, we generate this sample by picking a pair uniformly from a random bipartite clique $X\in \Gamma_L^1(P,Q, \alpha,\beta)$ where the probability of sampling $X$ is proportional to the number of pairs in $X$.

We assume that the distances between pairs of points in $P\times Q$ are distinct, and we use $R_1\cup R_2'$ to narrow $(\alpha,\beta]$. That is, we find two consecutive distances $\alpha',\beta'$ in $R_1\cup R_2'$ such that $\dsoneFrechet\in (\alpha',\beta']$,
using binary search with the decision procedure of Figure~\ref{alg:semi-sparse}.

As shown in Lemma~\ref{lem:median}, $R_1\cup R_2'$
contains, with high probability, an approximate median (in the \emph{middle three quarters}) of the distances between pairs in $S_1\cup S_2'$ --- the overall set of distances in $(\alpha,\beta]$.
This implies that $(\alpha',\beta']$ contains
at most $7/8$ of the distances in $(\alpha,\beta]$.

This terminates Stage III and the current iteration. We now
repeat these three stages with the narrowed interval $(\alpha',\beta']$.

\smallskip

Algorithm~\ref{sec:tec}.1 terminates when we meet the  termination criterion of Stage II.
 As will be argued, this happens, with high probability, when the current interval $(\alpha, \beta]$ contains at most $O(L)$ critical distances, including $\dsoneFrechet$, with a sufficiently small constant of proportionality.

We then proceed to Algorithm~\ref{sec:tec}.2 with the final narrowed interval.
We show in Lemma~\ref{lem:running_time1} that the resulting algorithm runs in
$O((m+n)^{4/3+\eps}/L^{1/3})$ time in expectation and with high probability, and uses $O((m+n)^{4/3+\eps}/L^{1/3})$ space, for any
$\eps>0$.

\paragraph{Remark.}
We note that our data structure is somewhat related to the data structure in~\cite{KS97},
but it is more suitable for our purpose. More specifically, the data structure of~\cite{KS97} uses several techniques, including the construction of an Eulerian path in an arrangement of disks, building a balanced segment tree, decomposition into smaller subproblems, dualization, and (one level of) $(1/r)$-cutting. $\Gamma_L(P,Q,\alpha,\beta)$ is somewhat simpler to construct as it only requires dualization and recursive $(1/r)$-cuttings. Computing the complete decomposition into $(1/r)$-cuttings, requires
$O((m+n)^{4/3+\eps})$ time and $O((m+n)^{4/3+\eps})$ storage, for any $\eps>0$, which is too expensive. However, our usage of recursion allows us to run the recursive decomposition in Stage I
until it reaches a level where the size of each subproblem is at most
$L$ and then stop, thereby making the algorithm more efficient.

\paragraph{Remark.}
Suppose that we have indeed narrowed down the interval $(\alpha, \beta]$, so
that it now contains $O(L)$ distances between pairs of $P\times Q$,
including $\dsoneFrechet$. We can then find $\dsoneFrechet$ by
simulating the execution of the decision procedure at the unknown
$\dsoneFrechet$. A simple way of doing this is as follows. To determine whether $M_{i,j}=1$  at
$\dsoneFrechet$, we compute the critical distance
$r' = |p_i-q_j|$ at which $M_{i,j}$ becomes $1$. If $r' \le \alpha$
then $M_{i,j}=0$, and if $r' \ge \beta$ then $M_{i,j}=1$.
Otherwise, $\alpha < r' < \beta$ is one of the $O(L)$ distances
in $(\alpha, \beta]$. In this case we run the decision procedure at
$r'$ to determine $M_{i,j}$. Since there are $O(L)$ distances
in $(\alpha, \beta]$, the total running time is $O(L(m+n))$. By
picking $L= (m+n)^{1/4+\eps}$ for another, but still arbitrarily
small $\eps > 0$, we balance
the bounds $O((m+n)^{4/3+\eps}/L^{1/3})$ and
$O(L(m+n))$, and obtain the bound $O((m+n)^{5/4+\eps})$, for any $\eps >0$,
on the overall running time.

Although this significantly improves the naive implementation
mentioned earlier, it suffers from the weakness that it has to run
the decision procedure separately for each distance in
$(\alpha, \beta]$ that we encounter during the simulation. In Algorithm~\ref{sec:tec}.2, described next,
we show how to accumulate several unknown distances and resolve them all
using a binary search that is guided by the decision procedure.
This allows us to find $\dsoneFrechet$ within the interval
$(\alpha, \beta]$ more efficiently.

\subsection{Algorithm~\ref{sec:tec}.2: An efficient simulation of the decision procedure}
\label{sec:searching_in_interval}
We assume that $\dsoneFrechet$ is in a given interval $(\alpha,\beta]$ that contains at most $L$ distances between pairs in $P\times Q$.
We simulate the decision procedure (of Figure~\ref{alg:semi-sparse}) at the unknown value $\delta^- = \dsoneFrechet$. The overall strategy of the simulation is to construct $S$ at $\delta^-$, one step at a time. Each such step
checks some specific entry $(i,j)$ of $M$ for being $0$ or $1$. To determine this, we need to compare
$\delta^-$ with some specific distance $r$ between a pair of points. At each step of the simulation we
will have a subrange $\tau$ of the original range $(\alpha,\beta]$, so that, for all values $\delta\in\tau$,
the simulation will make the same decisions, and therefore will construct a fixed prefix of the lowest
upward-skipping path $S$ up to the current location. To compare now $\delta^-$ with $r$,
we first test whether $r$ lies outside the current subrange $\tau$.
If $r$ lies to the left (resp., to the right) of $\tau$, we know that $\delta^- > r$ (resp., $\delta^- < r$), and can then
execute the current step of the simulation in a unique manner. If $r\in\tau$, we need to bifurcate, proceeding
along two separate branches, one assuming that $\delta^- < r$ (and then $M(i,j)=0$ at $\delta^-$)
and one assuming that $\delta^-\ge r$ (and then $M(i,j)=1$).
That is, the range $\tau$ of admissible values of $\delta$ is split by this bifurcation into the subranges
$\tau^- = \tau\cap(-\infty,r)$ and $\tau^+ = \tau\cap [r,\infty)$; we continue along one branch with $\tau^-$
and along the other with $\tau^+$.

Formally, these bifurcations generate a tree $T$. Each node $v$ of $T$ stores a pair $((i,j),\tau)$, where
$(i,j)$ is a location in $M$, and $\tau$ is a subrange of the initial range $(\alpha,\beta]$. The invariant
associated with this data is that, for each $\delta\in\tau$, the decision procedure at $\delta$ reaches the
location $(i,j)$ in the construction of the lowest upward-skipping sequence $S$, either as a $1$-entry
that becomes an element of $S$, or as a $0$-entry, which either lies immediately to the right of an
element of $S$, forcing $S$ to continue with an upward move, or as an entry inspected and skipped over
during an upward move. Another invariant that we maintain is that, for any subtree of $T$ rooted at the
root of $T$, the ranges stored at its leaves are pairwise disjoint and their union is
$(\alpha,\beta]$.

When we are at a node $v$ of $T$, we execute the next step of the construction of $S$, in which we
examine a suitable next entry $(i',j')$ of $M$. As explained above, this calls for comparing $\delta^-$
with the corresponding inter-point distance $r$. If we manage to resolve this comparison in a unique
manner, because $r$ happens to lie outside the current range $\tau$, we create a single child $v'$ of $v$,
and store in it the pair $((i',j'),\tau)$. Otherwise, we bifurcate. That is, we create two children $v^-$,
$v^+$ of $v$, and store at $v^-$ the pair $((i',j'),\tau^-)$, and store at $v^+$ the pair $((i',j'),\tau^+)$,
where $\tau^-$, $\tau^+$ are as defined above. It is clear that both invariants are maintained after either
of these (unary or binary) expansions.

To make the procedure efficient, we do not construct the entire tree $T$ at once, but proceed through a
sequence of phases. At each phase we start the construction from some node $\rho$, which represents some
unique prefix of the sequence $S$. We fix a threshold parameter $s$, whose value will be set later.
If, during the construction, we reach a node $v$ that has $s$ unary predecessors immediately preceding it,
we do not expand $T$ beyond $v$, and it becomes a leaf in the present version of $T$. We terminate the
phase either when every current leaf of $T$ has $s$ unary predecessors, as above, or when the total
number of nodes of $T$ becomes $m+n$.

We now sort the set $X$ of the $x=O(m+n)$ critical values at which
we have bifurcated (in $O((m+n) \log (m+n))$ time), we  then run a binary search for $\delta^-(P,Q)$ over $X$, using the decision procedure of Figure~\ref{alg:semi-sparse} to guide the search.
This step also takes $O((m+n)\log (m+n))$ time and determines all the $x$ unknown
values that we have encountered. Consequently we can identify the
path $\pi$ in $T$ that is the next portion of the overall lowest upward-skipping path $S$ in $M$
(at the optimal, still unknown, value $\delta^-$) which our decision procedure constructs. We also replace the current range $\tau$ of admissible values of $\delta$ by the subrange $\tau_v$ stored at the leaf of $\pi$.
We proceed in this manner through a sequence of such phases until, at the end of the simulation, we have a final range $\tau$ containing $\delta^-$, and we can then conclude that $\dsoneFrechet = \min\tau$ (it is easily checked that $\tau$ cannot be open at its left endpoint), and return this result as an output of the algorithm.

In Lemma~\ref{lem:searching_in_interval} we show that if we choose $s=(m+n)/L^{1/2}$, the (deterministic) running time of Algorithm~\ref{sec:tec}.2 is $O\left((m+n)L^{1/2}\log(m+n)\right)$. The space required by this algorithm is $O(m+n)$.

\subsection{The overall optimization algorithm.}\label{sec:overall_shortcuts}
We run Algorithm~\ref{sec:tec}.1 and then Algorithm~\ref{sec:tec}.2.

As noted earlier, Algorithm~\ref{sec:tec}.1 does not verify explicitly that the sample that it generates contains an
approximate median, nor does it verify that the number of distances in $(\alpha,\beta]$ is at most $O(L)$ when it terminates.
We prove, however, that these events occur with high probability.

As already mentioned, to balance the running times of Algorithms~\ref{sec:tec}.1 and \ref{sec:tec}.2,
we choose $L = (m + n)^{2/5+\eps}$,
for another, but still arbitrarily small $\eps > 0$. This gives the
following main result of this section.

\begin{theorem}
\label{thm:one_sided}
Given a set $P$ of $m$ points and a set $Q$ of $n$ points in the plane,
and a parameter $\eps>0$, we can compute the one-sided discrete Fr\'echet
distance $\dsoneFrechet$ with shortcuts in $O((m+n)^{6/5+\eps})$ time in expectation and with high probability using $O((m+n)^{6/5+\eps})$ space.
\end{theorem}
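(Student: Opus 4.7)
The plan is to combine the two building blocks already established, namely Lemma~\ref{lem:finding_interval} (approximate distance counting with a bisecting sample) and Lemma~\ref{lem:searching_in_interval} (fast search inside a short interval guided by the linear-time decision procedure of Figure~\ref{alg:semi-sparse}), and to choose the threshold $L$ that balances their costs.

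First I would use Lemma~\ref{lem:finding_interval} inside an outer binary search to localize $\delta_F^-(A,B)$ in an interval $(\alpha,\beta]$ containing at most $L$ pairwise distances of $A\times B$. Initialize $(\alpha,\beta]=(0,D]$, where $D$ is any trivial upper bound (e.g.\ the maximum coordinate diameter). In each round, invoke the algorithm of Lemma~\ref{lem:finding_interval} with parameters $(\alpha,\beta]$ and $L$. If it reports ``at most $L$ distances'', the localization is done. Otherwise it returns an $O(\log(m+n))$-size sample $R$ that, with high probability, contains a pair whose distance lies in the middle three quarters of the distances of $A\times B$ in $(\alpha,\beta]$. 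I then run the decision procedure of Figure~\ref{alg:semi-sparse} on each distance in $R$ (each call costs $O(m+n)$), which determines the two consecutive elements $\alpha'<\beta'$ of $R$ that bracket $\delta_F^-(A,B)$, and I replace $(\alpha,\beta]$ by $(\alpha',\beta']$. Since one such sampled distance lies in the middle three quarters of the current interval's distances, the new interval loses at least a constant fraction (namely $\ge 1/8$) of the previous distances with high probability, so after $O(\log(m+n))$ rounds we reach an interval with at most $L$ candidate distances. The total cost of this stage is $O\!\big((m+n)^{4/3+\eps}/L^{1/3}+(m+n)\log(m+n)\big)$ per round, i.e.\ the same bound up to a $\log$ factor over all rounds (which is absorbed into the $\eps$).

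Once $(\alpha,\beta]$ contains at most $L$ pairwise distances including $\delta_F^-(A,B)$, I apply Lemma~\ref{lem:searching_in_interval} to extract $\delta_F^-(A,B)$ in $O((m+n)L^{1/2}\log(m+n))$ time using $O(m+n)$ working space. Adding the two costs,
\[
T(L)=O\!\left(\tfrac{(m+n)^{4/3+\eps}}{L^{1/3}}\right)+O\!\left((m+n)L^{1/2}\log(m+n)\right),
\]
and balancing the two terms by equating $(m+n)^{4/3}/L^{1/3}\asymp (m+n)L^{1/2}$ gives $L=(m+n)^{2/5+\eps}$ (absorbing the $\log$ factor and the extra $\eps$-losses into a slightly larger but still arbitrarily small $\eps$), which yields an overall time bound of $O((m+n)^{6/5+\eps})$. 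The space is dominated by the partial cutting-based data structure of Lemma~\ref{lem:finding_interval}, which is $O((m+n)^{4/3+\eps}/L^{1/3})=O((m+n)^{6/5+\eps})$.

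The only subtle point, which I would address briefly, is the high-probability bookkeeping. Each call to Lemma~\ref{lem:finding_interval} succeeds with probability $1-(m+n)^{-c}$ for a constant $c$ that can be made arbitrarily large by tuning the sampling constants $c_1,c_2$ in its proof; since the outer binary search makes only $O(\log(m+n))$ calls, a union bound yields overall success with high probability, and hence expected running time within the claimed bound (in the unlikely event of failure we can detect it by verifying with one final call of the decision procedure at the returned value and restart). This establishes Theorem~\ref{thm:one_sided}.
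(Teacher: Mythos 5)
Your proposal follows essentially the same route as the paper: an outer binary search driven by Lemma~\ref{lem:finding_interval} (using the decision procedure of Figure~\ref{alg:semi-sparse} on the returned sample to shrink the interval by a constant fraction per round), followed by Lemma~\ref{lem:searching_in_interval} on the resulting interval, with $L=(m+n)^{2/5+\eps}$ chosen to balance the two costs; the time and space accounting matches the paper's.

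The one place where your argument as stated does not quite work is the conversion of the high-probability guarantee into a \emph{randomized expected time} bound. The possible failures (the sample missing an approximate median, or Lemma~\ref{lem:finding_interval} wrongly asserting that at most $L$ distances remain) never make the algorithm output a wrong value --- the interval always brackets $\delta_F^-(A,B)$ because the bracketing is done with the exact decision procedure --- they only make the algorithm run longer, possibly far beyond the claimed budget before it terminates. Hence ``verifying with one final call of the decision procedure at the returned value'' detects nothing (and a single call at $\beta$ cannot certify optimality in any case), and without a cap on the cost of a failed attempt the expectation argument does not go through. The paper's fix, which you should adopt, is to restart when a prescribed \emph{runtime budget} is exceeded: too many invocations of Lemma~\ref{lem:finding_interval} in the binary search, or more than $L$ bifurcations encountered in the algorithm of Lemma~\ref{lem:searching_in_interval}. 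Each attempt then costs at most the claimed bound, the expected number of restarts is $O(1)$, and the expected-time statement of Theorem~\ref{thm:one_sided} follows; alternatively one can simply state the bound as holding with high probability, as you did with the union bound over the $O(\log(m+n))$ rounds.
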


\paragraph{Remark.}
In principle, our algorithm for the one-sided discrete Fr\'echet distance with shortcuts can be generalized to higher dimensions. The only part that limits our approach to $\mathbb{R}^2$ is the use of cuttings in Algorithm~\ref{sec:tec}.1. However, this part can be replaced by a random sampling approach that is similar to the one that we use in Section~\ref{sec:semi_finding_interval} for the semi-continuous Fr\'echet distance with shortcuts. This will increase the running time of the algorithm, but it will stay strictly subquadratic. We omit here the details of this extension.

\subsection{Analysis and correctness}
\label{sec:correctness}

\begin{lemma}
\label{lem:cutting}
Given a set $P$ of $m$ points and a set
$Q$ of $n$ points in the plane, an interval
$(\alpha,\beta]\subset\reals$, and parameters $1\le L \le m+n$ and
$\eps>0$, Stage I of Algorithm~\ref{sec:tec}.1 constructs the data structure $\Gamma_L(P,Q,\alpha,\beta)= \Gamma_L^1(P,Q,\alpha,\beta) \cup \Gamma_L^2(P,Q,\alpha,\beta)$, in $O((m+n)^{4/3+\eps}/L^{1/3})$ time.
The sum of the sizes of the vertex sets of
the complete bipartite graphs of $\Gamma_L^1(P,Q,\alpha,\beta)$ and
 $\Gamma_L^2(P,Q,\alpha,\beta)$ is $O((m+n)^{4/3+\eps}/L^{1/3})$.
The total number
of pairs in $\Gamma_L^2(P,Q,\alpha,\beta)$ is
$O((m+n)^{4/3+\eps}L^{2/3})$.
\end{lemma}
\begin{proof}
Consider the recursion in Stage I of Algorithm~\ref{sec:tec}.1.
If we stop the recursion at
level $j$, we have $(m+n)/r^{3j} \approx L$, or
$r^j \approx ((m+n)/L)^{1/3}$. The number of subproblems is at most
$c^{2j}r^{4j} \approx c^{2j} ((m+n)/L)^{4/3}$. We choose $r=c^{2/\eps}$, so
we can bound $c^{2j}$ by ${(r^j)}^\eps$, where $\eps$ is the positive parameter
prespecified in the lemma.

The number of vertices of bicliques of $\Gamma_L^1(P,Q,\alpha,\beta)$ and $\Gamma_L^2(P,Q,\alpha,\beta)$ is dominated by the
size of the graphs output at the last level of the recursion, which is
$$
O(c^{2j} r^{4j} \cdot (m+n)/r^{3j}) =O((m+n)r^{j(1+\eps)})= O((m+n)^{4/3+\eps}/L^{1/3}).
$$

The total number
of pairs (whose distance is either in $(\alpha,\beta]$ or not) in the bipartite cliques of $\Gamma_L^2(P,Q,\alpha,\beta)$ is
$$O(((m+n)/L)^{4/3+\eps}\cdot L^2) = O((m+n)^{4/3+\eps}L^{2/3}),$$
since each subproblem at the bottom of the recursion contains at most $L^2$ edges.

The cost of constructing the structure is dominated by the cost of constructing the deepest $(1/r)$-cuttings, which is done one level before the last level of the recursion (i.e., at level $j-1$). In this level, we have $c^{2j-2}r^{4(j-1)}$ subproblems, each containing at most $(m+n)/r^{3(j-1)}$ points. As mentioned at the beginning of the description of Stage~I, constructing the primal $(1/r)$-cutting for such a subproblem costs $O(r\cdot (m+n)/r^{3(j-1)})$ time, and constructing $(1/r)$-cuttings of the duals of the $cr^2$ primal problems costs $O(cr^2\cdot r\cdot (m+n)/(r^{3(j-1)}r))$. Hence the overall cost of constructing the $(1/r)$-cuttings at this level is
$$
O(c^{2j-2} r^{4(j-1)}r^2 \cdot (m+n)/r^{3(j-1)}) = O((m+n)r^{j(1+\eps)}) = O((m+n)^{4/3+\eps}/L^{1/3}).
$$
\end{proof}


We now prove that the number of pairs of distance in $(\alpha,\beta]$ in the sample generated in Stage II of Algorithm~\ref{sec:tec}.1 is highly correlated with the number of pairs of $\Gamma_L^2(P,Q,\alpha,\beta)$ whose distance is in $(\alpha,\beta]$. To show that, we use the following multiplicative form of Chernoff's bound.

\begin{theorem}\textnormal{[Chernoff; see, e.g.,~\cite{bookChernoff}] }
\label{thm:chernoff}
Let $X_1,\ldots, X_R$ be independent random variables taking values in $\{0, 1\}$. Let $X=\sum_{i=1}^R X_i$ and let $\mu = E[X]$. Then, for any $\xi >0$ it holds that

(i) $Pr(X>(1+\xi)\mu)< \left(\dfrac{e^\xi}{(1+\xi)^{1+\xi}}\right)^\mu.$

Similarly, for any $0<\xi<1$, we have

(ii) $Pr(X<(1-\xi)\mu)< \left(\dfrac{e^{-\xi}}{(1-\xi)^{1-\xi}}\right)^\mu.$

\end{theorem}

Recall that $S_2$ is the set of pairs of $\Gamma_L^2(P,Q,\alpha,\beta)$, and $S_2'$ is the subset of pairs of $S_2$ whose distances are in $(\alpha,\beta]$. In Stage II of Algorithm~\ref{sec:tec}.1 we generated a random sample $R_2$ of $k=c_2(|S_2|/L)\log (m+n)$ pairs from $S_2$, and  $R_2'$ is the subset of pairs of $R_2$ whose distances are in $(\alpha,\beta]$. We now prove the following two lemmas.

\begin{lemma}
\label{lem:estimate2}
If $|S_2'|<L/3$ then the number of distances in $R_2'$ is smaller than $(2c_2/3)\log(m+n)$, with probability at least $1-\frac{1}{(m+n)^{c'}}$, for a constant $c'=\Theta(c_2)$.
\end{lemma}
\begin{proof}
For each pair $(p^i,q^i)\in R_2$, let $X_i$ be the indicator random variable of the event that $|p^i-q^i|\in (\alpha,\beta]$. Then, $X=\sum_{i=1}^k X_i=|R_2'|$ and $\mu = E(X) = |R_2|\cdot |S_2'|/|S_2|$.
Suppose that $|S_2'|$ is smaller than $L/3$. Then, $\mu< c_2(|S_2|/L)\log(m+n)\cdot (L/3)/|S_2| = (c_2/3)\log(m+n)$.

We fix a sufficiently large $\xi> 1$ so that $(1+\xi)\mu=(2c_2/3)\log(m+n)$.
Then, by Theorem~\ref{thm:chernoff}(i),
\begin{align*}
& Pr(|R_2'|> (2c_2/3)\log(m+n)) =Pr(X>(1+\xi)\mu)\\
& <\left(\dfrac{e^\xi}{(1+\xi)^{1+\xi}}\right)^\mu=\left(\dfrac{e^{\xi/(1+\xi)}}{(1+\xi)}\right)^{(1+\xi)\mu}\\ & <(e^{1/2}/2)^{(2c_2/3)\log(m+n)}\\ & <0.83^{(2c_2/3)\log(m+n)}\approx\frac{1}{(m+n)^{c'}},\end{align*}
for a suitable constant $c'$ that is proportional to $c_2$.
\end{proof}

\begin{lemma}
\label{lem:estimate}
If $|S_2'|>L$ then the number of distances in $R_2'$ is greater than $(2c_2/3)\log(m+n)$, with probability at least $1-\frac{1}{(m+n)^{c'}}$, for some constant $c'=\Theta(c_2)$. \end{lemma}
\begin{proof}
As in Lemma~\ref{lem:estimate2}, for each pair $(p^i,q^i)\in R_2$, let $X_i$ be the indicator random variable of the event that $|p^i-q^i|\in (\alpha,\beta]$. Then, as before, $X=\sum_{i=1}^k X_i=|R_2'|$ and $\mu = E(X) = |R_2|\cdot |S_2'|/|S_2|$.  Suppose that $|S_2'|$ is greater than $L$. Then, $\mu > c_2(|S_2|/L)\log(m+n)\cdot L/|S_2| = c_2\log(m+n)$.

Set $\xi=1/3$. Then, by Theorem~\ref{thm:chernoff}(ii),
\begin{align*}
& Pr(|R_2'|< (2c_2/3)\log(m+n))< Pr(X <(1-\xi)\mu)\\
& < \left(\dfrac{e^{-\xi}}{(1-\xi)^{(1-\xi)}}\right)^\mu<\left(\dfrac{e^{-1/3}}{(2/3)^{2/3}}\right)^{c_2\log(m+n)}\\ & < (0.94)^{c_2\log(m+n)} \approx\frac{1}{(m+n)^{c'}},\end{align*}
for a suitable constant $c'$ that is proportional to $c_2$.
\end{proof}

\begin{lemma}
\label{lem:median} If $|S_1|\geq L/3$ or $|S_2'|\geq L/3$ (or both),
then, with
high probability, the sample $R_1\cup R_2'$
contains a pair whose distance is in the middle three quarters of the sequence of sorted
distances between pairs of points of $P\times Q$ that lie in $(\alpha,\beta]$.
\end{lemma}

\begin{proof}
If $|S_1| \ge L/3$ then the sample $R_1$
contains, with high probability, a pair whose distance is in the middle half of the sequence of sorted distances between pairs in $S_1$.
Indeed, the probability that $R_1$ does not contain a pair $(p',q')$ of points whose distance is in the middle half of the distances recorded in $\Gamma^1_L(P,Q,\alpha,\beta)$, is $(1/2)^{c_1\log (m+n)} = 1/(m+n)^{c_1}$. 

If $|S'_2| \ge L/3$ then the probability that $R_2$ does not contain a pair $(p'',q'')$ at distance in the middle half of the sequence of sorted distances between pairs of $S_2'$ is
$$
\left(1-\dfrac{|S_2'|}{2|S_2|}\right)^{c_2(|S_2|/L)\log(m+n)} < e^{-\frac{1}{2}c_2(|S_2'|/L)\log(m+n)}\leq e^{-\frac{1}{6}c_2\log(m+n)} =
\dfrac{1}{(m+n)^{c'}},
$$

for a suitable constant $c'$ that is proportional to $c_2$.
Let $S$ be the larger among $S_1$ and $S_2'$, and let $R$ be the corresponding sample ($R_1$ or $R_2'$) from $S$.
Then, $|S|\geq L/3$, and, with high probability, $R$ contains a pair $(p',q')$ whose distance,
$d'$, lies in the middle half of the pairwise distances of $S$. Thus, at least $1/4$ of the distances in $S$ are smaller than $d'$, and so $d'$ is greater than at least $1/8$ of the distances in $S_1\cup S_2'$. By a similar reasoning $d'$ is also smaller than at least $1/8$ of the distances in $S_1\cup S_2'$. Thus, it is in the
middle three quarters of $S_1\cup S_2'$ --- the overall set of pairs whose distances are in $(\alpha,\beta]$.
\end{proof}



\paragraph{Correctness.}
Follows immediately from the fact that the interval $(\alpha,\beta]$ contains $\delta^-(P,Q)$ with certainty, and from the correctness of the decision procedure.

We next bound the running time of the algorithm.

\begin{lemma}
\label{lem:running_time1}
Algorithm~\ref{sec:tec}.1 runs in $O((m+n)^{4/3+\eps}/L^{1/3})$ time in expectation and with high probability, and uses $O((m+n)^{4/3+\eps}/L^{1/3})$ space, for any
$\eps>0$. Furthermore, when Algorithm~\ref{sec:tec}.1 terminates
$(\alpha,\beta]$ contains less than $4L/3$ distances with high probability.
\end{lemma}
\begin{proof}
Constructing $\Gamma_L(P,Q,\alpha,\beta)$ in Stage I takes $O((m+n)^{4/3+\eps}/L^{1/3})$ time, as shown in Lemma~\ref{lem:cutting}.

The time for generating the samples in Stage II and Stage III is (at worst) proportional to the size of $\Gamma_L^1 (P,Q,\alpha,\beta)$ and $\Gamma_L^2(P,Q,\alpha,\beta)$, which by Lemma~\ref{lem:cutting} is $O((m+n)^{4/3+\eps}/L^{1/3})$.

The number of sampled pairs in $R_1$ is $O(\log(m+n))$, and the number of pairs in $R_2'$ is at most $|R_2|$, which is
$$O((|S_2|/L)\log (m+n)) = O\left(\left(((m+n)^{4/3+\eps}L^{2/3})/ L\right)\log (m+n)\right) =  O\left((m+n)^{4/3+\eps}/L^{1/3}\right),$$
for an arbitrarily small $\eps>0$ (the logarithmic factor is absorbed in the last bound by slightly increasing $\eps$).
Thus, the running time for finding two consecutive distances in $R_1\cup R_2'$ in Stage III (that delimit the narrowed interval $(\alpha',\beta']$) is subsumed in the bounds on the costs of Stages I and II.

By Lemma \ref{lem:estimate2} when $(\alpha,\beta]$
contains less than $2L/3$ distances (then it must have either less than $L/3$ distances in $S_2'$ or less than
$L/3$ distances in $S_1$) then with high probability we stop at Stage II.
Furthermore, by
Lemma~\ref{lem:median}, if $(\alpha,\beta]$ contains
at least $2L/3$ distances then with high probability, we narrow $(\alpha, \beta]$  to an interval $(\alpha', \beta']$ that contains at most $7/8$ of the distances that were in $(\alpha,\beta]$. It follows that we repeat the three stages only $O(\log(m+n))$ times with high probability.

We conclude that the resulting algorithm runs in
$O((m+n)^{4/3+\eps}/L^{1/3})$ time with high probability, for any
$\eps>0$ (again we absorb an additional logarithmic factor by slightly increasing $\eps$, as above).
It may happen with polynomially small probability that the length of the new interval $(\alpha',\beta]$ is larger than $7/8$ times the length of $(\alpha,\beta]$, or that we do not stop when
$(\alpha,\beta]$
contains less than $2L/3$ distances.
But even when these rare events happen the algorithm still runs in polynomial time. It follows that the expected running time of the algorithm is also $O((m+n)^{4/3+\eps}/L^{1/3})$.

By
Lemma \ref{lem:estimate} if $(\alpha,\beta]$ contains more than $L+L/3$ distances (then it must have either more than $L$ distances in $S_2'$ or more than
$L/3$ distances in $S_1$) then with high probability
Algorithm~\ref{sec:tec}.1 does not stop at Stage II.
This implies that with high probability when
Algorithm~\ref{sec:tec}.1 stops $(\alpha,\beta]$ contains at most $4L/3$
distances.

The space required by our algorithm is proportional to the size of $\Gamma_L^1(P,Q,\alpha,\beta)\cup\Gamma_L^2(P,Q,\alpha,\beta)$, and thus it is $O((m+n)^{4/3+\eps}/L^{1/3})$ (again, according to Lemma~\ref{lem:cutting}).
\end{proof}

We now analyze the running time and the storage requirement of Algorithm~\ref{sec:tec}.2.
\begin{lemma}
\label{lem:searching_in_interval}
Given a set $P$ of $m$ points and a set $Q$ of $n$ points in the plane, and
an interval $(\alpha,\beta]\subset\reals$ that contains $O(L)$ distances
between pairs in $P\times Q$, including $\dsoneFrechet$, Algorithm~\ref{sec:tec}.2 finds
$\dsoneFrechet$ deterministically in $O((m+n)L^{1/2}\log(m+n))$ time using $O(m+n)$ space.
\end{lemma}

\begin{proof}
The simulation of the decision procedure in Algorithm~\ref{sec:tec}.2 is partitioned into phases, where in each phase we generate a tree $T_d$, consisting of the bifurcations made during the traversal of $M$ and the paths connecting these bifurcations, and resolve the comparisons associated with it. There are two kinds of phases. A phase is called \emph{successful} if it extends the desired upward-skipping path $S$ by at least $s$ steps, and it is called \emph{unsuccessful} otherwise.
 Since the total size of $S$ is $O(m+n)$ there can be at most $O((m+n)/s)$ successful phases of this kind, whose total cost is thus $O(((m+n)^2/s)\log (m+n))$ (recall that resolving the comparisons in a single phase requires a logarithmic number of calls to the decision procedure).

It remains to analyze the cost of unsuccessful phases. The size of $T_d$ in an unsuccessful phase must be $m+n$, as otherwise any path from the root to a leaf in $T_d$ is of length at least $s$ and the phase must have been successful. On the other hand, if we denote by $x$ the number of bifurcating nodes (for simplicity assume the root of $T_d$ is bifurcating) in $T_d$, then the size of $T_d$ is $O(xs)$. Indeed, each node which is not bifurcating is on at least one path of length at most $s$ emanating from a bifurcating node. Since two such paths leave each bifurcating node, their total number is $2x$ and their total size is at most $2xs+x$. It follows that $m+n\leq 2xs+x\leq 3xs$ or $x\geq \frac{m+n}{3s}$. Note that $x$ is the number of critical values that we encounter in this phase, and that, by construction, each critical value can arise in a single phase. Hence,
the number of unsuccessful phases is $O(Ls/(m+n))$. Each such phase takes $O((m+n)\log (m+n))$
time, as before, for a total of $O(Ls\log(m+n))$ time.

Overall, the cost is thus
$$
O\left( \dfrac{(m+n)^2\log (m+n)}{s} + Ls\log(m+n) \right) ,
$$
which, by choosing $s=(m+n)/L^{1/2}$, becomes $O\left((m+n)L^{1/2}\log(m+n)\right)$.

Note that, after each phase  of the algorithm, we can free the memory
used to process the phase and only remember $\alpha,\beta$ and the
path in $M$ (a prefix of the desired $S$) that we have traversed so
far. Since each phase processes $O(m+n)$ entries of $M$, the space
needed by this algorithm is $O(m+n)$.
\end{proof}

Theorem \ref{thm:one_sided} follows from Lemma \ref{lem:running_time1} and Lemma \ref{lem:searching_in_interval}.

\section{Approximating the $k$th distance (by rank)}
\label{sec:cor:aprox_selection}
In this section, we step out of the context of the Fr\'echet distance.
As already noted, we believe that Algorithm~\ref{sec:tec}.1 (of Section~\ref{sec:finding_interval}) is of independent interest,
and that it may find other applications for distance-related optimization
problems. In this section we give one such application for approximate distance selection.

Given a set $A$ of $m$ points and a set $B$ of $n$ points in the plane,
one can find a pair $(a,b) \in A\times B$ such that $|a-b|$ is the
$k$-th smallest distance between a point of $A$ and a point of $B$,
using an algorithm of Katz and Sharir~\cite{KS97},
that runs in
$O((m+n)^{4/3}\log^2 (n+m))$ time. In fact, using just the first part of their algorithm,
we can decide, for a given threshold distance $\delta$, whether the number $N$ of pairs in $A\times B$ at distance
at most $\delta$ is at most $k$, in $O((m+n)^{4/3}\log (n+m))$ time.
The following theorem shows that if we do not insist on obtaining the
pair realizing the $k$-th smallest distance exactly, but are willing
to get by with a pair realizing a distance of rank sufficiently close to $k$,
we can speed up the computation.

\begin{theorem}
\label{cor:aprox_selection}
Let $A$  be a set of $m$ points and let $B$ be a set of $n$ points in the plane,
and let $k, L$, and $t$ be such that $0<k<mn$, $\sqrt{k}\le L\le k$, and $t=L^2/k$.

\noindent (a)
We can find a pair
$(a,b) \in A\times B$ such that, with high probability, $|a-b|$ is the
$\kappa$-th smallest distance between a point of $A$ and a point of $B$,
for some rank $\kappa$ satisfying $k-L\le \kappa \le k+L$, in worst-case
${\displaystyle O\left(\frac{mn}{t}\log (m+n) + m +n \right) =
O\left(\frac{mnk}{L^2}\log (m+n) + m +n\right)}$ time and storage.

\noindent (b)
If $t\le m+n$ we can also find such a pair in
$O((m+n)^{4/3+\eps}/t^{1/3})=O((m+n)^{4/3+\eps}k^{1/3}/L^{2/3})$ time and space for any $\eps>0$. This time bound holds in expectation and with high probability.
\end{theorem}

\noindent
{\bf Remark.}
Note that the bound in part (b) of the theorem is better than that in (a) when
(in what follows, the polylogarithmic factors are suppressed, as they can be
subsumed in the stated expressions, by slightly increasing the value of $\eps$)
\begin{align} \label{tthres}
\frac{L^2}{k} = t & < \frac{(mn)^{3/2}}{(m+n)^{2+\frac32 \eps}} ,\quad\text{or} \\
L & < \frac{k^{1/2}(mn)^{3/4}}{(m+n)^{1+\frac34 \eps}} \nonumber .
\end{align}
Since the right-hand side of the first inequality of (\ref{tthres}) is always smaller than $m+n$,
this is the effective threshold where (b) should be used, instead of
(the simpler procedure in) (a).

\begin{proof}

We first consider the situation in part (b), and assume that $t\le m+n$.
Consider first the case where $L\le m+n$ too.
We use the following randomized decision procedure. Given a distance parameter $\delta$, let
$N$ denote the number of pairs in $A\times B$ at distance at most $\delta$.
Let the parameters $k$ and $L$ be as specified in the theorem. Given $\delta$, $k$, and $L$,
the decision procedure returns ``SMALL'' if $N < k-L$, ``LARGE'' if  $N> k+L$, and, in
case $k-L \le N \le k+L$, it may return either SMALL or LARGE. This output
is guaranteed with high probability (i.e., the procedure errs with probability polynomially small
in $m+n$). The (worst-case) running time\footnote{%
  In contrast with the \emph{deterministic} decision procedure of Section \ref{sec:finding_interval},
  the procedure here is randomized and may err with small probability. Thus our algorithm
  may not give a correct answer, but this happens with polynomially small probability.}
is, as shown below, $O((m+n)^{4/3+\eps}/t^{1/3}) = O((m+n)^{4/3+\eps}k^{1/3}/L^{2/3})$.

We first prove the theorem, assuming the availability of such a decision procedure,
and then describe the decision procedure itself.

We find an interval
$[\alpha, \beta]$ that contains (with high probability) at most $L$ distances between pairs
of points from $A\times B$, so that at least one of these distances is of rank in $[k-L,k+L]$.
We then return either $\alpha$ or $\beta$ (together with its generating pair).
Clearly this distance is of rank in $[k-2L,k+2L]$. Rescaling $L$ by half, we get the
desired procedure.

To find an interval $[\alpha, \beta]$ with these properties, we start
with $[\alpha,\beta]=[0,\infty)$ and repetitively shrink it while maintaining the
property that it contains a distance of rank in $[k-L,k+L]$. We stop when we determine
(with high probability) that it contains at most $L$ distances between pairs of points from $A\times B$.

At each step of the search, we construct the hierarchical tree-like cuttings,
as in the algorithm of Section~\ref{sec:finding_interval}, where the construction stops
at the level where the size of each subproblem is at most $L$.
If the algorithm reports that the number of critical distances in
$[\alpha,\beta]$ is at most\footnote{%
  Note that if $L>m+n$, the hierarchy of cuttings is empty, and the whole graph
  $A\times B$ is left undecomposed. This situation will be handled later.}
$L$, we stop. Otherwise, the algorithm produces a random
sample $R$ of the distances in $[\alpha,\beta]$, that contains, with high probability,
an approximate median (in the middle three quarters) of the pairwise distances in $[\alpha,\beta]$.

Let $d_1$ (resp., $d_2$) be the smallest (resp., largest) distance of a pair in $R$.
If the decision procedure returns LARGE for $d_1$, we know that the rank of $d_1$ is
$\ge k-L$, and we shrink $[\alpha,\beta]$ to $[\alpha,d_1]$, noting that our invariant
is maintained. Indeed, $[\alpha,\beta]$ contains a distance $d$ of rank in $[k-L,k+L]$.
If $\alpha\le d\le d_1$, we are done; otherwise, the rank of $d_1$ must be in $[k-L,k+L]$
and the invariant is again maintained.
If the procedure returns SMALL for $d_2$, we shrink $[\alpha,\beta]$ to $[d_2,\beta]$,
and an argument symmetric to the one just given shows that the invariant is maintained
in this case too.  Otherwise, there exists at least one
consecutive pair $x < y$ of distances in $R$,
such that the decision procedure returns SMALL for $x$ and LARGE for $y$.
We locate one such pair using binary search, and shrink $[\alpha,\beta]$ to $[x,y]$.
Again, since we know that the rank of $x$ is $\le k+L$ and that of $y$ is $\ge k-L$,
it is easily verified that $[x,y]$ contains a distance of rank in $[k-L,k+L]$.
Each of the preceding arguments holds with high probability.
Finally, since $R$ contains, with high probability, an approximate median (in the
middle three quarters of the distances in $[\alpha,\beta]$), it follows that
the number of distances in $[x,y]$ is, with high probability,
at most ${7/8}$ of the number of distances in $[\alpha,\beta]$. This argument applies also
to the extreme cases, when we shrink the interval to $[\alpha,d_1]$ or to $[d_2,\beta]$.
We then proceed to the next step of the search with the shrunk interval.

Since each call to the decision procedure errs with polynomially small probability,
it follows that, with high probability, we return, upon termination, a pair $(a,b) \in A\times B$
whose distance is of rank in $[k-2L, k+2L]$.
The running time of the overall resulting algorithm is dominated, up to a polylogarithmic factor,
by the cost of the decision procedure, so, with an appropriate adjustment of $\eps$,
it runs in worst-case randomized $O((m+n)^{4/3+\eps}k^{1/3}/L^{2/3})$ time,
uses $O((m+n)^{4/3+\eps}k^{1/3}/L^{2/3})$ space, and returns a correct output with high probability.


\paragraph{The decision procedure.}
We replace the annuli centered at the points of $A$ and $B$, as used by the original algorithm,
by respective disks of radius $\delta$ centered at the same points, and compute a hierarchical
cutting, as in Section~\ref{sec:finding_interval}, obtaining a collection of complete bipartite graphs of
disks and points, so that within each subgraph, all the points are contained in all the disks.
We compute this hierarchical cutting until we reach subproblems of size $t:=L^2/k$
instead of the size $L$ used originally and in the selection procedure described above.
recall that, by assumption, we have $1\le t\le L < m+n$. The
procedure ends up with sets $S_1$, $S_2$ of pairs of $A\times B$, as before,
where all the pairs in $S_1$ are at distances at most $\delta$, while only
some of the pairs in $S_2$ have this property; we let $S'_2$ denote the subset
of pairs in $S_2$ at distance $\le\delta$. We estimate $|S'_2|$ by drawing a
random sample $R_2$ from $S_2$, consisting of $\frac{c_2|S_2|}{t}\log(m+n)$ pairs,
for a suitable constant $c_2$, and by explicitly counting the number of pairs in
$R'_2:=R_2\cap S'_2$. As already stated, we want to detect the
cases $|S_1|+|S'_2| < k-L$ and $|S_1|+|S'_2| > k+L$, with high probability.

Note that we know the exact value of $|S_1|$, so, putting $k_0:=k-|S_1|$,
we want to detect the cases $|S'_2| < k_0-L$ and $|S'_2| > k_0+L$.
To do so, we compute $|R'_2|$, and report that $|S'_2| < k_0-L$ if
${\displaystyle |R'_2| \le \frac{c_2 k_0}{t} \log(m+n)}$, and that
$|S'_2| > k_0+L$ if ${\displaystyle |R'_2| \ge \frac{c_2 k_0}{t} \log(m+n)}$.

To show that the error probability of either decision is small, we use the following lemma, which
replaces Lemmas~\ref{lem:estimate2} and \ref{lem:estimate}.
(Note that part (a) of the lemma becomes vacuous when $k_0\le L$.)
\begin{lemma} \label{cher3.2}
(a) If $|S'_2| < k_0-L$ then
${\displaystyle |R'_2| \le \frac{c_2 k_0}{t} \log(m+n)}$ with
probability at least $1-\frac{1}{(m+n)^{c'}}$, for $c' = \Theta(c_2)$. \\
(b) If $|S'_2| > k_0+L$ then
${\displaystyle |R'_2| \ge \frac{c_2 k_0}{t} \log(m+n)}$ with
probability at least $1-\frac{1}{(m+n)^{c'}}$, for $c' = \Theta(c_2)$.
\end{lemma}
\begin{proof}
(a) Arguing as in the previous proof (and using the same terminology), we now have
\begin{align*}
\mu & = E(X) = |R_2|\cdot |S'_2|/|S_2| \le
\frac{c_2 |S_2|}{t}\log(m+n) \cdot \frac{k_0-L}{|S_2|} =
c_2 \frac{k_0-L}{t} \log(m+n).
\end{align*}
We define $\xi$ such that the following equation holds
$$
(1+\xi)\mu = c_2 \frac{k_0}{t} \log(m+n) \ .
$$
So it follows that
$$
\xi\mu = c_2 \frac{k_0}{t} \log(m+n) - \mu \ge c_2 \frac{L}{t} \log(m+n) .
$$
Applying Theorem~\ref{thm:chernoff}(i), the probability that
${\displaystyle |R'_2| > (1+\xi)\mu=\frac{c_2 k_0}{t} \log(m+n)}$ is at most
$$
\left( \frac{e^\xi}{(1+\xi)^{1+\xi}}\right)^\mu < e^{-\xi^2\mu/3}
= e^{-(\xi\mu)^2/(3\mu)} \le e^{-\frac{c_2}{3}\cdot\frac{L^2\log(m+n)}{t(k_0-L)}}
\le e^{-\frac13 c_2\log(m+n)} = \frac{1}{(m+n)^{c'}} ,
$$
where the last inequality follows from the choice of $t$, and $c'$ is a constant proportional to $c_2$.

\smallskip
\noindent
(b) Here we have
\begin{align*}
\mu & = E(X) = |R_2|\cdot |S'_2|/|S_2| \ge
\frac{c_2 |S_2|}{t}\log(m+n) \cdot \frac{k_0+L}{|S_2|} =
c_2 \frac{k_0+L}{t} \log(m+n).
\end{align*}
We pick $\xi$ such that
$$
(1-\xi)\mu = c_2 \frac{k_0}{t} \log(m+n)\ .
$$
So it follows that
\begin{align*}
\xi\mu & = \mu - c_2 \frac{k_0}{t} \log(m+n) ,\quad\text{or} \\
\xi & = 1 -  \frac{c_2 k_0}{\mu t} \log(m+n) .
\end{align*}
Now, applying Theorem~\ref{thm:chernoff}(ii), the probability that
${\displaystyle |R'_2| < \frac{c_2 k_0}{t} \log(m+n)}$ is at most
$$
\left( \frac{e^{-\xi}}{(1-\xi)^{1-\xi}}\right)^\mu < e^{-\xi^2\mu/2} .
$$
The expression in the exponent satisfies
$$
\xi^2\mu = \left( 1 -  \frac{c_2 k_0}{\mu t} \log(m+n) \right)^2 \mu ,
$$
which is clearly an increasing function of $\mu$. Hence,
substituting the minimum possible value of $\mu$, we have
$$
\xi^2\mu \ge \left( 1 -  \frac{k_0}{k_0+L} \right)^2
c_2 \frac{k_0+L}{t} \log(m+n) = \frac{c_2L^2}{(k_0+L)t} \log(m+n) \ge \frac{c_2}{2} \log(m+n) ,
$$
by the choice of $t$. Hence in this case the failure probability is at most
$$
\frac{1}{(m+n)^{c'}} ,
$$
where $c'$ is a constant proportional to $c_2$, as claimed.
\end{proof}


The running time of the decision procedure
is
$$
O\left((m+n)^{4/3+\eps}/t^{1/3}  \right) =
O\left((m+n)^{4/3+\eps} k^{1/3}/L^{2/3}  \right) ,
$$
for any $\eps>0$.
This follows by an analysis as in the proof of Lemma \ref{lem:running_time1}.

To complete the analysis of case (b), assume next that $L \ge m+n$ but $t$
is still at most $m+n$.
We replace the hierarchical construction for shrinking the interval $[\alpha,\beta]$
by the following simpler approach. We sample a set $S$ of  $O((m+n)\log(m+n))$ pairs
from $A\times B$. Standard probabilistic reasoning shows that, with high probability,
the maximum number of unsampled pairs between any pair of consecutive elements of $S$
(in the order sorted by distance) is $O\left( \frac{mn}{m+n} \right)$, which can be
assumed to be at most $L$, with a suitable choice of the constant of proportionality.
Hence, with high probability, for each interval of $L$ consecutive pairs of $A\times B$
(in the distance-sorted order), $S$ contains at least
$\Omega\left( \frac{L(m+n)}{mn} \right) = \Omega(1)$ pairs from the interval. In the following paragraph we assume that this event does occur.

We now use binary search to find
a pair $(p,q)$ such that the rank of $d(p,q)$ is in $[k-2L,k+2L]$ with high probability, as follows.
Let $d_{\rm min}$ be the smallest distance of a pair in the sample.
If the decision procedure returns LARGE for $d_{\rm min}$ then the rank of $d_{\rm min}$ is at least $k-L$ with high probability, and it is
at most $L\le k$, as we know that there is a pair in the sample among every $L$ consecutive pairs in the distance-sorted order  of all pairs.
Similarly,
let $d_{\rm max}$ be the largest distance of a pair in the sample.
If the decision procedure returns SMALL for $d_{\rm max}$ then the rank of $d_{\rm max}$ is at most $k+L$ with high probability and, for the same reason as above, it is
at least $k-L$. In these cases we return (the pair realizing) $d_{\rm min}$ or $d_{\rm max}$ as the desired output.
If the decision procedure returns SMALL for $d_{\rm min}$ and LARGE for $d_{\rm max}$
we apply binary search, using our  decision procedure, to find
 two pairs $(p_1,q_1)$ and $(p_2,q_2)$ in $S$,
which are consecutive in the distance-sorted order of the pairs in $S$, such that
the decision procedure returns SMALL for $(p_1,q_1)$ and LARGE for $(p_2,q_2)$.
Since there are at most $L$ pairs in between $(p_1,q_1)$ and  $(p_2,q_2)$ in the distance sorted order of all pairs,
the ranks
 of both $d(p_1,q_1)$ and $d(p_2,q_2)$ must then be in
the range $[k-2L,k+2L]$, so we can return either $x$ or $y$
as the desired output.

The running time is still dominated by the running time of the decision procedure
(times a logarithmic factor, which can be ignored by slightly increasing $\eps$), which is
$$
O\left((m+n)^{4/3+\eps}/t^{1/3} \right) =
O\left((m+n)^{4/3+\eps} k^{1/3}/L^{2/3} \right) ,
$$
for any $\eps>0$.

Finally, consider case (a). Here we make no assumptions concerning $t$ and $L$. We
replace the decision procedure by the following simpler one.
We  sample a subset $R$ of
$$
\frac{cmn}{t}\log (m+n)
$$
pairs from $A\times B$, for a suitable absolute constant $c$.
We then set $$i:= \frac{ck}{t}\log (m+n),$$ and find (in $O(|R|)$ time) the $i$-th
element $d_i=(a_i,b_i)$ of $R$, in the distance-sorted order, and report it as the output pair.
The following lemma establishes the correctness of this decision procedure.
\begin{lemma}
The rank of the pair $d_i=(a_i,b_i)$  in the distance-sorted order of all pairs in $A\times B$ is in $[k-L,k+L]$ with high probability.
\end{lemma}
\begin{proof}
The claim follows by showing that,
 with high probability, we sample at least $i$ pairs of rank no larger than $k+L$ and at most $i$ pairs of rank at
least $k-L$. We prove these properties under the sampling model in which
each pair is sampled independently with probability
$\frac{c}{t}\log (m+n)$. The alternative model, in which each subset of size $\frac{cmn}{t}\log (m+n)$ is sampled with equal probability, can be handled in a similar, but slightly more complicated, manner.

Consider the $k+L$ pairs of smallest rank.
Let $X_1$ be a random variable equal to the number of such pairs that are contained in $R$. Since each pair is sampled with probability
$\frac{c}{t}\log (m+n)$, we have
$\mu_1 := E(X_1) = \frac{c(k+L)}{t}\log (m+n)$.

Let $\xi = \frac{L}{k+L}$; so we have
$$
(1-\xi)\mu_1 = i = \frac{ck}{t}\log (m+n) \ .
$$

Applying Theorem \ref{thm:chernoff}(ii), we have that the probability that $X < i$ is at most
$$
\left( \frac{e^{-\xi}}{(1-\xi)^{1-\xi}}\right)^{\mu_1} < e^{-\xi^2\mu_1/2} = e^{-\left(\frac{L}{k+L}\right)^2\frac{c(k+L)}{2t}\log (m+n)} =
e^{-\frac{ck}{2(k+L)}\log (m+n)}.
$$
Since $L \le k$ this is smaller than $1/(m+n)^{c'}$ for some constant $c'$ that is proportional to $c$.

Similarly, consider now the $k-L$ pairs of smallest rank.
Let $X_2$ be a random variable equal to the number of such pairs that are contained in $R$. Again, since each pair is sampled with probability
$\frac{c}{t}\log (m+n)$, we have
$\mu_2 := E(X_2) = \frac{c(k-L)}{t}\log (m+n)$.

Let $\xi = \frac{L}{k-L}$; we have
$$
(1+\xi)\mu_2 = i = \frac{ck}{t}\log (m+n) \ .
$$

Applying Theorem \ref{thm:chernoff}(i), we have that the probability that $X > i$ is at most
$$
\left( \frac{e^{\xi}}{(1+\xi)^{1+\xi}}\right)^{\mu_2} < e^{-\xi^2\mu_2/3} = e^{-\left(\frac{L}{k-L}\right)^2\frac{c(k-L)}{3t}\log (m+n)} =
e^{-\frac{ck}{3(k-L)}\log (m+n)}.
$$
Since $L \le k$  this is also smaller than $1/(m+n)^{c'}$ for some constant $c'$ that is proportional to $c$.

\end{proof}

The cost of this procedure is $O(|R|+m+n) = O\left( \frac{mn}{t}\log (m+n) +m+n \right)$.
(The linear terms are added since in any case we need to read the input.)
This establishes part (a), and thereby completes the proof of the theorem.
\end{proof}

\noindent
{\bf Remark.}
We note that when $L < \sqrt{k}$ we have $t<1$, so the algorithm, as presented, does not
apply, and the best we can do is to run the exact selection procedure, which takes
$O((m+n)^{4/3}\log^2 (m+n))$ time.


\section{An efficient algorithm for computing the discrete Fr\'echet distance with two-sided shortcuts}
\label{sec:two_sided}

We first consider the corresponding decision problem.
That is, given $\delta>0$, we wish to decide whether
$\dstwoFrechet\le\delta$.\footnote{We ignore the issue of
discrimination between the cases of strict inequality and equality,
which will be handled in the optimization procedure, described later.}

Consider the matrix $M$ as defined in the Section~\ref{sec:DFDS1}.
In the two-sided version of DFDS, given a reachable position
$(p_{i},q_{j})$ of the frogs, the $P$-frog can make a \emph{skipping
upward move}, as in the one-sided variant, to any point $p_{k},k>i$,
for which $M_{k,j}=1$. Alternatively, the $Q$-frog can jump to any
point $q_{l},l>j$, for which $M_{i,l}=1$; this is a \emph{skipping
right move} in $M$ from $M_{i,j}=1$ to $M_{i,l}=1$. Determining whether $\dstwoFrechet\le\delta$
corresponds to deciding whether there exists a \emph{skipping
row- and column-monotone path} of ones in $M$ that starts at $(0,0)$, ends at
$(m-1,n-1)$, and consists of an interweaving sequence of skipping
upward moves and skipping right moves; see
Figure~\ref{fig:staircase}(c)).

Katz and Sharir~\cite{KS97} showed that the set
$S=\{(p_{i},q_{j})\mid\|p_{i}-q_{j}\|\le\delta\}=\{(p_{i},q_{j})\mid
M_{i,j}=1\}$ can be computed, in $O((m^{2/3}n^{2/3}+m+n)\log n)$
time and space, as the union of the edge sets of a collection
$\Gamma=\{P_{t}\times Q_{t} \mid P_{t}\subseteq P, \; Q_{t}\subseteq
Q\}$ of edge-disjoint complete bipartite graphs. The number of
graphs in $\Gamma$ is $O(m^{2/3}n^{2/3}+m+n)$, and the overall sizes
of their vertex sets are
\[
\sum_{t}|P_{t}|,\sum_{t}|Q_{t}|=O((m^{2/3}n^{2/3}+m+n)\log n).
\]
We store each graph
$P_{t}\times Q_{t}\in\Gamma$ as a pair of sorted linked lists $L_{P_{t}}$
and $L_{Q_{t}}$ over the points of $P_{t}$ and of $Q_{t}$, respectively.
For each graph $P_{t}\times Q_{t}\in\Gamma$, there is $1$ in each entry $(i,j)$ such that $(p_{i},q_{j})\in P_{t}\times Q_{t}$.
That is, $P_t \times Q_t$ corresponds to a submatrix
$M^{(t)}$ of ones in $M$ (whose rows and columns are not necessarily
consecutive). See Figure~\ref{fig:submatrices}(a).

Note that if $(p_{i},q_{j})\in P_{t}\times Q_{t}$ is a reachable
position of the frogs, then every pair in the set $\{(p_{k},q_{l})\in P_{t}\times Q_{t}\mid k\ge i,l\ge j\}$
is also a reachable position. (In other words, the positions
in the upper-right submatrix of $M^{(t)}$ whose lower-left entry
is $(i,j)$ are all reachable; see Figure~\ref{fig:submatrices}(b)).

\begin{figure}[htb]
\centering \includegraphics[scale=0.6]{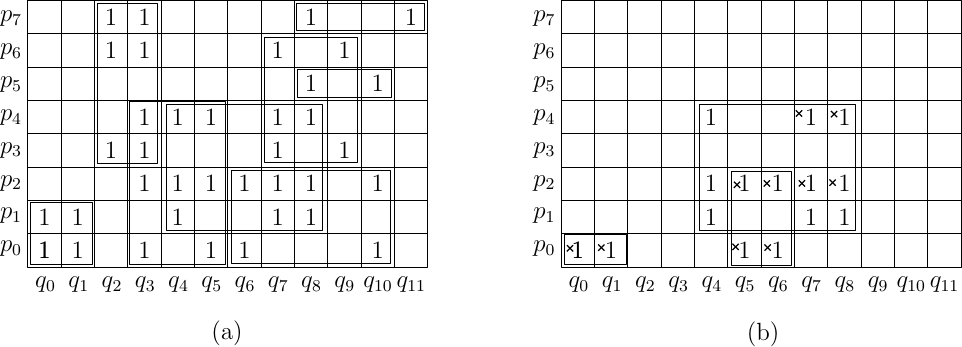}
\centering \caption{\small(a) A possible representation of the matrix $M$ as a collection of submatrices of ones, corresponding to the complete bipartite graphs
$\{p_0,p_1\}\times\{q_0, q_1\},
 \{p_0, p_2, p_4\}\times \{q_3, q_5\},
 \{p_0,p_2\}\times\{q_6, q_{10}\},
 \{p_1, p_2,p_4\}\times\{q_4, q_7, q_8\},
 \{p_3, p_6, p_7\}\times\{q_2, q_3\},
\{p_3, p_6\}\times\{q_7, q_9\},
\{p_5\}\times \{q_8, q_{10}\},
\{p_7\}\times \{q_8,q_{11}\}$. (b) Another matrix $M$, similarly decomposed, where the reachable positions
are marked with an x.}
\label{fig:submatrices}
\end{figure}

We say that a graph $P_{t}\times Q_{t}\in\Gamma$ \emph{intersects}
a row $i$ (resp., a column $j$) in $M$ if $p_{i}\in P_{t}$ (resp., $q_j\in Q_{t}$). We denote the subset of the graphs of $\Gamma$ that
intersect row $i$ of $M$
by $\Gamma_{i}^{r}$ and those that intersect the $j$th column by $\Gamma_j^c$. The sets $\Gamma_i^r$ are easily constructed from the lists $L_{P_t}$ of the graphs in $\Gamma$, and are maintained as linked lists.
Similarly, the sets $\Gamma_j^c$ are constructed from the lists $L_{Q_t}$, and are maintained  as doubly-linked lists, so as to facilitate deletions of elements from them. We have $\sum_{i}|\Gamma_{i}^{r}|=\sum_{t}|P_{t}|=O((m^{2/3}n^{2/3}+m+n)\log n)$
 and
$\sum_{j}|\Gamma_{j}^{c}|=\sum_{t}|Q_{t}|=O((m^{2/3}n^{2/3}+m+n)\log n).$

We define a 1-entry $(p_k, q_j)$
to be {\em reachable from below row $i$}, if $k \geq i$ and there exists an entry
$(p_\ell, q_j)$, $\ell < i$, which is reachable.
We process the rows of $M$ in increasing order
and for each graph $P_{t}\times Q_{t}\in\Gamma$  maintain a
reachability variable $v_t$, which is initially set to $\infty$.
We maintain the invariant that when we
start processing row $i$,
if $P_t\times Q_t$ intersects at least one row $i'\geq i$, then
$v_t$ stores the smallest index $j$
for which there exists an entry $(p_k,q_j) \in P_{t}\times Q_{t}$
that is reachable from below row $i$.

Before we start processing the rows of $M$, we verify that
$M_{0,0}=1$ and $M_{m-1,n-1} = 1$, and abort the computation if this
is not the case, determining that $\dstwoFrechet>\delta$.

Assuming that $M_{0,0}=1$, each position in $U_0=\{(p_{0},q_{l})
\mid M_{0,l}=1\}$ is a reachable position. It follows that for each
graph $P_t\times Q_t \in \Gamma$, $v_t$ should be set to
$\min\{l\mid P_t\times Q_t\in \Gamma_l^c \text{ and } (p_0, q_l)\in
U_0\}$. Note that graphs $P_t\times Q_t$ in this set are not necessarily in $\Gamma_0^r$.
We update the $v_t$'s using this rule, as follows. We
first compute $U_0$, the set of pairs, each consisting of $p_0$ and
an element of the union of the lists $L_{Q_t}$, for $P_t\times
Q_t\in \Gamma_0^r$. Then, for each $(p_0,q_l) \in U_0$, we set, for
each graph $P_u\times Q_u \in \Gamma_l^c$, $v_u \leftarrow
\min\{v_u,l\}$.

In principle, this step should now be repeated for each row $i$.
That is, we should compute $y_i=\min\{v_t\mid P_t\times Q_t\in
\Gamma_i^r\}$; this is the index of the leftmost entry of row $i$
that is reachable from below row $i$. Next, we should compute
$U_i=\{(p_i,q_l)\mid M_{i,l}=1 \text{ and } l\geq y_i\}$ as the
union of those pairs that consist of $p_i$ and an element of
$$
\{ q_j \mid q_j \in L_{Q_t} \;\mbox{for}\; P_t\times Q_t\in
\Gamma_i^r \;\mbox{and}\; j \ge y_i \}.$$

The set $U_i$ is the set of reachable positions in row $i$.
Then we should set for each $(p_0,q_l) \in U_i$ and for each graph
$P_u\times Q_u \in \Gamma_l^c$, $v_u \leftarrow \min\{v_u,l\}$.
This however is too expensive, because it may make us construct
explicitly all the $1$-entries of $M$.

To reduce the cost of this step, we note that, for any graph
$P_t\times Q_t$, as soon as $v_t$ is set to some column $l$ at some
point during processing, we can remove $q_l$ from $L_{Q_t}$ because
its presence in this list has no effect on further updates of the
$v_t$'s. Hence, at each step in which we examine a graph $P_t\times
Q_t \in \Gamma_l^c$, for some column $l$, we remove $q_l$ from
$L_{Q_t}$. This removes $q_l$ from any further consideration in rows
with index greater than $i$ and, in particular, $\Gamma_l^c$ will
not be accessed anymore.  This is done also when processing the
first row.

Specifically, we process the rows in increasing order and when we
process row $i$, we first compute $y_i=\min\{v_t\mid P_{t}\times
Q_{t}\in\Gamma_{i}^{r}\}$, in a straightforward manner. (If $i=0$, then we simply set $y_0=1$.)
Then we
construct a set $U_i' \subseteq U_i$ of the ``relevant'' (i.e., reachable)
$1$-entries in row $i$ as follows. For each graph  $P_t
\times Q_t\in \Gamma_i^r$ we traverse (the current) $L_{Q_t}$
\emph{backwards}, and for each $q_j \in L_{Q_t}$ such that $j \ge y_i$
we add $(p_i,q_j)$ to $U_i'$. Then, for each $(p_i,q_l)\in U_i'$,
we go over all graphs $P_u\times Q_u\in \Gamma_l^c$, and set $v_u
\leftarrow\min\{v_u,l\}$. After doing so, we remove $q_l$ from all
the corresponding lists $L_{Q_u}$.

When we process row $m-1$ (the last row of $M$), we set
$y_{m-1}=\min\{v_t\mid P_{t}\times Q_{t} \in\Gamma_{m-1}^{r}\}$. If
$y_{m-1}<\infty$, we conclude that $\dstwoFrechet\le\delta$
(recalling that we already know that $M_{m-1,n-1}=1$). Otherwise, we
conclude that $\dstwoFrechet>\delta$.

\paragraph{Correctness.}
We need to show that $\dstwoFrechet\le\delta$ if and only if
$y_{m-1} <\infty$ (when we start processing row $m-1$).  To this end, we
establish in Lemma~\ref{lem:two_sided} that the invariant stated above
regarding $v_t$ indeed holds. Hence, if $y_{m-1}<\infty$, then the
position $(p_{m-1},q_{y_{m-1}})$ is reachable from below row $m-1$,
implying that $(p_{m-1},q_{n-1})$ is also a reachable position and thus
$\dstwoFrechet\le\delta$. Conversely, if
$\dstwoFrechet\le\delta$ then $(p_{m-1},q_{n-1})$ is a reachable
position. So, either $(p_{m-1},q_{n-1})$ is reachable from below row $m-1$, or
there exists a position $(p_{m-1}, q_j)$, $j<n-1$, that is reachable from
below row $m-1$ (or both). In either case there exists a graph
$P_t\times Q_t$ in $\Gamma_{m-1}^r$ such that $v_t\leq n-1$ and thus
$y_{m-1} <\infty$.
We next show
that the reachability variables $v_t$ of the graphs in $\Gamma$ are maintained
correctly.
\begin{lemma}
\label{lem:two_sided} For each $i=0,\ldots,m-1$, the following
property holds. Let $P_t \times Q_t$ be a graph in $\Gamma_i^r$, and
let $j$ denote the smallest index for which $(p_i,q_j) \in P_t\times
Q_t$ and $(p_i,q_j)$ is reachable from below row $i$. Then, when we
start processing row $i$, we have $v_t = j$.
\end{lemma}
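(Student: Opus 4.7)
The plan is to prove by induction on $i$ a slightly strengthened invariant: at the start of processing row $i$, for \emph{every} graph $A_u\times B_u\in\Gamma$ (not only those in $\Gamma_i^r$), $v_u$ equals the smallest index $j$ such that $b_j\in B_u$ and $(a_\ell,b_j)$ is reachable for some $\ell<i$, with $v_u=\infty$ if no such $j$ exists. Specializing to $A_t\times B_t\in\Gamma_i^r$ recovers the lemma, since then $a_i\in A_t$ makes the condition $b_j\in B_t$ equivalent to $(a_i,b_j)\in A_t\times B_t$. The base case $i=1$ is immediate: no $v_u$ has been updated yet, and no $\ell<1$ exists, so both sides equal $\infty$.

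Before carrying out the inductive step, I would first establish an auxiliary fact: every pair inserted into $P_{i'}'$ during processing of row $i'$ is a reachable position. Indeed, by the inductive hypothesis and the definition $y_{i'}=\min\{v_t:A_t\times B_t\in\Gamma_{i'}^r\}$, the value $y_{i'}$ equals the smallest column $j$ for which $M_{i',j}=1$ and $(a_{i'},b_j)$ is reachable from below row $i'$. Combined with the fact that once $(a_{i'},b_{y_{i'}})$ is reached by a skip-up from below, one can skip rightwards to every $(a_{i'},b_j)$ with $M_{i',j}=1$ and $j\geq y_{i'}$, this shows that the reachable positions in row $i'$ are exactly $\{(a_{i'},b_j):M_{i',j}=1,\ j\geq y_{i'}\}$, and the entries admitted into $P_{i'}'$ form a subset. (For $i'=1$ the algorithm sets $y_1=1$ directly, and the same conclusion holds starting from $(a_1,b_1)$.)

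For the inductive step proper, fix a graph $A_u\times B_u$ and let $j^*$ be the smallest $j$ such that $b_j\in B_u$ and some $(a_\ell,b_j)$ with $\ell\leq i$ is reachable. For the upper bound $v_u\leq j^*$, if some activating $\ell$ is strictly less than $i$ then the inductive hypothesis together with the monotonicity of $v_u$ finishes immediately. The delicate case, which I expect to be the main obstacle, is when $\ell=i$ is the unique activator: I must rule out the possibility that $(a_i,b_{j^*})$ is silently skipped during the processing of row $i$ because $b_{j^*}$ was already removed from every relevant $L_{B_t}$. The key observation is that any earlier removal of $b_{j^*}$ must have been triggered by $(a_{\ell'},b_{j^*})\in P_{\ell'}'$ for some $\ell'<i$, which by the auxiliary fact forces $(a_{\ell'},b_{j^*})$ to be reachable---contradicting the uniqueness of $\ell=i$. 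Hence $b_{j^*}$ is still present in $L_{B_t}$ for any $A_t\times B_t\in\Gamma_i^r$ with $b_{j^*}\in B_t$, so $(a_i,b_{j^*})$ is added to $P_i'$ and the update step sets $v_w\leftarrow\min\{v_w,j^*\}$ for every graph $A_w\times B_w$ with $b_{j^*}\in B_w$, including $A_u\times B_u$.

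For the lower bound $v_u\geq j^*$, every update of the form $v_u\leftarrow\min\{v_u,l\}$ performed during the processing of some row $i'\leq i$ originates from an entry $(a_{i'},b_l)\in P_{i'}'$, which is reachable by the auxiliary fact; so $l$ is a valid candidate column for the invariant applied to $A_u\times B_u$, and hence $l\geq j^*$ by minimality of $j^*$. Combining the two bounds completes the induction and proves the lemma.
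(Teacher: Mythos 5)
Your proof is correct, and it rests on the same two ingredients as the paper's argument: (i) every entry placed in $P_{i'}'$ is a genuinely reachable position, because the induction hypothesis makes $y_{i'}$ the leftmost column of row $i'$ that is reachable from below, and (ii) $b_j$ can only be deleted from the lists after its column has been activated by a reachable entry, so it is still present at the first such activation. The packaging is different in a mildly useful way: you strengthen the invariant so that it speaks about \emph{every} graph $A_u\times B_u\in\Gamma$ (via ``$b_j\in B_u$ and some $(a_\ell,b_j)$ with $\ell<i$ is reachable''), rather than only about graphs in $\Gamma_i^r$ and reachability from below row $i$. This lets you maintain the upper bound locally, handling during row $i$ exactly those columns whose first activation occurs at row $i$, whereas the paper keeps the weaker invariant and, in its upper-bound step, jumps back to the minimal row $k_0$ at which column $j$ first becomes reachable and argues about what happened there using the induction hypothesis at $k_0$. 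Your lower bound is also slightly cleaner: since the stronger invariant quantifies over all graphs, you do not need the paper's extra observation that $A_t\times B_t\in\Gamma_{j_1}^c$ together with $a_i\in A_t$ forces $(a_i,b_{j_1})\in A_t\times B_t$ and hence reachability from below row $i$. One point you should spell out: in your ``delicate case'' you conclude that $(a_i,b_{j^*})$ is added to $P_i'$ from the mere presence of $b_{j^*}$ in a list $L_{B_t}$ with $A_t\times B_t\in\Gamma_i^r$, but the algorithm only admits entries in columns $\ge y_i$, so you also need $j^*\ge y_i$; this holds because the path witnessing reachability of $(a_i,b_{j^*})$ enters row $i$ by a skip-up move at some column $j''\le j^*$ which is therefore reachable from below row $i$, whence $y_i\le j''$ by the induction hypothesis. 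This is exactly the unproved forward half of your ``exactly'' characterization of the reachable positions of a row (your stated justification only gives the reverse inclusion); note that the paper is similarly terse at the analogous spot, where it asserts the existence of $j_0<j$ with $(a_{k_0},b_{j_0})$ reachable from below row $k_0$.
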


\begin{proof}
We prove this claim by induction on $i$.  For $i=0$, this claim holds trivially.
We assume then that $i>0$ and that the claim is true for each row $i'<i$,
and show that it also holds for row $i$.

Let $P_t\times Q_t$ be a graph in $\Gamma_i^r$, and let $j$ denote the smallest
index for which there exists a position $(p_i, q_j) \in P_t\times Q_t$ that is reachable from
below row $i$. We need to show that $v_t= j$ when we start processing row $i$.

Since $(p_i,q_j)$ is reachable from below row $i$, there exists a position
$(p_k,q_j)$, with $k<i$, that is reachable, and we let $k_0$ denote the smallest
index for which $(p_{k_0}, q_j)$ is reachable.  Let $P_o\times Q_o$ be the graph
containing $(p_{k_0},q_j)$. We first claim that when we start processing row
$k_0$, $q_j$ was not yet deleted from $L_{Q_o}$ (nor from the corresponding
list of any other graph in $\Gamma_j^c$). Assume to the contrary that $q_j$
was deleted from $L_{Q_o}$ before processing row $k_0$. Then there exists a
row $z<k_0$ such that $(p_z, q_j)\in U_z'$ and we deleted $q_j$ from $L_{Q_o}$
when we processed row $z$. By the last assumption, $(p_z,q_j)$ is a reachable
position. This is a contradiction to $k_0$ being the smallest index for which
$(p_{k_0}, q_j)$ is reachable. (The same argument applies for any other graph,
instead of $P_o\times Q_o$.)

We next show that $v_t \leq j$. Since $(p_{k_0},q_j)\in P_o\times Q_o$,
$P_o\times Q_o \in \Gamma_{k_0}^r \cap \Gamma_j^c$. Since $k_0$ is the
smallest index for which $(p_{k_0}, q_j)$ is reachable, there exists an
index $j_0$, such that $j_0 <j$ and $(p_{k_0}, q_{j_0})$ is reachable from
below row $k_0$. (If $k_0=1$, we use instead the starting placement
$(p_0,q_0)$.) It follows from the induction hypothesis that
$y_{k_0}\leq j_0 <j$. Thus, when we processed row $k_0$ and we went
over $L_{Q_o}$, we encountered $q_j$ (as just argued, $q_j$ was still in
that list), and we consequently updated the reachability variables $v_u$ of each
graph in $\Gamma_j^c$, including our graph $P_t\times Q_t$ to be at most $j$.

(Note that if there is no position in $P_t\times Q_t$ that is reachable
from below row $i$ (i.e., $j=\infty$), we trivially have $v_t \leq \infty$.)

Finally, we show that $v_t=j$. Assume to the contrary that $v_t=j_1<j$ when
we start processing row $i$. Then we have updated $v_t$ to hold $j_1$ when
we processed $q_{j_1}$ at some row $k_1 <i$. So, by the induction hypothesis,
$y_{k_1} \leq j_1$, and thus $(p_{k_1}, q_{j_1})$ is a reachable position.
Moreover, $P_t \times Q_t \in \Gamma_{j_1}^c$, since $v_t$ has been updated
to hold $j_1$ when we processed $q_{j_1}$. It follows that
$(p_i, q_{j_1})\in P_t\times Q_t$. Hence, $(p_i, q_{j_1})$ is reachable
from below row $i$. This is a contradiction to $j$ being the smallest
index such that $(p_i, q_j)$ is reachable from below row $i$. This
establishes the induction step and thus completes the proof of the lemma.
\end{proof}

\paragraph{Running Time.} We first analyze the initialization cost of
the data structure, and then the cost of traversal of the rows
for maintaining the variables $v_t$.

\paragraph{Initialization.}
Constructing $\Gamma$ takes $O((m^{2/3}n^{2/3}+m+n)\log(m+n))$ time.
Sorting the lists $L_{P_t}$ (resp., $L_{Q_t}$) of each $P_{t}\times Q_{t}\in\Gamma$ takes
$O((m^{2/3}n^{2/3}+m+n)\log^2 (m+n))$ time. Constructing the lists
$\Gamma_i^r$ (resp., $\Gamma_j^c$) for each $p_i \in P$ (resp., $q_j
\in Q$) takes time linear in the sum of the sizes of the $P_t$'s and
the $Q_t$'s, which is $O((m^{2/3}n^{2/3}+m+n)\log(m+n))$.

\paragraph{Traversing the rows.}
When we process row $i$ we
first compute $y_i$ by scanning $\Gamma_i^r$. This takes a total of
$O\left(\sum_{i}|\Gamma_{i}^{r}|\right)=O((m^{2/3}n^{2/3}+m+n)\log
n)$ for all rows. Since the lists $L_{Q_t}$ are sorted, the
computation of $U_i'$ is linear in the size of $U_i'$.
For each pair
$(p_i,q_\ell)\in U_i'$ we scan $\Gamma_\ell^c$, which must contain at
least one graph $P_t\times Q_t \in \Gamma$ such that $p_i \in P_t$
(and $q_j\in Q_t$).
For each element $P_t\times Q_t\in \Gamma_\ell^c$ we spend constant
time updating $v_t$ and removing $q_\ell$ from $L_{Q_t}$. It follows
that the total time, over all rows, of computing $U_i'$ and scanning
the lists $\Gamma_\ell^c$ is
$O\left(\sum_{l}|\Gamma_{l}^{c}|\right)=O((m^{2/3}n^{2/3}+m+n)\log n)$.

We conclude that the total running time is
$O((m^{2/3}n^{2/3}+m+n)\log^2 (m+n))$.

\subsection{The optimization procedure}
\label{sec:DFDS2_opt}
We use the above decision procedure for finding the optimum
$\dstwoFrechet$, as follows.  Note that if we increase $\delta$
continuously, the set of $1$-entries of $M$ can only grow, and
this can only happen at a distance between a point of $P$ and a
point of $Q$. We thus perform a binary search over the $mn$
pairwise distances between the pairs of $P\times Q$. In each
step of the search we need to determine the $k$th smallest
pairwise distance $r_k$ in $P\times Q$, for some value of $k$.
We do so by using the distance selection algorithm of
Katz and Sharir~\cite{KS97}, which can easily be adapted to
work for this bichromatic scenario.
We then run the decision
procedure on $r_k$, using its output to guide the binary search.
At the end of this search, we obtain two consecutive critical
distances $\delta_1, \delta_2$ such that
$\delta_1 < \dstwoFrechet \leq \delta_2$, and we can
therefore conclude that $\dstwoFrechet = \delta_2$. The
running time of the distance selection algorithm of \cite{KS97}
is $O((m^{2/3}n^{2/3}+m+n)\log^2(m+n))$, which also holds for the bipartite version that we use.
We thus obtain the following main result of this section.

\begin{theorem}
Given a set $P$ of $m$ points and a set $Q$ of $n$ points in the
plane, we can compute (deterministically) the two-sided discrete Fr\'echet distance
with shortcuts $\dstwoFrechet$, in
$O((m^{2/3}n^{2/3}+m+n)\log^3 (m+n))$ time, using
$O((m^{2/3}n^{2/3}+m+n)\log (m+n))$ space.
\end{theorem}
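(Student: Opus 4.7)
The plan is to combine the decision procedure just developed with a binary search over the $mn$ critical candidate distances $\{\|a_i-b_j\| \mid (a_i,b_j)\in A\times B\}$. The key observation is that the optimum $\delta_F^+(A,B)$ must coincide with one of these distances: as $\delta$ increases continuously, the edge set $E_\delta^+$ only grows, and the set of positions reachable in $G_\delta^+$ changes only when $\delta$ crosses a pairwise distance. In particular, $\delta_F^+(A,B)$ is the smallest such critical distance at which $(a_m,b_n)$ becomes reachable.

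To execute the search, I would maintain a rank interval $[k_{\mathrm{lo}}, k_{\mathrm{hi}}]$ known to contain the rank $k^*$ of the optimum distance. At each step, I would pick the midpoint $k$, apply the bichromatic adaptation of the Katz--Sharir distance selection algorithm of~\cite{KS97} to extract the $k$-th smallest distance $r_k$ in $A\times B$, and then run the decision procedure at $\delta=r_k$ to test whether $\delta_F^+(A,B)\le r_k$. Depending on the outcome, the interval shrinks by a factor of two. After $O(\log(mn))=O(\log(m+n))$ iterations, the interval collapses to a single rank, and we output the corresponding distance (equivalently, we end with two consecutive critical distances $\delta_1<\delta_F^+(A,B)\le\delta_2$ and output $\delta_2$).

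For the cost analysis, each iteration invokes one distance-selection query at cost $O((m^{2/3}n^{2/3}+m+n)\log^2(m+n))$ and one call to the decision procedure at the same asymptotic cost, giving an overall running time of $O((m^{2/3}n^{2/3}+m+n)\log^3(m+n))$ over the $O(\log(m+n))$ iterations. For the space bound, both the bipartite-clique decomposition $\Gamma$ used by the decision procedure and the data structure underlying distance selection fit within $O((m^{2/3}n^{2/3}+m+n)\log(m+n))$ space, and since we can free the workspace of each call before launching the next, the total space is also $O((m^{2/3}n^{2/3}+m+n)\log(m+n))$. There is no substantive obstacle; the only mildly delicate point is confirming that the Katz--Sharir selection algorithm extends cleanly to the bichromatic setting, but this follows from the standard observation that their arrangement-based approach treats $A$ and $B$ symmetrically.
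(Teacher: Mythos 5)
Your proposal matches the paper's optimization procedure essentially verbatim: a binary search over the ranks of the $mn$ pairwise distances, using the bichromatic adaptation of the Katz--Sharir distance selection algorithm to extract the median-rank distance and the $O((m^{2/3}n^{2/3}+m+n)\log^2(m+n))$-time decision procedure to guide the search, ending with consecutive critical values $\delta_1<\delta_F^+(A,B)\le\delta_2$ and returning $\delta_2$. The time and space accounting is also the same as in the paper, so the argument is correct.
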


\section{An efficient algorithm for the semi-continuous Fr\'echet distance with one-sided shortcuts}
\label{sec:semi_continuous}
A \emph{curve} $f \subseteq \mathbb{R}^2$ is a continuous
mapping from $[a,b]$ to $\mathbb{R}^2$, where $a,b \in \mathbb{R}$ and $a<b$.
A \emph{polygonal curve} is a curve $f:[0,n]\rightarrow \mathbb{R}^2$ with $n\in \mathbb{N}$, such that for all $i\in \{0,1,\ldots, n-1\}$ each $f_{\mid[i,i+1]}$ is affine, i.e. $f(i+\lambda)=(1-\lambda)f(i)+\lambda f(i+1)$ for all $\lambda \in [0,1]$. The integer $n$ is called the \emph{length} (number of edges) of $f$. By this definition, the parametrization of $f$ is such that $f(j)$ is a vertex if  $j\in \mathbb{N}$.

Let $P=(p_{0},\ldots,p_{m-1})$ denote a
sequence of $m$ points in the plane, and let $f:[0,n]\rightarrow\mathbb{R}^2$ denote a
polygonal curve with $n$ edges. Consider a person that is
walking along $f$ from its starting endpoint to its final endpoint,
and a frog that is jumping along the sequence $P$ of stones. The
frog is allowed to make shortcuts (i.e., skip stones) as long as it
traverses $P$ in the right (increasing) direction, but the person
must trace the complete curve $f$.
Assuming that the person holds the frog by a leash, the semi-continuous Fr\'echet distance with shortcuts $\scsFrechet$ is the
minimal length of a leash that is
required in order to traverse $f$ and (parts of) $P$ in this manner,
taking the frog and the person from $(p_0, f(0))$ to $(p_{m-1}, f(n))$.

For a given length $\delta>0$, we say that a position $(p_i,f(j))$, for $i\in \{0,1,\ldots,m-1\}, j\in [0,n]\subset \mathbb{R}$, of the frog and the person is a \emph{reachable} position if they can reach $(p_i,f(j))$ in this manner, with a leash of length $\delta$.

We now present an algorithm for computing the semi-continuous Fr\'echet distance with shortcuts $\scsFrechet$.
As usual, we first solve, in Section~\ref{sec:semi_decision}, the corresponding decision problem, and then solve, in Section~\ref{sec:semi_optimization}, the optimization problem.
The decision problem is solved using an extension of the decision procedure of the one-sided discrete case. Then, the optimization problem is solved using the general framework of the algorithm of Section~\ref{sec:tec}, but Algorithm~\ref{sec:tec}.1 is replaced by a simpler random sampling algorithm (Algorithm~\ref{sec:semi_continuous}.1), and Algorithm~\ref{sec:tec}.2 is replaced by an algorithm that applies a close inspection of the more complex critical distances that occur in this case (Algorithm~\ref{sec:semi_continuous}.2).

\subsection{Decision procedure for the semi-continuous Fr\'echet distance with shortcuts}
\label{sec:semi_decision}

\begin{figure}[htb]
 \centering \includegraphics[scale=0.8]{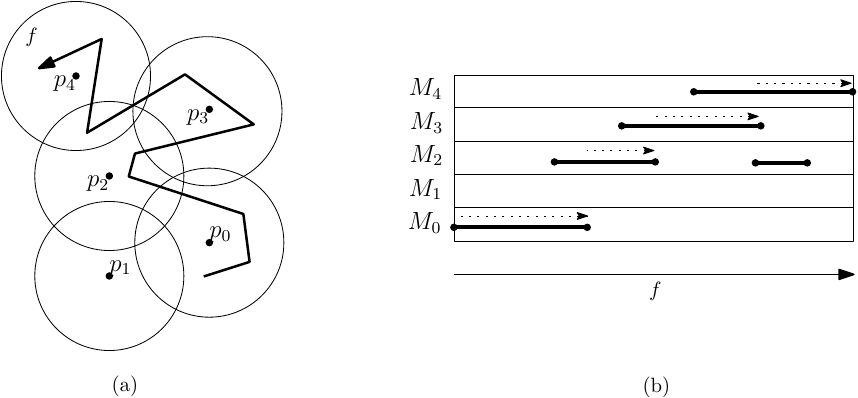}
\caption{\small (a) A sequence of points $P=(p_{0},\ldots,p_{4})$ and a polygonal curve $f\subset\mathbb{R}^2$ with $n=9$ edges.\\
(b) Thinking of $f$ as a continuous mapping from $[0,n]$ to $\mathbb{R}^2$, row $i$ depicts the set $\{t \in [0,n] \mid f(t) \in D_\delta(p_i)\}$. The
dotted subintervals and their connecting upward moves (not drawn) constitute the lowest upward-skipping path between the starting and final positions.}
\label{fig:hdfd}
\end{figure}

Consider the decision version of the problem of the semi-continuous Fr\'echet distance with shortcuts, where, given a
parameter $\delta>0$, we wish to decide whether $\scsFrechet \leq\delta$. This problem can be solved using the decision procedure for
the one-sided DFDS, with a slight modification that takes into account
the continuous nature of $f$. For a
point $p \in \mathbb{R}^2$, $D_\delta(p)$ denotes the disk of
radius $\delta$ centered at $p$.
To visualize the problem, we replace the Boolean matrix $M=M(P,Q)$ of Section~\ref{sec:DFDS1} by a vector $M$ in which the $i$th entry, $M_i$, correspond to $p_i$ and equals
$$M_i=\{t\in[0,n]\mid f(t)\in D_\delta(p_i)\}$$
(see Figure~\ref{fig:hdfd}). That is, each $M_i$ is a finite union of subintervals of $f$.

To decide whether $\delta^-(P,f)\leq \delta$ we need to decide whether there is a monotone ``path'' in $M$ from $(p_0,f(0))$ to $(p_{m-1}, f(n))$ that hops from a subinterval of $M_i$ to a subinterval of $M_j$, $j>i$ only ``over'' a point of $f$ which is in $M_i\cap M_j$. Specifically, we want to decide whether there exists a semi-continuous upward-skipping path from $(p_0,f(0))$ to $(p_{m-1}, f(n))$. A \emph{semi-continuous upward-skipping path} is an alternating sequence of discrete skipping upward moves and continuous right moves. A \emph{discrete skipping upward move} is a move from a reachable position $(p_i,x)$ of the frog and the person to another position $(p_j,x)$ such that $j>i$ and $x\in D_\delta(p_j)$. A \emph{continuous right move} is a move from a reachable position $(p_i,x)$ of the frog and the person to another position $(p_i,x')$, such that the entire portion between $x$ and $x'$ (including $x'$) is contained in $D_\delta(p_i)$. It is easy to verify that there exists a semi-continuous upward-skipping path that reaches $(p_{m-1}, f(n))$ if and only if $\delta^-(P,f)\leq \delta$.

As in the discrete case our decision procedure looks for a ``lowest'' possible upward-skipping path. In this path we first move ``right'' along $f$ as long as we can within the current disk (using the primitive {\bf NextEndpoint} defined below), and then we move to the first among the following disks that contains the current point of $f$ (using the primitive {\bf NextDisk} defined below).

The primitives {\bf NextEndpoint} and {\bf NextDisk} are defined as follows.

\begin{enumerate}[(i)]
\item \label{enu:1}
{\bf NextEndpoint($x, p_i$):} Given a point $x\in f$ and a point $p_i$ of $P$, such that $x\in D_\delta(p_i)$,
 return the forward endpoint of the connected component of $f\cap D_\delta(p_i)$ that contains $x$. This is as far as the person can walk from $x$ along $f$ while the frog stays put at $p_i$.
\item \label{enu:2}
{\bf NextDisk($x,p_i$):}
Given $x$ and $p_i$, as in (\ref{enu:1}), find the smallest $j>i$ such that $x\in D_\delta(p_j)$,
or report that no such index exists (return $j =\infty$). Here the person stays put at $x$ and the frog jumps to the next allowable point (taking a shortcut if needed).
\end{enumerate}

Both primitives admit
efficient implementations. For our purposes it is sufficient to
implement Primitive (\ref{enu:1}) by traversing the edges of $f$ one
by one, starting from the edge containing $x$, and checking for each
such edge $e_j:=f(j)f(j+1)$ of $f$ whether the forward endpoint $f(j+1)$ of $e_j$ belongs
to $D_\delta(p_i)$. For the first edge $e_j$ for which this test fails,
we return the forward endpoint of the interval $e_j \cap D_\delta(p_i)$. It is
also sufficient to implement Primitive (\ref{enu:2}) by checking for
each $j>i$ in increasing order, whether $x \in D_\delta(p_j)$, and
return the first $j$ for which this holds.

\begin{figure}[htbp]
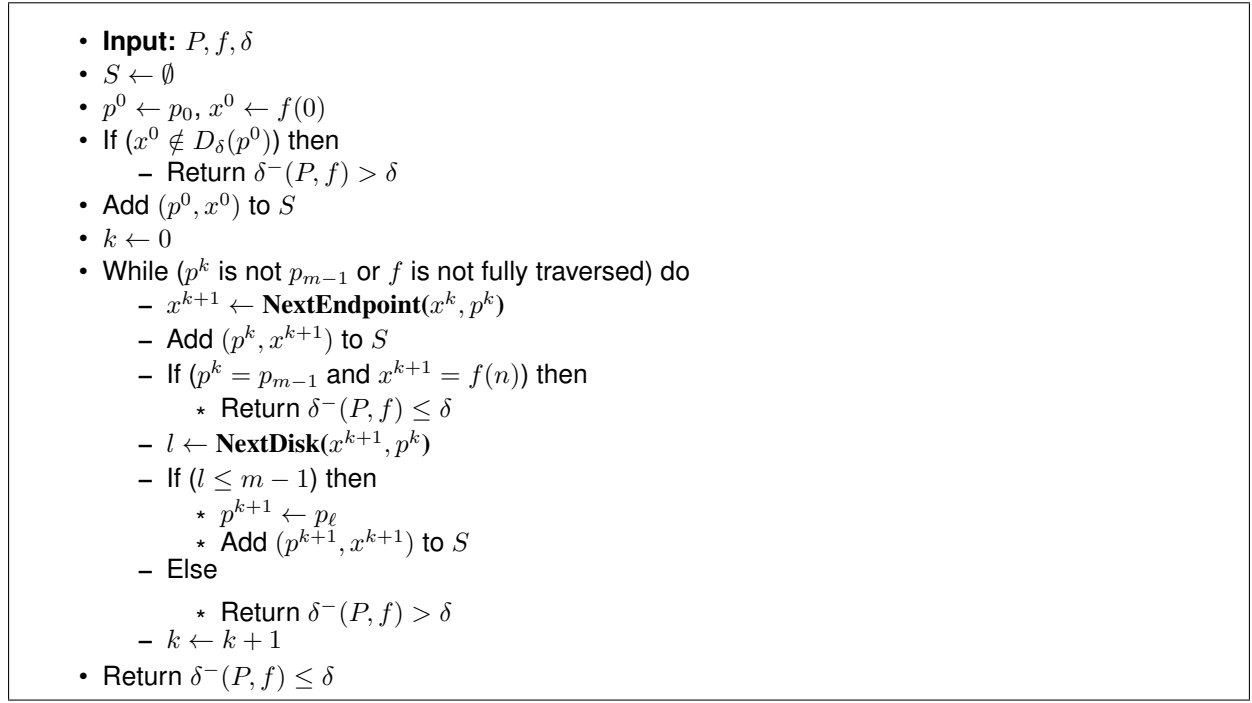

\MyFrame{
\smallskip
\begin{itemize}
{\small \sf
\item \textbf{Input:} $P, f, \delta$ \vspace{-0.3cm}
\item $S\leftarrow \emptyset$ \vspace{-0.3cm}
\item $p^0\leftarrow p_0$, $x^0 \leftarrow f(0)$ \vspace{-0.3cm}
\item If ($x^0 \notin D_\delta(p^0)$) then \vspace{-0.3cm}
    \begin{itemize}
    \item Return $\scsFrechet>\delta$ \vspace{-0.3cm}
    \end{itemize}
\item Add $(p^0,x^0)$ to $S$ \vspace{-0.3cm}
\item $k \leftarrow 0$ \vspace{-0.3cm}
\item While ($p^k$ is not $p_{m-1}$ or $f$ is not fully traversed) do \vspace{-0.3cm}
    \begin{itemize}
          \item $x^{k+1} \leftarrow$ {\bf NextEndpoint($x^k, p^k$)} \vspace{-0.1cm}
          \item Add $(p^k, x^{k+1})$ to $S$ \vspace{-0.1cm}
          \item If ($p^k=p_{m-1}$ and $x^{k+1}=f(n)$) then \vspace{-0.1cm}
          \begin{itemize}
               \item Return $\scsFrechet\leq\delta$ \vspace{-0.1cm}
          \end{itemize}
          \item $l \leftarrow$ {\bf NextDisk($x^{k+1}, p^k$)} \vspace{-0.1cm}
          \item If ($ l \leq m-1$) then \vspace{-0.1cm}
          \begin{itemize}
            \item $p^{k+1} \leftarrow p_\ell$ \vspace{-0.1cm}
            \item Add $(p^{k+1}, x^{k+1})$ to $S$ \vspace{-0.2cm}
          \end{itemize}
          \item Else
            \begin{itemize}
            \item Return $\scsFrechet>\delta$ \vspace{-0.2cm}
            \end{itemize}
          \item $k \leftarrow k+1$ \vspace{-0.3cm}
    \end{itemize}
\item Return $\scsFrechet\leq\delta$
}
\end{itemize}
}\caption{The decision procedure $\Gamma$ for the semi-continuous Fr\'echet distance with shortcuts.} \label{alg:semi-continuous}
\end{figure}

The decision procedure $\Gamma$ itself is given in Figure~\ref{alg:semi-continuous}. $\Gamma$ computes a path $S$ which is a sequence of reachable positions $(p^0,x^0),(p^1,x^1),\ldots$, where $p^k$ is a point of $P$ and $x^k$ is a point on an edge of $f$. The transition from $(p^i,x^i)$ to $(p^{i+1},x^{i+1})$ is either a skipping upward move (if $x^i=x^{i+1}$) or a continuous right move (if $p^i=p^{i+1}$). We denote by $\Pi$ the sequence of pairs $(p^0,s^0),(p^1,s^1),\ldots$ where $s^k$ is either the edge of $f$ containing $x^k$ in its interior, or $x^k$ itself when $x^k$ is a vertex of $f$.\footnote{%
  Note that we use superscripts as in $p^k$ and $s^k$ to denote the
  sequence $S$ defining the solution produced by the decision
  procedure. This is to distinguish them from $p_k$
and $e_k$, with subscripts, that denote the original input sequence
of points for the frog and the sequence of segments of $f$.}

  The correctness of the decision procedure is proved as the correctness of the decision procedure of the one-sided DFDS (of Figure~\ref{alg:semi-sparse}). Specifically, we prove by induction on the steps of the decision procedure, that if there exists a semi-continuous upward-skipping path $S'$ that reaches $(p_{m-1},f(n))$, then the decision procedure maintains a partial semi-continuous path $S$ that is ``below'' $S'$ in the sense that for each $x\in f$, the positions $(p_i,x)\in S$ and $(p_j,x)$ in $S'$ always satisfy $i\leq j$. We omit the details of this proof.

  The running time of this decision procedure is
$O(m+n)$ since we advance along $f$ at each step of Primitive
(\ref{enu:1}), and we advance along $P$ at each step of Primitive
(\ref{enu:2}).

We thus obtain the following lemma.

\begin{lemma}
Given a set $P$
of $m$ points in the plane, a polygonal curve $f$ with $n$ edges in the plane, and a parameter $\delta>0$, we can
determine whether the semi-continuous Fr\'echet distance
$\scsFrechet$ with shortcuts between $P$ and $f$ is at most
$\delta$, in $O(m+n)$ time, using $O(m+n)$ space.
\end{lemma}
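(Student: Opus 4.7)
My plan is to split the proof into three parts: (a) a formal characterization of the decision problem in terms of semi-continuous semi-sparse staircases, (b) a correctness argument for the procedure $\Gamma$ by adapting the inductive greedy argument of Lemma~\ref{lem:one-sided}, and (c) an amortized running-time bound for the two primitives \textbf{NextEndpoint} and \textbf{NextDisk} under the naive implementations outlined before the lemma.

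For (a), I would first define a semi-continuous semi-sparse staircase as a finite alternating sequence of continuous right moves (moving the person along a connected subarc of $f$ that stays inside $D_\delta(a^k)$ for the current frog position $a^k$) and discrete skipping upward moves (replacing $a^k$ by some $a^{k'}$ with $k'>k$ such that the current point $x$ on $f$ lies in $D_\delta(a^{k'})$). A routine reformulation, which I would carry out explicitly, shows that $\delta_F^s(A,f)\le\delta$ holds if and only if some such staircase starting at $(a_1,p_0)$ reaches $(a_m,p_n)$; this reduces the correctness question to showing that $\Gamma$ discovers such a staircase whenever one exists.

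For (b), I would imitate Lemma~\ref{lem:one-sided}. Suppose $S'$ is any semi-continuous semi-sparse staircase that reaches $(a_m,p_n)$. I would prove, by induction on the steps of $\Gamma$, the invariant that whenever $\Gamma$ has produced a partial staircase whose current position is $(a^k,x^k)$, and whenever $S'$ visits the same prefix of $f$ at some $(a',x)$ with $x$ at the same or later parameter along $f$ than $x^k$, we have that $a'$ appears in $A$ at an index at least as large as $a^k$. The inductive step uses the two greedy choices made by $\Gamma$: a right move via \textbf{NextEndpoint} goes as far along $f$ as the current disk permits, which cannot overshoot the corresponding right move in $S'$ because $S'$ uses some $a'$ with index $\ge$ that of $a^k$; a discrete upward move via \textbf{NextDisk} chooses the smallest admissible index, which again cannot be larger than the index used by $S'$ at the analogous junction. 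Consequently $\Gamma$ never ``falls behind'' $S'$ and must reach $(a_m,p_n)$, establishing one direction; the converse is immediate since any staircase $\Gamma$ outputs is, by construction, a witnessing semi-continuous semi-sparse staircase.

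For (c), the main obstacle is that the naive implementations of the two primitives seem to cost up to $\Theta(n)$ and $\Theta(m)$ per call respectively, which would give a quadratic bound. The resolution is an amortized argument: maintain one pointer into the edge list of $f$ and one pointer into the index sequence $1,\ldots,m$ of $A$, and observe that each call to \textbf{NextEndpoint} only advances the $f$-pointer forward through vertices of $f$ (stopping at the first vertex not in $D_\delta(a^k)$, then reading off the exit point on that edge), while each call to \textbf{NextDisk} only advances the $A$-pointer forward through indices. Because $\Gamma$ alternates the two primitives and never backtracks along either pointer, the total work summed over all calls is $O(m+n)$. The space bound is $O(m+n)$ since we need only to store the input, the two pointers, the current position $(a^k,x^k)$, and if desired the staircase $S$ (of length $O(m+n)$). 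Combining (a)--(c) gives the claimed lemma.
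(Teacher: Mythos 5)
Your proposal follows essentially the same route as the paper: the same characterization via semi-continuous semi-sparse staircases, the same greedy ``stays-below-$S'$'' induction adapted from Lemma~\ref{lem:one-sided}, and the same monotone-pointer amortization giving $O(m+n)$ time and space (the paper itself only sketches the correctness argument, at a comparable level of detail).

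One sentence in your inductive step is, however, false as stated: the greedy right move \emph{can} overshoot the corresponding right move of $S'$ --- $S'$ may advance the person only slightly before jumping its frog, while \textbf{NextEndpoint} pushes the person all the way to the exit point of the current disk --- so ``cannot overshoot $\ldots$ because $S'$ uses some $a'$ with index $\ge$ that of $a^k$'' is not what carries the induction. What does carry it for the right move is that nothing relevant changes: the frog index stays the same, and the set of visits of $S'$ quantified over by your invariant only shrinks as the person's parameter advances. The real work is in the \textbf{NextDisk} step, which your phrase ``the index used by $S'$ at the analogous junction'' leaves unspecified; there you need a continuity argument. Since $S'$ reaches $(a_m,p_n)$, its person passes through the greedy's current point $x^{k+1}$; by the invariant, every frog $S'$ uses while its person is at a parameter $\ge x^{k+1}$ has index at least that of $a^k$; and since $x^{k+1}$ is the forward exit point of the component of $f\cap D_\delta(a^k)$ containing $x^k$, $S'$ cannot move its person beyond $x^{k+1}$ while its frog index equals that of $a^k$. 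Hence $S'$ uses, at or when crossing $x^{k+1}$, a frog of strictly larger index whose disk contains $x^{k+1}$. This yields both that \textbf{NextDisk} cannot fail while $S'$ still exists, and --- via the minimality of the index it returns together with the monotonicity of $S'$'s frog index along its traversal --- that your invariant is preserved after the jump (with the usual final remark that once the person reaches $p_n$, repeated jumps, or a direct jump, bring the frog to $a_m$ because $p_n\in D_\delta(a_m)$). With this substituted for the ``cannot overshoot'' sentence, your argument matches the paper's.
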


We remark that we make no attempt to distinguish between the cases $\delta^-(P,f)<\delta$ and $\delta^-(P,f)=\delta$. This will be taken care of in the optimization procedure, presented next.

\subsection{The optimization procedure}
\label{sec:semi_optimization}
We now use the decision procedure $\Gamma$ to find the optimal value
$\scsFrechet$. To make the dependence on $\delta$ explicit,
we denote, in what follows, the decision procedure for distance
$\delta$ by $\Gamma(\delta)$. The path $S$ computed by
$\Gamma(\delta)$, and each element $(p^k,x^k)$ of $S$, depend on
$\delta$, so we denote them by $S(\delta), p^k(\delta)$ and
$x^k(\delta)$, respectively. The sequence $\Pi$ of pairs $(p^k, s^k)$,
and each of its elements, also depend on $\delta$,
so we denote $\Pi$ by $\Pi(\delta)$, and $s^k$ by $s^k(\delta)$.
Of course, $\Gamma(\delta)$ might fail, i.e., report that
$\scsFrechet > \delta$. In such a case, the path $S(\delta)$
and the sequence of pairs $\Pi(\delta)$ consist of everything that
was accumulated in them until $\Gamma(\delta)$ has terminated (that is, aborted).
In particular, $S(\delta)$ does not end in this case at $(p_{m-1}, f(n))$.

The path $S(\delta_1)$ is {\em combinatorially different} from the
path $S(\delta_2)$, for $\delta_1, \delta_2>0$, if $\Pi(\delta_1) \neq
\Pi(\delta_2)$; otherwise, we say that $S(\delta_1)$ and
$S(\delta_2)$ are {\em combinatorially equivalent}.

For two points $a,b\in \mathbb{R}^2$, the bisector of $a$ and $b$, denoted by $h(a,b)$, is the line containing all the points $\mathbb{R}^2$ that are at equal distance from $a$ and from $b$.

We next argue that each critical value of $\delta$ where $S(\delta)$
changes combinatorially must be of one of the following two types:

\begin{enumerate}
\item\label{enum:pv}
The distance between a point of $P$ and a vertex of $f$ (\emph{point-vertex critical value}).

\item\label{enum:ppe}
For two points $a,b  \in P$ and an edge $e$ of $f$, the distance between $a$ (or $b$)
and the intersection of $e$ with the bisector $h(a,b)$ (\emph{point-point-edge critical value}).
\end{enumerate}
See Figure~\ref{fig:critical} for an illustration.
We assume general position of the input, so as to ensure that these
critical distances are all distinct.

\begin{figure}[htb]
\centering \includegraphics[scale=0.8]{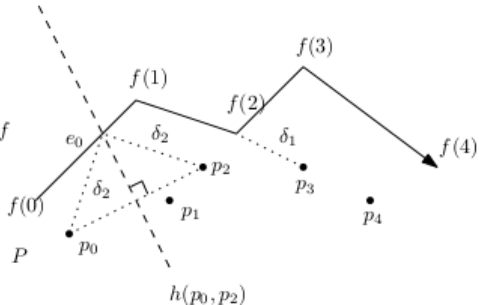}
\caption{\small Two of the critical distances between $f$ and $P$, where $\delta_1$ is a point-vertex distance between $p_3$ and $f(2)$, and $\delta_2$ is a point-point-edge distance between $p_0$, $p_2$ and $e_0$.}
\label{fig:critical}
\end{figure}

\begin{lemma}
Let $\delta$ be such that either $S(\delta^-)$ is combinatorially different
from $S(\delta)$, for all $\delta^- < \delta$ arbitrarily close to
$\delta$, or $S(\delta^+)$ is combinatorially different from $S(\delta)$, for all $\delta^+>\delta$ arbitrarily close to $\delta$. Then $\delta$ is either a point-vertex distance or a
point-point-edge distance.
\end{lemma}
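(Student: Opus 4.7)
My plan is to trace any combinatorial flip of $P(\delta)$ back to a single atomic comparison made inside the decision procedure $\Gamma$, and to observe that each such comparison is governed by an equation whose solution in $\delta$ is either a point-vertex or a point-point-edge distance.

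Let $i$ be the smallest index at which the pair $(a^i(\delta'), s^i(\delta'))$ differs from $(a^i(\delta), s^i(\delta))$ for all $\delta' < \delta$ sufficiently close to $\delta$. Since $a^1(\delta) = a_1$ and $s^1(\delta) = p_0$ are constant, we have $i \ge 2$ -- with the single exception that $\Gamma(\delta^-)$ aborts already at the initial test $x^1 \in D_{\delta^-}(a^1)$, which happens precisely when $\|p_0 - a_1\| = \delta$, a point-vertex distance. For $i \ge 2$, minimality guarantees that the pairs $(a^j, s^j)$ with $j < i$ coincide at $\delta^-$ and $\delta$; a short induction then shows that $x^j(\delta')$ is a continuous function of $\delta'$ on this stable prefix, since at each step $x^j$ is either a fixed vertex of $f$ or the forward exit $e \cap \partial D_{\delta'}(a^{j-1})$ of the disk from the fixed edge $e = s^j$, which varies smoothly with $\delta'$ by the implicit function theorem.

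I would then split into cases by the nature of the flip at step $i$, using the identities $x^i = \mathrm{NextEndpoint}(x^{i-1}, a^{i-1})$ and, when $a^i$ is defined, $a^i = a_l$ with $l = \mathrm{NextDisk}(x^i, a^{i-1})$. If $s^i$ flips, the walk inside NextEndpoint (which stops at the first edge $e_j$ whose forward vertex $p_j$ lies outside $D_\delta(a^{i-1})$) must terminate at a different edge; by convexity of the disk and straightness of the edges, this happens only when some vertex $p_j$ crosses $\partial D_\delta(a^{i-1})$, giving $\|p_j - a^{i-1}\| = \delta$, a point-vertex distance. If instead $s^i$ is stable but $a^i$ flips (or $\Gamma$ newly aborts at this step), then the value of NextDisk changes only when some $a_l \in A$ crosses $\partial D_\delta(x^i)$, yielding $\|a_l - x^i\| = \delta$. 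When $s^i$ is a vertex of $f$ this is already a point-vertex distance; when $s^i$ is an edge $e$, we also have $\|x^i - a^{i-1}\| = \delta$ by construction, so $x^i$ lies on the bisector of $a^{i-1}$ and $a_l$ and on $e$, and $\delta$ is the distance from $a^{i-1}$ to the intersection of $e$ with this bisector, a point-point-edge distance.

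The main obstacle I anticipate is ruling out spurious critical values coming from the geometry of $f \cap D_\delta(a^{i-1})$ but not from the primitives themselves -- for instance, the perpendicular distance from $a^{i-1}$ to some later edge at which a new component of $f \cap D_\delta(a^{i-1})$ is born. Such events do not affect the output of NextEndpoint, because the edge-by-edge walk stops at the first forward vertex that lies outside the disk and never inspects what happens beyond; the component containing $x^{i-1}$ therefore extends only when the terminating vertex itself enters the disk, which is exactly the point-vertex event already counted in Case~1. With this subtlety verified, the case analysis is exhaustive and the lemma follows.
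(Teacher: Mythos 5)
Your proposal is correct and takes essentially the same route as the paper: locate the first index at which $P(\delta^-)$ and $P(\delta)$ diverge, argue continuity of $x^j$ along the common prefix, and case-split on whether the divergence arises inside \textbf{NextEndPoint} (forcing a vertex of $f$ onto $\partial D_\delta(a^{i-1})$, a point-vertex event) or inside \textbf{NextDisk} (forcing $x^i$ onto both $\partial D_\delta(a^{i-1})$ and $\partial D_\delta(a_\ell)$, a point-point-edge event); your closing remark about spurious edge-tangency events corresponds to the paper's remark immediately following the lemma. One small point in your favor: in the \textbf{NextDisk} case you correctly identify the two points of the critical triple as $a^{i-1}$ and the point $a_\ell$ whose disk membership flips, whereas the paper's write-up names $a_{\ell(\delta)}$ and $a_{\ell(\delta^-)}$ and attributes the first containment to \textbf{NextEndPoint} --- this appears to be a slip, since \textbf{NextEndPoint} places $x^{k+1}(\delta)$ on $\partial D_\delta(a^k)$, not on $\partial D_\delta(a_{\ell(\delta)})$.
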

\begin{proof}
In what follows, we use $\delta^-$ and $\delta^+$ to denote an arbitrary point from the respective neighborhood of $\delta$ mentioned in the lemma. Consider the point at which the executions of $\Gamma(\delta^-)$ and of $\Gamma(\delta)$ add a pair to $\Pi(\delta^-)$ which is
different from the pair added to $\Pi(\delta)$ (this includes the case
in which we add a pair to only one of the sets $\Pi(\delta^-)$, $\Pi(\delta)$).

If $(p_0,f(0))$
is in $\Pi(\delta)$ but not in $\Pi(\delta^-)$ then $\delta$ is the
distance between $p_0$ and $f(0)$, a point-vertex distance.

Otherwise, assume that the different pairs arose following a call
to {\bf NextEndPoint($x^k,p^k$)}. Then the edge containing or the vertex coinciding with
$x^{k+1}(\delta) =$ {\bf NextEndPoint($x^k(\delta),p^k(\delta)$)} and
the edge containing or the vertex coinciding with $x^{k+1}(\delta^-) =$ {\bf NextEndPoint($x^k(\delta^-),p^k(\delta^-)$)}
are distinct. Note that $p:=p^k(\delta) =
p^k(\delta^-)$, $s:=s^k(\delta)=s^k(\delta^-)$, and $x:=x^k(\delta)=\Lim{\delta^-\uparrow\delta}x^k(\delta^-)$, since this is the first call that causes a
discrepancy between $\Pi(\delta)$ and $\Pi(\delta^-)$.
Then $D_{\delta^-}(p)\subset D_\delta(p)$, and $x$ belongs to both disks, so {\bf NextEndPoint($x,p$)} terminates at $\delta^-$ at a point $x^{k+1}(\delta^-)$ that precedes its termination point $x^{k+1}(\delta)$ at $\delta$, and converges to $x^{k+1}(\delta)$ as $\delta^-\uparrow \delta$. Since $s^{k+1}(\delta^-)\neq s^{k+1}(\delta)$, the latter must be a vertex of $f$, and $\delta=|px^{k+1}(\delta)|$ is a point-vertex critical distance. A fully symmetric argument yields the same implication when $S(\delta^+)$ is combinatorially different from $S(\delta)$ and the first difference occurs at a call to {\bf NextEndPoint}. (A critical distance between a point of $P$ and (the interior of) an edge of $f$ is irrelevant, since it corresponds to an isolated tangency that cannot be a criticality of a tracing of $f$.)

Finally, assume that the first difference in the pairs added to
$\Pi(\delta^-)$ and $\Pi(\delta)$ occurred following a call to {\bf NextDisk}($x^{k+1}, p^k$).
Put $p^{k+1}(\delta) =$ {\bf NextDisk}$(x^{k+1}(\delta),p^k(\delta))$ and
$p^{k+1}(\delta^-) = $ {\bf NextDisk} $(x^{k+1}(\delta^-),$ $p^k(\delta^-))$.
As before, $p:=p^k(\delta) = p^k(\delta^-)$ by our assumption, $D_{\delta^-}(p)\subset D_\delta(p)$, and, by construction, $x^{k+1}(\delta^-)$ lies on $\bd{D_{\delta^-}(p)}$ and $x^{k+1}(\delta)$ lies on $\bd{D_\delta(p)}$.
Moreover, since $x^{k+1}(\delta)$ is not a vertex of $f$ (or else the previous call to {\bf NextEndPoint} would have produced different pairs at $\delta^-$ and at $\delta$), a simple continuity argument shows that $x^{k+1}(\delta^-)\rightarrow x^{k+1}(\delta) $ as $\delta^-\uparrow \delta$. By assumption, $p^{k+1}(\delta)$ is different from $p^{k+1}(\delta^-)$. We argue, as follows, that in this case $x^{k+1}(\delta)$ must lie on $\bd D_\delta(p^{k}(\delta))$ (this has already been noted), and on $\bd{D_\delta(p^{k+1}(\delta))}$, showing that $\delta$ is a point-point-edge critical distance. Indeed, since $p^{k+1}(\delta)$ is different from $p^{k+1}(\delta^-)$, either $x^{k+1}(\delta)\in D_\delta(p^{k+1}(\delta))$ and $x^{k+1}(\delta^-)\notin D_{\delta^-}(p^{k+1}(\delta))$, or $x^{k+1}(\delta^-)\in D_{\delta^-}(p^{k+1}(\delta^-))$ and $x^{k+1}(\delta)\notin D_{\delta}(p^{k+1}(\delta^-))$, and the latter case is not possible since disks are closed. Again, a fully symmetric argument yields the same conclusion when $S(\delta^+)$ is combinatorially different from $S(\delta)$ and the first difference occurs at a call to {\bf NextDisk}.
\end{proof}

Note that not all triples
of two points $a,b$ of $P$ and an edge $e$ of $f$ necessarily create a point-point-edge
critical event, since the bisector $h(a,b)$ might not intersect $e$.

The following sections develop, using the decision procedure
given above, an algorithm for the optimization problem that runs in
$O((m+n)^{2/3}m^{2/3}n^{1/3}\log(m+n))$ time in expectation and with high probability.
Our algorithm is based on the following two independent building blocks:

\paragraph{Algorithm~\ref{sec:semi_continuous}.1}
An algorithm that finds an interval $(\alpha, \beta]$ that contains, with high probability, $O(L)$ critical distances including $\scsFrechet$, for a given parameter $L\geq 1$. The algorithm runs in $O(m^2n\log(m+n)/L+(m+n)\log(m+n))$
time in expectation and with high probability.

\paragraph{Algorithm~\ref{sec:semi_continuous}.2}
An algorithm that searches for $\scsFrechet$ in $(\alpha, \beta]$ by simulating the decision procedure in an efficient manner. As in Algorithm~\ref{sec:tec}.2, we use the fact that (with high probability) the simulation encounters only $O(L)$ critical distances (as a result of Algorithm~\ref{sec:semi_continuous}.1). This algorithm runs in $O((m+n)L^{1/2}\log(m+n))$ time.

Choosing
$L=m^{4/3}n^{2/3}/(m+n)^{2/3}$, we obtain an algorithm that runs in
$O((m+n)^{2/3}m^{2/3}n^{1/3}$ $\log(m+n))$ time in expectation and with high probability (note that the second term in the bound of Algorithm~\ref{sec:semi_continuous}.1 is always subsumed by this bound). Note that Algorithm~\ref{sec:semi_continuous}.1 (described in Section~\ref{sec:semi_finding_interval}) is different from the analogous algorithm of the discrete case (Algorithm~\ref{sec:tec}.1), and uses
a generalization of a random sampling technique of \cite{HR13}. 
Algorithm~\ref{sec:semi_continuous}.2
(described in Section~\ref{sec:semi_searching_in_interval}) is similar to, but more involved
than, the analogous algorithm of the discrete case (Algorithm~\ref{sec:tec}.2).

\subsubsection{Algorithm~\ref{sec:semi_continuous}.1: Finding an interval that contains $O(L)$ critical distances}
\label{sec:semi_finding_interval}

\begin{lemma}
\label{lem:semi_cont_finding_interval}
Given a polygonal curve $f$ with $n$ edges and a set $P$ of $m$ points
in the plane, and a parameter $L\geq 1$, we can find an interval
$(\alpha,\beta]$ that contains, with high probability, at most $O(L)$
critical distances $\delta$, including $\scsFrechet$,
in $O(m^2n\log(m+n)/L+(m+n)\log(m+n))$ time.
\end{lemma}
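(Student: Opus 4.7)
The plan is to adapt the random-sampling framework of Har-Peled and Raichel to our continuous setting, exploiting the fact that, by the preceding lemma classifying critical distances, the set $C$ of all critical distances has size $N=O(m^2n)$: there are $O(mn)$ point--vertex distances and $O(m^2n)$ point--point--edge distances. Since $N$ is too large to enumerate, the key point is that we can \emph{sample} from $C$ in $O(1)$ time per sample without building it explicitly: with probability proportional to $mn$ we pick a uniformly random pair $(a_i,p_j)\in A\times V(f)$ and output $\|a_i-p_j\|$, and otherwise we pick a uniformly random triple $(a_i,a_j,e_k)\in A\times A\times E(f)$ with $i<j$ and, if the bisector of $a_i,a_j$ meets $e_k$, output the corresponding point--point--edge distance (else we discard and resample). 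With the sampling weights set proportional to the two class sizes, this yields draws that are uniform over $C$ at constant expected cost per draw.

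Using this primitive I would form a random multiset $R$ of $s=c(N/L)\log(m+n)$ samples from $C$, for a sufficiently large absolute constant $c$, in $O(s)=O(m^2n\log(m+n)/L)$ expected time. I would then locate two consecutive values $\alpha<\beta$ in $R\cup\{0,\infty\}$ that bracket the unknown optimum $\delta^s_F(A,f)$ by binary search guided by the $O(m+n)$-time decision procedure of the preceding subsection: each round picks the median of the active portion of $R$ by linear-time selection and calls the decision procedure once to decide which half retains the optimum. The selections contribute $O(s)$ total work and the $O(\log s)=O(\log(m+n))$ decision-procedure calls contribute $O((m+n)\log(m+n))$, for a total of $O(m^2n\log(m+n)/L+(m+n)\log(m+n))$.

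To argue correctness of the output interval, I would invoke the standard one-dimensional $\eps$-net bound: intervals on the line form a range space of VC-dimension $2$, so a uniform sample of size $c(N/L)\log(m+n)$ from $C$ is, with high probability, a $(L/N)$-net for this range space. Consequently, every interval that contains more than $L$ elements of $C$ must contain at least one element of $R$; applied to $(\alpha,\beta]$, which by construction meets $R$ only at its endpoints, this gives $|C\cap(\alpha,\beta]|\le L$ with high probability, and by construction $\delta^s_F(A,f)\in(\alpha,\beta]$.

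The main obstacle I anticipate is making the sampling step truly uniform over $C$ in the presence of ``invalid'' triples whose bisector misses the chosen edge. I would address this either by rejecting such triples and reweighting so that the conditional distribution on accepted samples is uniform on $C$, or by adopting a slightly enlarged ground set that includes the invalid triples together with a $+\infty$ sentinel value; either view leaves the $\eps$-net argument intact in one dimension and preserves the $O(1)$ expected cost per sample, since a constant fraction of triples produce a well-defined critical distance (and in any case the point--vertex samples already supply a constant fraction of valid draws).
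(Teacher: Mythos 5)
Your approach is essentially the same as the paper's: sample a roughly $O(m^2n\log(m+n)/L)$-size subset from the ground set of candidate critical-distance definers (pairs and triples), binary-search over the resulting sampled distances with the $O(m+n)$-time decision procedure, and argue that w.h.p.\ the surviving interval $(\alpha,\beta]$ contains few critical distances. The paper frames correctness as a direct miss-probability computation over the set $U$ of the $L/2$ nearest critical values on each side of the optimum, while you frame it as a 1D $\eps$-net statement; these are interchangeable here, and your framing is clean.

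One caveat worth fixing: your first sampling variant (rejection sampling to draw \emph{uniformly from $C$}) does \emph{not} have $O(1)$ expected cost per accepted sample in general. With weights $mn : \binom{m}{2}n$ on the two classes, if nearly all bisector--edge intersections are empty, the acceptance probability can drop to $\Theta(1/m)$, inflating the sampling stage by a factor of $m$; the assertion that ``a constant fraction of triples produce a well-defined critical distance'' is not justified, and point--vertex pairs constitute only a $\Theta(1/m)$ fraction of the weight, so they do not rescue the constant-fraction claim either. Your second variant (sample uniformly from the full $\binom{m}{2}n + mn$-element ground set, treating invalid triples as a $+\infty$ sentinel) avoids this entirely, runs in $O(1)$ per sample, and is precisely what the paper does --- it samples $cx$ triples and $cx$ pairs and simply discards the non-contributing ones. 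With that variant selected, and with the $\eps$-net parameter taken as $L/N'$ where $N' = \binom{m}{2}n + mn$ is the ground-set size (not $|C|$, though they coincide up to constants), your argument goes through and yields the stated bounds.
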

\begin{proof}
We generate a random sample $R$ of $cx$ triples of two points of $P$ and
an edge of $f$, where $x=m^2n\log(m+n)/L$, and $c>1$ is
a sufficiently large constant. We also sample $cx$ pairs of a
point of $P$ and a vertex of $f$. This generates at most $2cx$
critical values of $\delta$ (some of the triples that we sample might
not contribute a critical value, as noted above, and are discarded).

We search over the sampled critical values, using the decision procedure $\Gamma$, to find two consecutive values $\alpha, \beta$ of $R$ such that $\scsFrechet\in (\alpha,\beta]$.
This is done in $O(m^2n\log(m+n)/L+(m+n)\log(m+n))$ time, using a
linear time median finding algorithm.

We claim that the interval $(\alpha,\beta]$ that this procedure generates contains,
with high probability, $O(L)$ (non-sampled) critical values of
$\delta$, including $\scsFrechet$. 
To see that, consider the set
$U$ of the $L/2$ (defined) critical values that are smaller than
$\scsFrechet$ and closest to it, and denote by $u$ (resp., $v$) the number of point-vertex distances (resp., point-point-edge distances) among them; thus $u+v = L/2$. (The analysis assumes that there are at least $L/2$ critical values that precede $\scsFrechet$; the argument becomes vacuous when there are fewer such values.) The probability that none of the $2cx$
triples and pairs that we sampled generate a critical value in $U$ is
at most\footnote{Here we assume the model where we make $cx$ independent draws of a point and a vertex, and $cx$ independent draws of two points and an edge. Other alternative sampling models yield similarly small failure probabilities.}
$$
\left(1-\dfrac{u}{{m \choose 2}n}\right)^{cx}\cdot
\left(1-\dfrac{v}{mn}\right)^{cx} \leq e^{-cx\left(\frac{u}{{m \choose 2} n}+\frac{v}{mn}\right)} \leq e^{-cx \cdot\frac{u+v}{m^2n}} = e^{-\frac{c}{2} \log(m+n)} = \dfrac{1}{(m+n)^{c'}},
$$
for some constant $c'$ proportional to $c$.
The same argument, with the same resulting probability bound,
applies for the set $U'$ of the $L/2$ critical values
that are greater than $\scsFrechet$ and closest to it.
Hence, the probability that we miss all the $L/2$ critical values in $U$ and all the $L/2$ critical values in $U'$ is polynomially small.
\end{proof}

\subsubsection{Algorithm~\ref{sec:semi_continuous}.2: An efficient simulation of the decision procedure}
\label{sec:semi_searching_in_interval}

In this section, we show that we can find $\scsFrechet$,
within $(\alpha,\beta]$, in $O((m+n)L^{1/2}\log(m+n))$ time, using a simulation of the decision procedure. Notice the high-level similarity with the discrete counterpart of this algorithm in Section~\ref{sec:searching_in_interval}.

For an edge $e$ of $f$ and two points $a,b\in e$, let $e(a,b]$ be the
subedge of $e$ starting at $a$ (not including $a$) and ending at
$b$; define $e(a,b)$, $e[a,b)$, and $e[a,b]$ in a similar manner. Let $\ell(e)$ denote the line containing $e$.

We simulate the decision procedure $\Gamma$ at the unknown value
$\delta^* = \scsFrechet$.
 Each step of $\Gamma$ involves a call to
one of the procedures {\bf NextEndPoint} and {\bf NextDisk}.
The execution of each of these procedures consists of a sequence
of tests---the former procedure tests the current disk $D_\delta(p^k(\delta))$ against a
sequence of edges of $f$, for finding the first exit point from
the disk, and the latter procedure tests the current point
$x^{k+1}(\delta)$ against a sequence of disks centered at the points
of $P$, for finding the first disk (following the present one) that contains $x^{k+1}(\delta)$.
Each such test generates one or several critical values $\delta$, and we check
whether all these values of $\delta$ lie outside $(\alpha,\beta]$, in which case we know
the (combinatorial nature of the) outcome of the test,
and
we can proceed to the next test. If any of these values $\delta$ lies in $(\alpha,\beta]$,
we bifurcate, proceeding along two branches,
where one assumes that $\delta^* \le \delta$
and the other assumes that $\delta^* > \delta$, or one assumes that $\delta^*<\delta$ and the other assumes $\delta^*\geq \delta$.

These bifurcations generate a tree $T$.
For simplicity of presentation, we represent a single cycle of the
decision procedure (consisting of a call to {\bf NextEndPoint}
followed by a call to {\bf NextDisk}) by two consecutive levels of $T$, each catering to the corresponding call. In Section~\ref{sec:constructT}, we describe the data that we store at the nodes of $T$. Let $v$ be a node of $T$ that represents the situation at the beginning
of such a cycle.
 We show how to simulate a call to {\bf NextEndPoint}, that generates the children of $v$, such that the data that we store at these children is correctly maintained. We also classify the critical values that we encounter in this simulation. Next, we show how to simulate a call to {\bf NextDisk}, that generates the grandchildren of $v$, such that the data that we store at the grandchildren of $v$ is also maintained correctly. Here too, we classify the critical values that we encounter in this simulation. Finally, in Section~\ref{sec:phases}, we show how to partition the simulation into phases, similar to the phases of Lemma~\ref{lem:searching_in_interval}, so as to optimize the performance of the algorithm, and obtain the final result of this section.

\subsubsection{The data stored at $T$}
\label{sec:constructT}
At each node of $T$
we maintain a unique triple $(\tau, p^k, e^k(\tau))$, where
$\tau$ is a range of possible values for $\delta^*$ which can be open or closed at either endpoint; that is, $\tau$ is of one of the forms $(\alpha,\beta), (\alpha,\beta], [\alpha,\beta),[\alpha,\beta]$ ($\tau$ is in general a subrange of the original range $(\alpha,\beta]$, but, for convenience, we denote it using the same symbols), and where $e^k(\tau)$ is a subsegment of some edge $e^k$ of $f$, with open/closed sides matching those of $\tau$.
Each such triple satisfies the following invariant. \begin{quote} (i) For each
$\delta \in \tau$ there exists a pair $(p^k,x^k(\delta))\in
S(\delta)$ such that $p^k(\delta)=p^k$, and \newline (ii) $e^k(\tau)$
is the set of all points $x^k(\delta)$, for $\delta\in\tau$.\end{quote} In
particular if, say, $\tau=(\alpha,\beta]$ then the endpoints $a$ and $b$ of $e^k(\tau)$ are such that $b=x^k(\beta)$ and $a$ is the limit of $x^k(\alpha^+)$
where $\alpha^+$ approaches $\alpha$ from above; similar correspondences occur in all the three other cases. We also maintain the invariant that, for any subtree $T'$ of $T$, the ranges $\tau$ stored at the leaves of $T'$ are pairwise disjoint, and their union is the range stored at the root of $T'$.

The process is initialized as follows. We place the frog at $p_0$, and find the corresponding segment $e^0$ as follows. We compute the distances $|p_0f(0)|, |p_0f(1)|,\ldots,|p_0f(n)|$ from $p_0$ to all the vertices of $f$, and run a binary search through them, or rather through the subset of these values that are within $(\alpha,\beta]$, the interval provided by Algorithm~\ref{sec:semi_continuous}.1, using the decision procedure. This narrows $(\alpha,\beta]$ down to a potentially smaller interval that contains $\delta^*$. As already mentioned, for convenience, we denote this smaller interval also as $(\alpha,\beta]$. We call {\bf NextEndPoint}$(f(0),p_0)$ at $\alpha$ and at $\beta$, and obtain two respective exit points $x^1(\alpha), x^1(\beta)$, that lie on respective edges $e^1(\alpha),e^1(\beta)$, where, for notational convenience, $e^1(\alpha)$ denotes either the relative interior of an edge of $f$, or a vertex of $f$, and similarly for $e^1(\beta)$.
If $e^1(\alpha)=e^1(\beta)$, we store at the root the triple $((\alpha,\beta],p_0, e((\alpha,\beta]))$, where $e$ is the common edge $e^1(\alpha)=e^1(\beta)$. Otherwise, since $\tau=(\alpha,\beta]$ is left-open, $e^1(\alpha)$ must be the relative interior of an edge $e$ and we must have that $\beta=|p_0v|$, where $v$ is the forward endpoint of $e$. Indeed, if $\beta<|p_0v|$ then $x^1(\beta)$ must also be in the relative interior of $e$, as is easily checked, contradicting the assumption that $e^1(\alpha)\neq e^1(\beta)$, and if $\beta>|p_0v|$ then we have $\alpha<|p_0v|<\beta$, contradicting the fact that the preceding binary search through the sequence of all distances from $p_0$ to the vertices of $f$ has ended at $(\alpha,\beta]$. So when $e^1(\alpha)\neq e^1(\beta)$ the root bifurcates into two nodes, one storing $((\alpha,\beta),p_0,e((\alpha,\beta)))$ (where $e(\alpha,\beta)$ ends at $v$ but does not include it), and one storing $([\beta],p_0,\tilde{e}(\beta))$, where $\tilde{e}(\beta)$ is the edge or vertex that contains $x^1(\beta)$: it could be $v$ if $f$ exits $D_\beta(p_0)$ at $v$, or anywhere further along $f$; see Figure~\ref{fig:xx}.
Note that we may assume that $\alpha\geq\alpha_0:=\max\{|p_0f(0)|,|p_{m-1}f(n)|\}$, because $\delta^-(P,f)$ must be at least $\alpha_0$.

\begin{figure}
        \centering
        \begin{subfigure}[b]{0.3\textwidth}
                \centering
                \includegraphics[width=\textwidth]{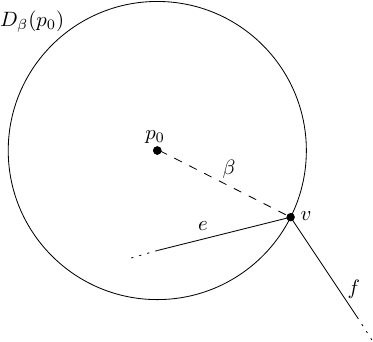}
                (a)
                \label{fig:xx_1}
        \end{subfigure}%
\hspace{1cm}
        \begin{subfigure}[b]{0.3\textwidth}
                \centering
                \includegraphics[width=\textwidth]{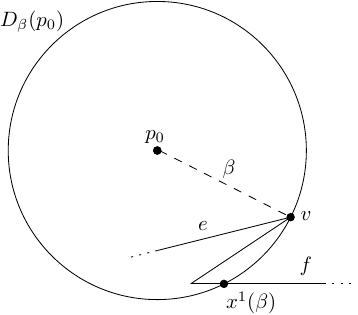}
                (b)
                \label{fig:xx_2}
        \end{subfigure}

                \caption{\small (a) $f$ exits $D_\beta(p_0)$ at $v_\ell$. (b) $f$ exits $D_\beta(p_0)$ further along $f$.}
\label{fig:xx}
\end{figure}

Let $v$ be a node of $T$ that represents the situation at the beginning
of a cycle consisting of a call to {\bf NextEndPoint}
followed by a call to {\bf NextDisk}.  We now show how to construct the triples for the
children and the grandchildren of $v$ from the triple
$(\tau, p^k, e^k(\tau))$ of $v$.

\paragraph{A simulation of {\bf NextEndPoint}.}
To construct the
children of $v$, we simulate {\bf NextEndPoint}, assuming that the current pair in $S$ is $(p^k,x^k(\delta))$, for $\delta \in \tau$, and $x^k(\delta)\in e^k(\tau)$.
Let $\alpha,\beta$ denote the left and right endpoints of $\tau$, respectively.
 The rough, informal idea is to compute, for $\delta=\alpha$ and for $\delta=\beta$, the edge or vertex containing the forward endpoint of the connected component of $f \cap D_\delta(p^k)$ that contains $x^k(\delta)$. If we obtain the same edge or vertex $e$ for $\delta=\alpha$ and for $\delta=\beta$, we conclude that all values in $\tau$ agree that $(p^k, e)$ is the next pair in $\Pi$, and we continue to the next step of the procedure, with a single child of $v$ that stores $(\tau, p^k, e(\tau))$. Otherwise, we have detected at least one point-vertex critical value $\delta_0$ in $\tau$, at which we exit from $f$ at an endpoint of an edge that lies on $\bd{D_{\delta_0}(p^k)}$. We then bifurcate, proceeding along several paths, whose $\delta$-ranges are delimited at the critical distances $\delta_0$.

We now provide a more precise and detailed description of the simulation of {\bf NextEndPoint} at $v$.
Let $a_\alpha$ be the first intersection of $\bd{D_{\alpha}(p^k)}$ with $f$
following $a=x^k(\alpha)$, and let $b_\beta$ be the first intersection
of $\bd{D_{\beta}(p^k)}$ with $f$ following $b=x^k(\beta)$. Let $e_j$ be the
edge of $f$ equal to $e^k$.  We traverse $e_j, e_{j+1},\ldots, e_{n-1}$
in order, as well as the vertices of $f$ delimiting them, and for each such edge $e_\ell$ or vertex $v_\ell=f(\ell)$, we have three possible cases. We first discuss the case of edges, and then handle vertices.

(i) $a_\alpha \notin e_\ell$ and $b_\beta \notin e_\ell$.
In this case, the forward endpoint
$x^{k+1}(\delta)$ of the connected component of $f\cap D_{\delta}(p^k)$
containing $x^k(\delta)$ is not in $e_\ell$, for all
$\delta \in \tau$. So we proceed to the next vertex $v_\ell$ and edge $e_{\ell+1}$.

(ii) ${a_\alpha \in e_\ell}$ and ${b_\beta \in e_\ell}$.
In this case, the set $e_\ell(\tau)$ of all the
forward endpoints $x^{k+1}(\delta)$ of the connected components
of $f \cap {D_{\delta}(p^k)}$ containing $x^{k}(\delta)$, for
$\delta\in\tau$, is the subinterval of $e_\ell$ delimited by $a_\alpha$ and $b_\beta$; it is open/closed at $a_\alpha$ if $\tau$ is open/closed at $\alpha$, and similarly for $b_\beta$ and $\beta$. In this case $v$ has a single child $v'$, with
the triple $(\tau, p^k, e_\ell(\tau))$.

(iii) ${a_\alpha \in e_\ell}$ and ${b_\beta \notin e_\ell}$.
In this case, we encounter a point-vertex critical distance
$\delta_0 \in \tau$, between $p^k$ and $v_\ell=f(\ell)$. That is, for each
$\delta\geq \delta_0 \in \tau$, the forward endpoint of the connected
component of $f\cap D_{\delta}(p^k)$ containing $x^k(\delta)$ is not in
$e_\ell$ (but in an edge or at a vertex following $e_\ell$), and for each
$\delta<\delta_0\in \tau$, the forward endpoint of the
connected component $f\cap D_{\delta}(p^k)$ containing $x^k(\delta)$
is in $e_\ell$, between $a_\alpha$ and $v_\ell$. See Figure~\ref{fig:yy}.
Note that we may not yet have a fixed edge or vertex of $f$ at which we exit from $D_\delta(p^k)$ for all $\delta\geq \delta_0$, and we may have to perform further bifurcations. Nevertheless, the situation is clear for $\delta<\delta_0$: we generate a child $v'$ of $v$ that stores the triple $(\tau^-, p^k, e_\ell(\tau^-))$, where $\tau^-=\tau\cap (-\infty, \delta_0)$. We then continue to generate further children of $v$ with the range $\tau^+=\tau\cap[\delta_0,\infty)$ instead of $\tau$. Note that, to proceed, we need to compute $a_{\delta_0}$, which may either be $v_\ell$ itself (in the scenario depicted in Figure~\ref{fig:yy}(a)), or at some further edge or vertex (as in Figure~\ref{fig:yy}(b)).

(iv) Since $a$ precedes $b$ on $e^k$ and $D_\alpha(p^k)\subset D_\beta(p^k)$, the fourth possibility, where $a_\alpha \notin e_\ell$ and $b_\beta \in e_\ell$, is impossible.

\begin{figure}
        \centering
        \begin{subfigure}[b]{0.3\textwidth}
                \centering
                \includegraphics[width=\textwidth]{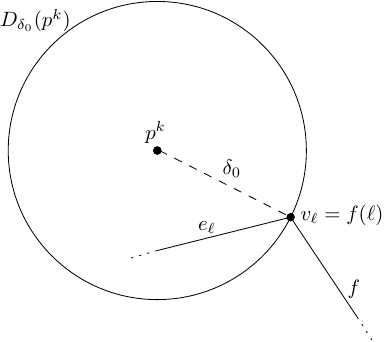}
                (a)
                \label{fig:yy_1}
        \end{subfigure}%
\hspace{1cm}
        \begin{subfigure}[b]{0.3\textwidth}
                \centering
                \includegraphics[width=\textwidth]{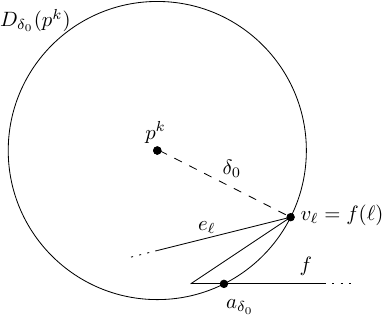}
                (b)
                \label{fig:yy_2}
        \end{subfigure}

                \caption{\small (a) $f$ exits $D_{\delta_0}(p^k)$ at $v$. (b) $f$ exits $D_{\delta_0}(p^k)$ further along $f$.}
\label{fig:yy}
\end{figure}

Handling a vertex $v_\ell=f(\ell)$ is done in a similar but simpler manner. In case (i) we move to the next edge $e_{\ell+1}$, as before. Case (ii) can arise only if $\alpha=\beta$ in which case $v$ has a single child $v'$ that stores the triple
$([\beta],p^k,v_\ell)$. In case (iii) (which arises only in scenarios like those in Figure~\ref{fig:yy}(a)) we bifurcate, we generate a child $v'$ of $v$ that stores the triple $([\delta_0],p^k,v_\ell)$, where $\delta_0=|p^kv_\ell|$, and continue the expansion of $v$ with the range $\tau^+=\tau\cap(\delta_0,\infty)$.

Note that, in either of the cases discussed above, we maintain the invariant that the ranges at the leaves of any subtree are a disjoint cover of the range of the root of the subtree.

\paragraph{A simulation of {\bf NextDisk}.}
Next we generate the grandchildren $v''$  of $v$,
which result from the simulation of the call to {\bf NextDisk}. Let
$(\tau, p^k, e^{k}(\tau))$ be the triple of a child $v'$ of $v$.
Let $\alpha,\beta$ denote the left and right endpoints of $\tau$, respectively.
The idea is to compute, for $\delta=\alpha$ and for $\delta=\beta$, the next point $p_\ell$ of $P$ such that the disk $D_\delta(p_\ell)$ contains $x^k(\delta)$. If we obtain the same point $p_\ell$ for $\delta=\alpha$ and for $\delta=\beta$, we conclude that all values in $\tau$ agree that $(p_\ell, e^k)$ is the next pair in $\Pi$. We add a single child $v''$ of $v'$ that stores the triple
$(\tau,p_\ell,e^{k}(\tau))$ and  continue to the next step of the simulation. Otherwise, let $p_\ell$ be the point returned for $\alpha$, and let $p_{\ell'}$ be the point returned for $\beta$. Then, as we prove in detail below, there must exist $\delta_0 \in \tau$ such that $x^k(\delta_0)$, which lies on $\bd{D_{\delta_0}(p^k)}$, also lies on $\bd{D_{\delta_0}(p_\ell)}$ or $\bd{D_{\delta_0}(p_{\ell'})}$. That is, we have detected a point-point-edge critical value $\delta_0$ in $\tau$, and we bifurcate, proceeding along several paths, delimited at critical distances of this sort.

A precise detailed description of the simulation of {\bf NextDisk} at $v'$ goes as follows. We first need the following easy observation, whose trivial proof is omitted
(see Figure~\ref{fig:bisector} for an illustration).

\begin{observation}\label{obs}
Let $p$ and $q$ be two points in the plane,
and let $s$ be a point on $\bd{D_\delta(p)}$, for some $\delta>0$.
Then $s\in D_\delta(q)$ if and only if $s$ is in the halfspace
bounded by $h(p,q)$ that contains $q$.
\end{observation}

\begin{figure}[htb]
\centering \includegraphics[scale=0.6]{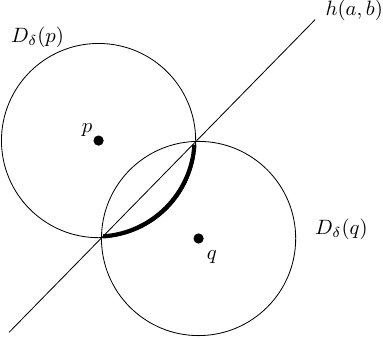}
\centering \caption{\small The points on $\bd{D_\delta(p)}$ that are in
$D_\delta(q)$ are in the halfspace bounded by $h(p,q)$ that contains $q$.}\label{fig:bisector}
\end{figure}

Let $p^k$ be the point $p_i$ of $P$. We
simulate {\bf NextDisk} at all possible $\delta \in \tau$ by traversing
$p_{i+1},\ldots, p_{m-1}$, distinguishing between the following cases at each such
point $p_\ell$.
Let $a$ and $b$ denote the endpoints of $e^k(\tau)$.

(i) ${a \notin D_\alpha(p_{\ell})}$ and ${b \notin D_\beta(p_{\ell})}$.
In this case, each point $x^{k}(\delta)$ on $e^{k}$, for
$\delta \in\tau$, satisfies
$x^{k}(\delta) \notin D_{\delta}(p_\ell)$. Indeed, by the way we
computed the triple for $v'$, $a$ is a point on $\bd{D_\alpha(p^k)}$
and $b$ is a point on $\bd{D_\beta(p^k)}$. Thus, by
Observation~\ref{obs}, $a$ and $b$ are not in the halfspace $h^+(p^k,p_\ell)$ bounded
by $h(p^k,p_\ell)$ that contains $p_\ell$. Thus, $x^{k}(\delta)$, for any
$\delta\in \tau$, is also not in the halfspace $h^+(p^k,p_\ell)$. Since $x^{k}(\delta)$ is a
point on $\bd{D_{\delta}(p^k)}$, again by Observation~\ref{obs},
$x^{k}(\delta) \notin D_{\delta}(p_\ell)$. Hence, in this case, the
$P$-frog cannot jump to $p_\ell$ when the person is at $x^{k}(\delta)$
(for any point $x^{k}(\delta) \in e^{k}(\tau)$), so we proceed to the next point $p_{\ell+1}$.

(ii) ${a \in D_\alpha(p_{\ell})}$ and ${b\in D_\beta(p_{\ell})}$.
By a similar reasoning to that of the preceding case,
each point $x^{k}(\delta) \in e^{k}(\tau)$
satisfies $x^{k}(\delta) \in D_{\delta}(p_\ell)$. Hence, for each
point $x^{k}(\delta)\in e^{k}(\tau)$, the $P$-frog can jump to
$p_{\ell}$ when the person is at $x^{k}(\delta)$.
So in this case $v'$ has only one child $v''$ that corresponds to
the triple $(\tau, p_\ell, e^{k}(\tau))$.


(iii) ${b \in D_\beta(p_{\ell})}$ and ${a \notin D_\alpha(p_{\ell})}$.
By a similar reasoning as in the previous
cases, using Observation~\ref{obs}, $b$ is in the (closed) halfspace
$h^+(p^k,p_\ell)$, and $a$ is not.
Thus, there exists a point $s$ such that $s = e^k(\tau) \cap h(p^k,p_\ell)$.
Put $\tau^-=\tau\cap(-\infty,\delta_0)$ and $\tau^+=\tau\cap[\delta_0, \infty)$, where $\delta_0=|p^k-s|=|p_\ell-s|$.
Note that $\delta_0$
is a point-point-edge
critical value involving $p^k, p_\ell$ and $e^k$.

By construction, if $\delta\in \tau^-$ then $x^k(\delta)\in e^k(\tau^-)$, and by Observation~\ref{obs} $x^k(\delta)\notin D_\delta(p_\ell)$ (so the frog cannot jump to $p_\ell$ when the person is at $x^k(\delta)$). Similarly, if $\delta\in \tau^+$ then $x^k(\delta)\in e^k(\tau^+)$ and $x^k(\delta)\in D_\delta(p_\ell)$ (so the frog can jump to $p_\ell$ when the person is at $x^k(\delta)$). See Figure~\ref{fig:bifurcate}(a).

Consequently, we bifurcate at $\delta_0$. That is, we generate a child $v''$ of $v'$ that corresponds to the triple
$(\tau^+, p_\ell, e^k(\tau^+))$, and continue to generate the
other children of $v'$ by proceeding to the next point (if there is
one) $p_{\ell + 1}$ with the updated triple $(\tau^-,
p^k, e^k(\tau^-))$. Note that in this case the other children of $v'$ will precede $v''$ in the order of their ranges.

(iv) ${a\in D_\alpha(p_{\ell})}$ and ${b \notin D_\beta(p_{\ell})}$.
Arguing similarly to the preceding case,
we encounter a point-point-edge critical value $\delta_0$ involving
$p^k, p_\ell$ and $e^k$, where $\delta_0$ is the distance between
$s = e^k(\tau) \cap h(p^k,p_\ell)$ and $p^k$ (or $p_\ell$). Here, though, $\delta_0$ joins the lower range $\tau^-$ and not the upper range $\tau^+$. That is, we put $\tau^-=\tau\cap(-\infty, \delta_0]$ and $\tau^+=\tau\cap(\delta_0, \infty)$.

We generate a child $v''$ of $v'$ that corresponds to the triple
$(\tau^-, p_\ell, e^k(\tau^-))$, and continue to generate
the other children of $v'$ by proceeding to the next point (if there
is one) $p_{\ell + 1}$ with the updated triple $(\tau^+,
p^k, e^k(\tau^+))$; this time these other children will succeed $v''$ in the range order. See Figure~\ref{fig:bifurcate}(b).

\begin{figure}
        \centering
        \begin{subfigure}[b]{0.4\textwidth}
                \centering
                \includegraphics[width=\textwidth]{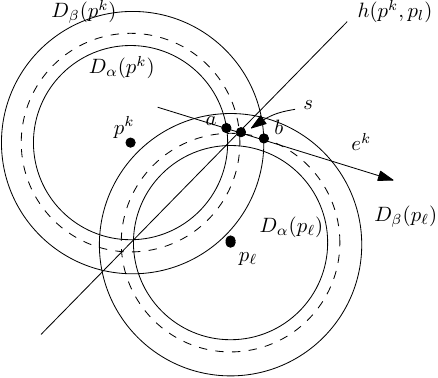}
                (a) $b \in D_\beta(p_{\ell})$ and $a \notin D_\alpha(p_{\ell})$.
                \label{fig:bifurcate1}
        \end{subfigure}%
\hspace{1cm}
        \begin{subfigure}[b]{0.4\textwidth}
                \centering
                \includegraphics[width=\textwidth]{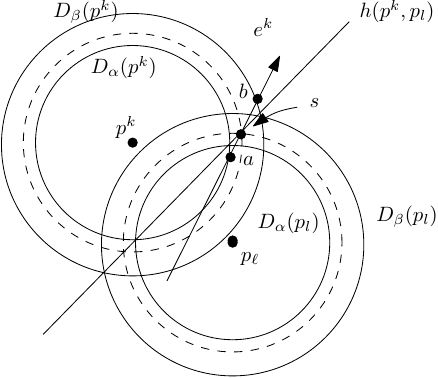}
                (b) $a \in D_\alpha(p_{\ell})$ and $b \notin D_\beta(p_{\ell})$.
                \label{fig:bifurcate2}
        \end{subfigure}

                \caption{\small Situations that cause bifurcation when simulating a call to {\bf NextDisk}.
The disks $D_{\delta_0}(p^k)$ and $D_{\delta_0}(p_{\ell})$, where $\delta_0$ is the corresponding critical value, are drawn dashed. }\label{fig:bifurcate}
\end{figure}

It is straightforward to verify that the triple of each node $v$
that we generate satisfies the invariants mentioned at the beginning
of the proof.

Note that we may reach the last point $p_{m-1}$ without generating
children of $v'$ (i.e., grandchildren of $v$). In this case
$\delta^-(P,f)$ cannot be in $\tau$, and we can abandon this branch of $T_s$ altogether.
When we reach a node whose triple is
$(\tau, p_{m-1}, e_n(\tau))$, and $f(n)$ is the forward endpoint of $e_n(\tau)$ and $e_n(\tau)$ is closed at $f(n)$, then $\delta^-(P,f) = \beta$, assuming that $\delta^-(P,f) \in \tau$.

\subsubsection{The phases in the construction of $T$}
\label{sec:phases}
We do not generate the entire tree $T$ but proceed as follows. Let $s$ be a threshold parameter that we will fix later. We distinguish between unary
nodes $v\in T$, each having a single child, and nodes $v\in T$ with
more than one child. A node $v$ with $d>1$ children is associated
with $d-1$ critical events that triggered these $d-1$
bifurcations. We construct the relevant subtree $T_s$ of $T$ top down, using the simulations of {\bf NextEndPoint} and {\bf NextDisk}. We do not expand a node $v\in T_s$ that has $s$ consecutive unary ancestors immediately preceding it, and we
stop expanding $T_s$ altogether when it contains $m+n$ nodes (we refer to such phases as \emph{unsuccessful}) or when each of
its leaves has $s$ unary ancestors immediately preceding it (these are \emph{successful} phases).
Note that we might stop the construction of $T_s$ in the middle of the expansion of a node $u$ that has too many children. In this case the union of the ranges of the children of $v$ that we generated consists of a prefix and a suffix of the range of v (the suffix is empty when the children are generated at a call to {\bf NextEndPoint},
but both the suffix and the prefix may be nonempty when the children are generated at a call to {\bf NextDisk}).

We then run a binary search over the set of $O(m+n)$ critical values that we have accumulated at the bifurcations of $T_s$, using the decision procedure $\Gamma$ to guide the search. This either identifies a leaf $v$ of $T_s$ such that $\tau_v$ contains $\delta^-(P,f)$, or, if $T_s$ contains some node $u$ which we have not fully expanded, concludes that $\delta^-(P,f)$ is in the part of $\tau_u$ which is not covered by the children of $u$ in $T_s$.
The path of $T_s$ leading to $v$ in the former case and to $u$ in the latter is the next portion of the upward-skipping path $S$ produced by our simulation of the decision procedure (in the algorithm of Figure~\ref{alg:semi-continuous}) at $\delta^-(P,f)$. More precisely, $\tau_v$ (or $\tau_u$ in the latter case) is a subrange of all the nodes on the path, and the portion of $\Pi(\delta^-(P,f))$ encoded along the path is determined (it is fixed for all $\delta\in \tau_v$ (or $\tau_u$)).

We then repeat the whole procedure starting at $v$ (or at $u$, in the second case mentioned above with a reduced range, that excludes the subrange already covered by the children of $u$ in $T_s$). We stop when we reach a node $v$ that records the last step of $S$, which (at $\delta^-(P,f)$) reaches $(p_{m-1},f(n))$. The final range $\tau_v$ of $v$ determines $\delta^-(P,f)$: If $\tau_v=[\alpha,\beta)$ or $[\alpha,\beta]$, we have $\delta^-(P,f)=\alpha$. If $\tau_v=(\alpha,\beta]$, we have $\delta^-(P,f)=\beta$. The fourth case, where $\tau_v=(\alpha,\beta)$, is impossible, as is easily seen. An analysis as in
Lemma~\ref{lem:searching_in_interval} shows that this algorithm
runs in $O((m+n)L^{1/2}\log(m+n))$ time using $O(m+n)$ space.
We thus obtain the following lemma.

\begin{lemma}
\label{lem:semi_cont_searching_in_interval}
Given a polygonal curve $f$ with $n$ edges in the plane, a set
$P$ of $m$ points in the plane, and an interval
$(\alpha,\beta]\subset\reals$ that contains at most $L\geq 1$
critical distances $\delta$ (including $\scsFrechet$),
we can find $\scsFrechet$ in $O((m+n)L^{1/2}\log(m+n))$ time using $O(m+n)$ space.
\end{lemma}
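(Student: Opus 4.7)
The plan is to simulate the decision procedure $\Gamma$ at the unknown value $\delta^* := \delta_F^s(A,f)$, in phases, maintaining throughout an interval $(\alpha,\beta] \subseteq (\alpha_0,\beta_0]$ that is known to contain $\delta^*$. Each step of $\Gamma$ is either a call to \textbf{NextEndpoint}, which in the naive implementation advances one edge of $f$ and tests whether $\|a^k - p_j\| \le \delta$ for successive forward vertices $p_j$, or a call to \textbf{NextDisk}, which advances one frog point $a_j$ and tests whether $\|x^{k+1} - a_j\| \le \delta$. Every such test is a comparison of $\delta$ against a concrete critical distance: a point-vertex distance $\|a^k - p_j\|$ in \textbf{NextEndpoint}, a point-vertex distance $\|x^{k+1} - a_j\|$ when $x^{k+1}$ is a vertex of $f$, or---when $x^{k+1}$ lies in the relative interior of an edge $e$, and so $x^{k+1}=x^{k+1}(\delta)$ is determined by $\|a^k - x^{k+1}(\delta)\| = \delta$---the point-point-edge critical distance at which the bisector of $a^k$ and $a_j$ meets $e$. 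Once the combinatorial state along the current branch is fixed (the edge or vertex carrying $x^{k+1}$, and the identities of $a^k$, $a_j$, $p_j$), each such critical value is a single number computable in $O(1)$ time.

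At each comparison we ask whether its critical value $r'$ lies in $(\alpha,\beta]$. If not, the outcome against $\delta^*$ is forced and we continue along the unique branch; if yes, we bifurcate, carrying both possibilities forward. We extend each branch for at most $s$ further steps past its last bifurcation (for a parameter $s$ chosen below), and terminate the current phase once the total number of simulated steps, summed over all branches, reaches $m+n$. This produces a binary tree $T$ of at most $m+n$ nodes whose $x$ bifurcations define a set $X \subset (\alpha,\beta]$ of critical values. We sort $X$ in $O((m+n)\log(m+n))$ time and run a binary search on $X \cup \{\alpha,\beta\}$ using $\Gamma$, which costs $O(m+n)$ per invocation, to locate the unique consecutive pair $(\alpha',\beta']$ for which $\Gamma(\alpha')$ fails and $\Gamma(\beta')$ succeeds. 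This resolves every bifurcation of $T$ and selects a unique root path in it, whose trace we append as the next prefix of the true execution of $\Gamma(\delta^*)$; we then update $(\alpha,\beta] \leftarrow (\alpha',\beta']$ and discard $T$. Each phase therefore costs $O((m+n)\log(m+n))$ time, and after the phase we retain only the interval, the $O(m+n)$ committed prefix, and the transient $O(m+n)$ tree, giving $O(m+n)$ working space.

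Following the discrete analysis of Lemma~\ref{lem:searching_in_interval}, call a phase \emph{successful} if it extends the committed prefix by at least $s$ steps. Since that prefix eventually coincides with a full execution of $\Gamma(\delta^*)$ of length $O(m+n)$, at most $O((m+n)/s)$ successful phases occur, contributing $O((m+n)^2\log(m+n)/s)$ time overall. A phase that is not successful must have absorbed at least $(m+n)/s$ bifurcations, since every between-bifurcation stretch has length at most $s$ while the $m+n$ step budget was exhausted, and each bifurcation consumes a distinct critical value of the current interval; since the starting interval holds only $L$ critical values, the number of unsuccessful phases is $O(Ls/(m+n))$, contributing $O(Ls\log(m+n))$. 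Balancing by $s = (m+n)/L^{1/2}$ gives the claimed $O((m+n)L^{1/2}\log(m+n))$ running time, and at the end of the final phase the remaining interval holds no critical value, so we conclude $\delta^* = \beta$. The main subtlety beyond the discrete case is the dependence of $x^{k+1}$ inside \textbf{NextDisk} on $\delta$; the key observation is that along any single branch of the exploration the edge of $f$ carrying $x^{k+1}$ is fixed, so the equation $\|x^{k+1}(\delta) - a_j\| = \delta$ specializes to a concrete point-point-edge critical value that can be treated exactly like any other concrete critical number.
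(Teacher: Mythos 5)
Your proposal is correct and follows essentially the same route as the paper: simulating $\Gamma$ at the unknown $\delta^*$ with a bifurcation tree capped at $m+n$ nodes and stretches of length $s$, resolving the accumulated critical values by a binary search guided by $\Gamma$, and charging successful versus unsuccessful phases with the balance $s=(m+n)/L^{1/2}$. Your key observation that the $\delta$-dependence of $x^{k+1}$ in \textbf{NextDisk} reduces, on a fixed edge, to a concrete point-point-edge critical value via the bisector is exactly how the paper handles the only genuinely new difficulty relative to Lemma~\ref{lem:searching_in_interval}.
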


By combining Lemma~\ref{lem:semi_cont_finding_interval} with Lemma~\ref{lem:semi_cont_searching_in_interval}, choosing $L$ to be $m^{4/3}n^{2/3}/(m+n)^{2/3}$, we obtain the following main result of this section.

\begin{theorem}
Given a set $P$ of $m$ points and a polygonal curve $f$ with $n$ edges in the plane, we can compute the one-sided semi-continuous Fr\'echet distance $\scsFrechet$ with shortcuts in $O((m+n)^{2/3}m^{2/3}n^{1/3}\log(m+n))$ time, both in expectation and with high probability, using $O((m+n)^{2/3}m^{2/3}n^{1/3})$ space.
\end{theorem}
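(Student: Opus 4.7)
The plan is to combine Lemma~\ref{lem:semi_cont_finding_interval} and Lemma~\ref{lem:semi_cont_searching_in_interval} in a two-stage procedure controlled by a single parameter $L$, chosen at the end to balance the two costs.

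First, I would invoke Lemma~\ref{lem:semi_cont_finding_interval} with the parameter $L$ to obtain, with high probability, an interval $(\alpha,\beta]$ that contains $\delta_F^s(A,f)$ together with at most $L$ critical distances, at expected cost $O(m^{2}n\log(m+n)/L+(m+n)\log(m+n))$. Then I would apply Lemma~\ref{lem:semi_cont_searching_in_interval} to the interval produced in the first stage in order to pinpoint $\delta_F^s(A,f)$ exactly, at deterministic cost $O((m+n)L^{1/2}\log(m+n))$.

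Summing the two contributions, the total expected running time is
\[
O\!\left(\frac{m^{2}n\log(m+n)}{L} + (m+n)L^{1/2}\log(m+n)\right).
\]
Balancing the two terms by equating $m^{2}n/L$ with $(m+n)L^{1/2}$ gives $L^{3/2}=m^{2}n/(m+n)$, i.e., $L = m^{4/3}n^{2/3}/(m+n)^{2/3}$. Substituting this value back into either term yields the claimed bound $O((m+n)^{2/3}m^{2/3}n^{1/3}\log(m+n))$. Feeding the same value of $L$ into the space requirements of the two lemmas gives the stated space bound.

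The main obstacle is the randomized nature of the first stage: Lemma~\ref{lem:semi_cont_finding_interval} only succeeds with high probability, so I would need a reliable way to detect a failure and restart. A natural certificate is that the subsequent execution of Lemma~\ref{lem:semi_cont_searching_in_interval} encounters more than $L$ bifurcations, or that the decision procedure $\Gamma$, applied at $\alpha$ and $\beta$, reveals that $(\alpha,\beta]$ does not actually bracket $\delta_F^s(A,f)$. In either case one aborts and repeats from scratch. Since each invocation succeeds with high probability (calibrated by the sampling constant $c$), the expected number of restarts is $O(1)$, so the bound above is indeed the expected running time.
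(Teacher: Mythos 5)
Your proposal is correct and matches the paper's own derivation: the paper combines Lemma~\ref{lem:semi_cont_finding_interval} and Lemma~\ref{lem:semi_cont_searching_in_interval} with exactly the balancing choice $L=m^{4/3}n^{2/3}/(m+n)^{2/3}$, and your discussion of detecting failure (exceeding $L$ bifurcations, or the bracket $(\alpha,\beta]$ not containing the optimum) and restarting with $O(1)$ expected rounds mirrors the argument the paper gives in the discrete case (Theorem~\ref{thm:one_sided}) and implicitly carries over here.
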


\section{Discussion}
The algorithms obtained for the discrete Fr\'echet distance with
shortcuts, run in time significantly better than those for the Fr\'echet
distance without shortcuts. It is thus an interesting open question
whether similar improvements can be obtained for the continuous version of
the Fr\'echet distance with shortcuts, where shortcuts are made only
between vertices of the curves. This variant, that was considered by \cite{DH12}, may be easier than the NP-Hard variant that was
considered by \cite{BDS13}. We hope that the techniques that we have developed for the semi-continuous problem will be useful for tackling this harder problem.

It remains an open question whether the algorithms for the discrete and semi-continuous variants can be further improved. Specifically, it is conceivable that the gap between the linear time decision procedures of the discrete and semi-continuous Fr\'echet distance with one-sided shortcuts and the corresponding optimization procedures can be further reduced. We  conjecture that such an improvement is possible.

In contrast, we are less optimistic concerning the (current approach to) the two-sided variant. The running time of the algorithm for the discrete two-sided variant is based on the running time bound of distance selection between points, where the output is a compact representation of the distances smaller than a specified threshold. A future improved solution of the distance selection problem can be expected to also yield an improvement of the algorithm for the discrete two-sided case. However, in view of similar known lower bounds for related problems (see, e.g.,~\cite{Eri95}), we doubt that the distance selection problem can be solved (significantly) more efficiently.

Another topic for further research is to find additional
applications of some of the ideas that appear in the optimization technique
for the one-sided variants.

\newpage

\end{document}